\documentclass[a4paper]{article}
\usepackage[11pt]{extsizes}
\usepackage[utf8x]{inputenc}
\usepackage[english]{babel}
\usepackage{amssymb, amsmath, amsthm, verbatim, cite, lscape, longtable, graphicx, wrapfig, tikz,url}

\usepackage{amsfonts,amsmath,mathtools,hyperref,cite}

\newtheorem{formula}{}[section]
\newtheorem{definition}[formula]{Definition}
\newtheorem{corollary}[formula]{Corollary}
\newtheorem{remark}[formula]{Remark}
\newtheorem{lemma}[formula]{Lemma}
\newtheorem{proposition}[formula]{Proposition}
\newtheorem{theorem}[formula]{Theorem}

\newcommand{\R}{\mathbb{R}}

\newcommand{\F}{\mathbb{F}}
\newcommand{\Fq}{\F_q}

\newcommand{\C}{\mathcal{C}}

\newcommand{\Bil}{\operatorname{Bil}}

\newcommand{\MatnmFq}{\spac}

\newcommand{\bfn}{\mathbf {n}}
\newcommand{\bfm}{\mathbf {m}}
\newcommand{\spac}{\F^{\bfn\times\bfm}_q}
\newcommand{\srk}{\mathrm{srk}}
\newcommand{\rk}{\mathrm{rk}}

\newcommand{\Del}{D(q,\bfn,\bfm,d)}

\DeclareMathOperator{\tr}{tr}

\newcommand{\mC}{\mathcal{C}}

\newcommand{\floor}[1]{{\left\lfloor{#1}\right\rfloor}}

\newlength{\mynodespace}
\usepackage{stmaryrd}
\newcommand{\db}[1]{\llbracket #1 \rrbracket}

\title{Semidefinite and linear programming bounds for sum-rank-metric codes and non-existence results} 
\author{
Aida Abiad
\thanks{\texttt{a.abiad.monge@tue.nl}, Department of Mathematics and Computer Science, Eindhoven University of Technology, The Netherlands}
\thanks{Department of Mathematics and Data Science, Vrije Universiteit Brussel, Belgium} 
\and
Antonina P. Khramova
\thanks{\texttt{a.khramova@tue.nl}, Department of Mathematics and Computer Science, Eindhoven University of Technology, The Netherlands} 
\and
Sven C. Polak
\thanks{\texttt{s.c.polak@tilburguniversity.edu},  Department of Econometrics and Operations Research, Tilburg University, The Netherlands} 
\and
Ferdinando Zullo
\thanks{\texttt{ferdinando.zullo@unicampania.it}, Dipartimento di Matematica e Fisica, Universit\`a degli Studi della Campania ``Luigi Vanvitelli'', Italy} 
}
\date{}
\begin{document}

\maketitle

\begin{abstract}
The sum-rank metric provides a unifying framework that generalizes both the celebrated Hamming and rank metrics, and has found applications in areas such as network coding, distributed storage, and space-time coding. A central problem is to determine the maximum size of a code with prescribed minimum distance.

In this paper, we derive new sharp upper bounds on the size of a sum-rank-metric code using spectral and optimization techniques, including a semidefinite programming (SDP) bound that can outperform the best existing bounds based on computational experiments. Furthermore, we compare the Delsarte linear programming (LP) bound and a recent eigenvalue LP bound, and show equivalences between them, with particular emphasis on extremal regimes of the sum-rank metric. Finally, we show how to use the several SDP, LP and eigenvalue bounds to prove non-existence results for certain optimal and perfect sum-rank metric codes. Our results suggest that the combination of spectral and optimization methods effectively captures the hybrid nature of the sum-rank metric, providing new techniques that overcome the limitations of classical coding-theoretic approaches.
\end{abstract}

\section{Introduction}

The sum-rank metric has recently emerged as a unifying framework in coding theory, bridging two of its most prominent settings: the Hamming metric and the rank metric. Originally introduced in the context of multishot network coding and coding for matrix channels, the sum-rank metric naturally models hybrid error patterns where both symbol-wise and rank-type errors occur simultaneously. This makes it particularly well-suited for modern applications such as network coding~\cite{silva2008rank,koetter2008coding} and space-time coding~\cite{yan2009error}; see \cite{martinez2022codes} for more on the applications. 

From a theoretical perspective, the study of sum-rank-metric codes generalizes classical problems in coding theory. Given a finite field $\mathbb{F}_q$ and parameters describing the block structure of the metric, a fundamental question is to determine the maximum cardinality of a code with prescribed minimum distance. While this problem has been extensively studied in the Hamming metric~\cite{macwilliams1977theory} and in the rank metric~\cite{D1978,gabidulin1985theory}, the sum-rank setting presents additional challenges due to its hybrid structure. In particular, the interaction between different blocks of the metric leads to new combinatorial and algebraic phenomena that are not present in the classical cases.

A powerful recent approach to tackling such coding problems is through connections with graph theory. To each metric space one can associate a graph whose vertices correspond to the ambient space and whose edges reflect adjacency with respect to the metric. In this framework, the problem of determining the largest code with given minimum distance is equivalent to bounding the independence number of the corresponding graph \cite{APR2025}. This viewpoint enables the use of spectral methods and optimization techniques, such as eigenvalue bounds and linear programming methods originally introduced by Delsarte~\cite{DelsarteLP}.

In the sum-rank setting, several bounds have been developed by adapting classical coding techniques; these include bounds induced from the Hamming metric, such as the Singleton, Hamming, Plotkin, and Elias bounds~\cite{byrne2021fundamental}, as well as bounds that exploit the specific structure of the sum-rank metric. Furthermore, spectral graph theory based approaches have led to the development of the Ratio-type bound and its linear programming implementation~\cite{AKR2024}. Recently, Delsarte’s linear programming method has recently been extended to the socalled sum-rank-metric association scheme, providing another powerful tool for obtaining the best known upper bounds on sum-rank-metric code sizes~\cite{AGKP2024}.

Although semidefinite programming (SDP) methods have been applied successfully to problems in coding theory—often producing some of the strongest known bounds in the Hamming and related metrics (see, e.g., \cite{S2005,bachoc2010applications})—their use in the context of sum-rank metric codes remains unexplored. Motivated by this and inspired by Schrijver's machinery \cite{S2005}, we  derive a new SDP upper bound for the size of sum-rank metric codes. We also show several equivalences between the existing LP bounds for sum-rank-metric codes, with a focus on their relative strength and structural implications. Our contributions can be summarized as follows:
\begin{itemize}
    \item We introduce a semidefinite programming (SDP) bound for the sum-rank metric, extending techniques previously applied to Hamming and related metrics. Using computational methods, we show that this bound improves upon existing bounds in several parameter regimes.
    \item We provide the first application of eigenvalue-based bounds to derive non-existence results for perfect sum-rank-metric codes. This demonstrates the effectiveness of spectral techniques in addressing structural questions in the sum-rank setting.
    \item We perform a detailed comparison between the Delsarte linear programming bound and the Ratio-type bound, with particular emphasis on extremal regimes of the sum-rank metric. In the rank-metric case, we establish the equivalence of these bounds, while in more general settings we highlight their differences.
\end{itemize}

As a consequence of the above results, we also obtain multiple non-existence results for both maximum sum-rank distance (MSRD) codes—i.e., codes attaining the Singleton bound—and perfect codes, i.e., codes attaining the Sphere-Packing bound with equality. These results show that, in many cases, the known upper bounds are strictly smaller than the values predicted by optimal constructions, thereby ruling out their existence, extending several known results on sum-rank codes existence \cite{byrne2021fundamental,AKR2024,PRZ2025}.

\paragraph{Structure of the paper.}
The paper is organized as follows. In Section \ref{sec:preliminaries}, we introduce the necessary preliminaries on sum-rank-metric codes, including notation, known bounds, and the combinatorial structure of spheres and balls. We also recall the connection between sum-rank-metric spaces and graph theory, and review eigenvalue-based bounds such as the Ratio-type bound and other optimization bounds such as Delsarte LP bound. In Section \ref{sec:equivalencesLPbounds}, we compare and show several new equivalences between the existing linear programming bounds for sum-rank codes in the extremal regimes (rank-metric and Hamming metric). Section \ref{sec:SchrijverSDPbound} is devoted to the introduction of a semidefinite programming (SDP) bound for the sum-rank metric, together with computational results demonstrating its improvements over all existing bounds. In Section \ref{sec:nonexistence}, we apply the eigenvalue Ratio-type bound and the new SDP bound to study the existence of extremal codes, deriving multiple new non-existence results for MSRD and perfect sum-rank-metric codes. Finally, Section \ref{sec:concludingremarks} contains some concluding remarks and future directions.

\section{Preliminaries and existing bounds}\label{sec:preliminaries}

Let us introduce the notation and basic concepts used throughout the paper. We begin by recalling the definition of sum-rank-metric spaces and codes, together with the main parameters of interest. We then provide a concise overview of the bounds currently known for sum-rank-metric codes, which will serve as a benchmark for the results developed in later sections. Finally, we review the combinatorial structure of spheres and balls in the sum-rank metric and recall the corresponding Sphere-Packing bound, which plays a central role in our analysis.

\subsection{Sum-rank-metric codes}

For a positive integer $t$, $[t]$ is used to denote the set $\{1,2,\dots,t\}$, while $\db{t}$~denotes the set $\{0\}\cup[t]$. Let ${\bf n}=(n_1,\ldots,n_t)$, ${\bf m}=(m_1,\ldots,m_t)$ be tuples of positive integers 
with $m_1 \geq m_2 \geq \cdots \geq m_t$, 
and $m_i\geq n_i$ for all $i\in [t]$.

For a prime power $q$ and 
positive integers $m\geq n$, 
let $\mathbb{F}_q^{n\times m}$
denote the vector space of all $n\times m$ matrices over the finite field $\mathbb{F}_q$. 
Denote by $\rk(M)$ the rank of a matrix 
$M\in \mathbb{F}_q^{n\times m}$.

 \begin{definition}\label{def:srk}
The \emph{\textbf{sum-rank-metric space}} is 
the $\mathbb{F}_q$-linear vector space 
$\mathbb{F}_q^{{\bf n}\times {\bf m}}$ defined as follows:
\[
\mathbb{F}_q^{{\bf n}\times {\bf m}}:=
\mathbb{F}_q^{n_1\times m_1}\times \cdots \times
\mathbb{F}_q^{n_t\times m_t},
\]
where $\times$ stands for the direct product of vector spaces.
The \emph{\textbf{sum-rank}} of an element $X=(X_1,\ldots, X_t)\in \mathbb{F}_q^{{\bf n}\times {\bf m}}$ 
is $\srk(X) = \sum_{i=1}^t \rk(X_i)$.
The \emph{\textbf{sum-rank distance (metric)}} between 
$X,Y \in \mathbb{F}_q^{{\bf n}\times {\bf m}}$ is $\srk(X - Y)$.
In case $t=1$ we sometimes refer to the sum-rank distance as simply the \emph{\textbf{rank distance}}, and the corresponding metric space as the \emph{\textbf{rank-metric space}}.
\end{definition}

It is easy to see that the sum-rank distance is indeed a distance on $\mathbb{F}_q^{{\bf n}\times {\bf m}}$.

\begin{definition}
A \emph{\textbf{sum-rank-metric code}} is a non-empty subset $C \subseteq \mathbb{F}_q^{{\bf n}\times {\bf m}}$. 
The \emph{\textbf{minimum sum-rank distance}} of a code $C$ with $|C| \geq 2$ is defined 
by $$\srk(C) := \min\left\{\srk(X-Y)\colon X, Y \in C, \, X \neq Y\right\}.$$

If $|C|\leq 1$, we define $\srk(C)=\infty$.
\end{definition}

Determining the maximum size of a code with prescribed minimum distance is a classical problem in coding theory, extensively studied in the Hamming metric. 
In this paper, we consider the analogous problem for the sum-rank metric. Specifically, we study bounds on
\[
A(q,\bfn,\bfm,d):=\max\{|C|\colon C\subseteq \MatnmFq,\; \srk(C)\ge d\}.
\]

\subsection{Overview of known coding bounds}\label{sec:code-bounds}

In~\cite{byrne2021fundamental}, the authors
observe that if $\mC$ is a sum-rank-metric code in $\spac$ of size at least $2$ with $\srk(\mC) \ge d$,
then the size of $\mC$ is upper bounded by the size of the largest Hamming-metric code over field $\F_{q^m}$ of length $N$ and minimum distance at least $d$, where
$m=\max\{m_1,\dots,m_t\}$.
This observation implies the following bounds.

\begin{theorem}[see~\text{\cite[Theorem III.1]{byrne2021fundamental}}]\label{thm:induced} Let $m=\max\{m_1,\dots,m_t\}$ and let $\mC\subseteq\spac$ be a sum-rank-metric code with $|\mC|\geq 2$ and $\srk(\mC) \ge d$. The following hold.
\begin{center}
    \begin{tabular}{ll}
     \textbf{Induced Singleton bound:} & $|\mC|\leq q^{m(N-d+1)}$, \\[0.5cm]
     \textbf{Induced Hamming bound:} & $|\mC|\leq \floor{\frac{q^{mN}}{\sum_{s=0}^{\floor{(d-1)/2}}\binom{N}{s}(q^m-1)^s}}$, \\[0.5cm]
     \textbf{Induced Plotkin bound:} & $|\mC|\leq \floor{\frac{q^{m}d}{q^md-(q^m-1)N}}$ if $d>(q^m-1)N/q^m$, \\[0.5cm]
     \textbf{Induced Elias bound:} & $|\mC|\leq \floor{\frac{Nd(q^m-1)}{q^mw^2-2Nw(q^m-1)+(q^m-1)Nd}\cdot\frac{q^{mN}}{V_w(\F^N_{q^m})}}$. \\[0.5cm]
\end{tabular}
\end{center}
In the Induced Elias bound, $w$ is any integer between $0$ and $N(q^m-1)/q^m$ such that the denominator is positive, and $V_w(\F_{q^m}^N)=\sum_{i=0}^w {N\choose i}(q^m-1)^i$, i.e. the cardinality of any ball of radius $w$ in $\F_{q^m}^N$ with respect to the Hamming distance.
\end{theorem}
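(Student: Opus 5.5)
Here $N$ denotes $n_1+\cdots+n_t$, and we write $m=\max_i m_i=m_1$. The plan is to build an injective map $\varphi\colon \spac \to \F_{q^m}^N$ such that the Hamming distance of images is bounded below by the sum-rank distance. Then any code $\mC$ with $\srk(\mC)\ge d$ maps to a Hamming-metric code of the same size and minimum distance at least $d$ over the alphabet $\F_{q^m}$ of length $N$. Each of the four inequalities then follows by quoting the classical Singleton, Hamming (sphere-packing), Plotkin and Elias bounds for $q^m$-ary codes of length $N$, with $q^m$ in place of the alphabet size.

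First I construct $\varphi$ blockwise. For each $i$, fix an $\F_q$-basis of $\F_{q^{m_i}}$ and identify row vectors in $\F_q^{m_i}$ with elements of $\F_{q^{m_i}}$. Since $m_i\le m$ need not divide $m$, I will not rely on subfields. Instead I embed $\F_q^{m_i}\hookrightarrow\F_q^{m}$ by padding with zeros, and then identify $\F_q^m\cong\F_{q^m}$ via a fixed basis. This is an injective $\F_q$-linear map $\iota_i\colon\F_q^{m_i}\to\F_{q^m}$. For $X_i\in\F_q^{n_i\times m_i}$ with rows $r_1,\dots,r_{n_i}$, set $\varphi_i(X_i)=(\iota_i(r_1),\dots,\iota_i(r_{n_i}))\in\F_{q^m}^{n_i}$. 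Concatenating gives $\varphi(X)=(\varphi_1(X_1),\dots,\varphi_t(X_t))\in\F_{q^m}^N$, which is injective and $\F_q$-linear.

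The key step is the inequality $\mathrm{wt}_H(\varphi(X-Y))\ge \srk(X-Y)$. By linearity it suffices to show $\mathrm{wt}_H(\varphi_i(Z))\ge\rk(Z)$ for each block $Z$. This holds because $\mathrm{wt}_H(\varphi_i(Z))$ counts the nonzero rows of $Z$, and the rank of a matrix is at most its number of nonzero rows. Summing over blocks gives $d_H(\varphi(X),\varphi(Y))\ge\srk(X-Y)\ge d$ for distinct $X,Y\in\mC$. So $\varphi(\mC)$ is a Hamming code in $\F_{q^m}^N$ with $|\varphi(\mC)|=|\mC|$ and minimum distance at least $d$.

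Finally I invoke the standard bounds for a $Q$-ary code of length $N$ and minimum distance $d$, with $Q=q^m$. Singleton gives $Q^{N-d+1}$. Sphere-packing gives the floor of $Q^N/V_{\lfloor(d-1)/2\rfloor}$. Plotkin, valid when $d>(Q-1)N/Q$, gives $\lfloor Qd/(Qd-(Q-1)N)\rfloor$. The Elias--Bassalygo bound, valid for admissible $w$, is exactly the displayed expression with $V_w(\F_{q^m}^N)$. The Hamming and Singleton steps are immediate, and the only real care is in the Elias bound: I must check that the form quoted matches the standard $Q$-ary Elias--Bassalygo statement, including the positivity condition on the denominator and the range $0\le w\le N(Q-1)/Q$. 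I expect this bookkeeping to be the main (mild) obstacle. The hypothesis $|\mC|\ge2$ is only needed so that the minimum distance is defined, and hence so that Plotkin is meaningful.
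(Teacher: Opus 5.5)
Your proposal is correct and takes the same route as the paper, which only quotes the observation from \cite{byrne2021fundamental}: a code with $\srk(\mC)\ge d$ is, via a row-wise embedding, a $q^m$-ary Hamming-metric code of length $N=n_1+\cdots+n_t$ with minimum distance at least $d$, so the classical Singleton, sphere-packing, Plotkin and Elias--Bassalygo bounds apply. Your zero-padding map and the inequality ``number of nonzero rows $\ge$ rank'' make that observation explicit. The displayed Plotkin and Elias expressions also agree with the standard $Q$-ary forms with $\theta=(Q-1)/Q$ once numerator and denominator are multiplied by $Q=q^m$.
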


The previous bounds can be derived by approximating the sum-rank metric with the Hamming metric. By exploiting the structure of the sum-rank metric more carefully, one obtains the following bounds, which are typically tighter.

\begin{theorem}[see \text{\cite[Theorems III.2 and III.8]{byrne2021fundamental}}]\label{thm:non-induced-singleton-td} Let $\mC\subseteq\spac$ be a code with $|\mC|\geq2$ and $\srk(\mC)\ge d$. Let $j$ and $\delta$ be the unique integers satisfying $d-1=\sum_{i=1}^{j-1} n_i + \delta$ and $0\leq\delta\leq n_j-1$.  Finally, let $Q=\sum_{i=1}^t q^{-m_i}$.
The following hold.
{\small{
\begin{center}
    \begin{tabular}{ll}
     \textbf{Singleton bound:} & $|\mC|\leq q^{\sum_{i=j}^t m_in_i-m_j\delta}$, \\[0.5cm]
     \textbf{Total Distance bound:} & $|\mC|\leq\frac{d-N+t}{d-N+Q}$ if $d>N-Q$. \\[0.5cm]
\end{tabular}
\end{center}
}}

\end{theorem}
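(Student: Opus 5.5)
For the Singleton bound I will use a puncturing argument. Fix $j$ and $\delta$ with $d-1=\sum_{i=1}^{j-1}n_i+\delta$. I will define the map $\pi\colon\spac\to\F_q^{(n_j-\delta)\times m_j}\times\F_q^{n_{j+1}\times m_{j+1}}\times\cdots\times\F_q^{n_t\times m_t}$. It discards the first $j-1$ blocks, deletes the first $\delta$ rows of the $j$-th block, and keeps blocks $j+1,\dots,t$ unchanged.

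I then claim that $\pi$ is injective on $\mC$. Suppose $X\neq Y$ in $\mC$ satisfy $\pi(X)=\pi(Y)$. Then $X-Y$ vanishes on the kept part, so its $j$-th block has nonzero entries only in $\delta$ rows and therefore has rank at most $\delta$. The blocks $i<j$ have rank at most $n_i$, and the remaining blocks are zero. Hence $\srk(X-Y)\le\sum_{i<j}n_i+\delta=d-1$, which contradicts $\srk(\mC)\ge d$. It follows that $|\mC|$ is at most the size of the codomain, which is $q^{m_j(n_j-\delta)+\sum_{i>j}m_in_i}=q^{\sum_{i\ge j}m_in_i-m_j\delta}$. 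The only point needing care is that deleting $\delta$ rows costs at most $\delta$ in rank, and this is immediate.

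For the Total Distance bound I will run a Plotkin-type averaging argument. Let $M=|\mC|$ and $S=\sum_{X,Y\in\mC}\srk(X-Y)$ over ordered pairs. The lower bound is $S\ge M(M-1)d$. For the upper bound I will write $\srk(X-Y)=\sum_i \rk(X_i-Y_i)\le\sum_i n_i\cdot[X_i\ne Y_i]$. That estimate alone only gives the induced Plotkin bound, so I will need a sharper per-block estimate.

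The key step is to bound $\sum_{X,Y}\rk(X_i-Y_i)$ better. I will use $\rk(A)=n_i-\dim\ker_{\mathrm{left}}$, or equivalently the character identity $q^{n_i-\rk(A)}=\#\{v\in\F_q^{n_i}: vA=0\}$, together with the inequality $\rk(A)\le n_i-(n_i-\rk A)$. Concretely, I will aim for
\[
\sum_{X,Y\in\mC}\rk(X_i-Y_i)\le M^2\bigl(n_i-1+q^{-m_i}\bigr)-M\bigl(n_i-1\bigr)\cdot(\text{diagonal correction}).
\]
To get this, I will use $\rk(A)\le n_i-1+[A\neq0]$ and bound $\sum_{X,Y}[X_i\ne Y_i]\le M^2(1-q^{-m_i\cdot?})$. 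This is the delicate point. I expect the intended argument uses that the number of pairs agreeing in block $i$ is at least $M^2/q^{m_in_i}$. To get $q^{-m_i}$ instead, I will first project onto a single row, so that a block agreeing in some fixed row happens for at least $M^2q^{-m_i}$ pairs. Using $\rk(A)\le (n_i-1)+\rk(\text{one row of }A)$ and summing over $i$ then gives $S\le M^2(N-t+Q)-M(N-t)$, where the diagonal pairs contribute $0$ rather than $N-t$.

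Combining the two bounds gives $M(M-1)d\le M^2(N-t+Q)-M(N-t)$, i.e. $(M-1)d\le M(N-t+Q)-(N-t)$. Rearranging yields $M(d-N+t-Q)\le d-N+t$. The coefficient $d-N+t-Q$ in this inequality does not match the stated denominator $d-N+Q$, so I will recheck the statement's normalization; presumably $N$ there absorbs $t$. Carrying out this accounting correctly, with the right diagonal term and the right per-row agreement count, is the main obstacle.
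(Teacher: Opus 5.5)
Your Singleton argument is correct and is the standard puncturing proof. Deleting blocks $1,\dots,j-1$ and $\delta$ rows (each of length $m_j$) of block $j$ is injective on $\mC$, because any difference that vanishes on the kept part has sum-rank at most $\sum_{i<j}n_i+\delta=d-1$. The codomain has exactly $q^{\sum_{i\ge j}m_in_i-m_j\delta}$ elements.

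The Total Distance part has a genuine gap. You end with $M(d-N+t-Q)\le d-N+t$ and then suggest the statement is mis-normalized. It is not: $N=n_1+\cdots+n_t$, as in the induced bounds, and nothing absorbs $t$. The mismatch comes from a complement error in your per-block estimate. Your two ingredients are right: $\rk(A)\le (n_i-1)+[\text{first row of }A\neq 0]$, and, by Cauchy--Schwarz, at least $M^2q^{-m_i}$ ordered pairs agree in the first row of block $i$. But the extra $+1$ is paid only on the pairs that \emph{disagree} in that row, and there are at most $M^2(1-q^{-m_i})$ of those, not $M^2q^{-m_i}$. Concretely, let $M_v$ be the number of codewords whose $i$-th block has first row $v\in\F_q^{m_i}$; pairs with different first rows automatically have $X\neq Y$. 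Then
\[
\sum_{X\neq Y}\rk(X_i-Y_i)\le M(M-1)(n_i-1)+\Bigl(M^2-\sum_{v}M_v^2\Bigr)\le M(M-1)(n_i-1)+M^2\bigl(1-q^{-m_i}\bigr).
\]
Summing over $i$ gives $S\le M(M-1)(N-t)+M^2(t-Q)=M^2(N-Q)-M(N-t)$, not $M^2(N-t+Q)-M(N-t)$. Combined with $S\ge M(M-1)d$, this yields $(M-1)(d-N+t)\le M(t-Q)$, i.e.\ $M(d-N+Q)\le d-N+t$. This is exactly the claimed bound once $d>N-Q$ makes the coefficient positive. The kernel-counting identity and the tautology $\rk(A)\le n_i-(n_i-\rk A)$ play no role and can be dropped.
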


The Singleton bound in Theorem~\ref{thm:non-induced-singleton-td} is the analogue of the classical Singleton bound for the Hamming metric. Codes attaining the Singleton bound in the Hamming metric are known as maximum distance separable (MDS) codes. Similarly, codes attaining the Singleton bound in the sum-rank metric are called \textbf{maximum sum-rank distance} (MSRD) codes.


\subsection{Spheres and balls in the sum-rank metric and the Sphere-Packing bound}\label{sec:spheresandperfectcodes}
In order to derive packing-type bounds for sum-rank-metric codes, we study the combinatorial structure of spheres and balls in the ambient space $\spac$. As in the classical Hamming and rank-metric settings, the size of these objects plays a fundamental role in bounding the maximum cardinality of a code with prescribed minimum distance. 

We begin by introducing spheres and balls in the sum-rank metric and recalling known expressions for their volumes. Later we recall the Sphere-Packing bounds for codes in $\spac$. 

\begin{definition}
Let $l$ be a non-negative integer and $X\in\spac$. The \textbf{sphere} of radius $l$ centered at $X$ in the sum-rank metric is
\[
S(X,l)=\{Y\in\spac \mid \srk(X-Y)=l\},
\]
the \textbf{ball} of radius $l$ centered at $X$ in the sum-rank metric is
\[
B(X,r)=\{Y\in\spac \mid \srk(X-Y)\le r\}.
\]
\end{definition}

Clearly,
\begin{equation}\label{c2}
B(X,r)=\bigcup_{l=0}^r S(X,l).
\end{equation}

Since the sum-rank metric is translation invariant, the size of spheres and balls does not depend on the center. Hence, we define the \textbf{volume} of a ball of radius $r$ as
\[
\mathbf{V}_r(\spac):=|B(0,r)|.
\]
Similarly, for $l\ge0$ we define
\[
S_l=\{Y\in\spac \mid \srk(Y)=l\}, \qquad 
\mathbf{V}(S_l):=|S_l|.
\]
From \eqref{c2} it follows that
\begin{equation}\label{c3}
\mathbf{V}_r(\spac)=\sum_{l=0}^{r}\mathbf{V}(S_l).
\end{equation}

The number of elements in $\spac$ having a fixed sum-rank weight was computed in \cite[Section III]{byrne2021fundamental} (see also \cite[Section 4]{sauerbier2025bounds} and \cite{PRZ2025}), yielding the following result.

\begin{proposition}\label{c6}
Let $l,r\ge0$. Then
\[
\mathbf{V}(S_l)=
\sum_{\substack{(k_1,\ldots,k_t)\in\mathbb{N}_0^t \\ k_1+\cdots+k_t=l}}
\prod_{i=1}^{t}
{n_i \choose k_i}_q
\prod_{j=0}^{k_i-1}(q^{m_i}-q^j),
\]
and
\[
\mathbf{V}_r(\spac)=
\sum_{l=0}^{r}
\sum_{\substack{(k_1,\ldots,k_t)\in\mathbb{N}_0^t \\ k_1+\cdots+k_t=l}}
\prod_{i=1}^{t}
{n_i \choose k_i}_q
\prod_{j=0}^{k_i-1}(q^{m_i}-q^j),
\]
\end{proposition}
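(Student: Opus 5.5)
The plan is to count elements of $S_l$ by splitting on the rank profile. Since $\srk(X)=\sum_i \rk(X_i)$, the set $S_l$ is the disjoint union, over all tuples $(k_1,\ldots,k_t)\in\mathbb{N}_0^t$ with $k_1+\cdots+k_t=l$, of the sets
\[
\{(X_1,\ldots,X_t)\in\spac \colon \rk(X_i)=k_i \text{ for all } i\in[t]\}.
\]
Each such set is a Cartesian product of the sets $R_i(k_i)=\{X_i\in\F_q^{n_i\times m_i}\colon \rk(X_i)=k_i\}$. Hence $\mathbf{V}(S_l)=\sum_{(k_i)}\prod_i |R_i(k_i)|$. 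Tuples with some $k_i>n_i$ give an empty product, and the formula is consistent with this because ${n_i\choose k_i}_q=0$ in that case.

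The main step is the classical count
\[
|R(k)|={n\choose k}_q\prod_{j=0}^{k-1}(q^m-q^j)
\]
for $n\times m$ matrices of rank $k$, with $n\le m$. I would prove it by noting that a matrix of rank $k$ is determined by its column space $U\le\F_q^n$, of dimension $k$, together with a surjective linear map $\F_q^m\to U$. After fixing a basis of $U$, such a surjection corresponds to a $k\times m$ matrix of full row rank $k$. There are ${n\choose k}_q$ choices of $U$. The number of $k\times m$ full-rank matrices is $\prod_{j=0}^{k-1}(q^m-q^j)$, obtained by choosing the rows one at a time so that each avoids the span of the previous ones.

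It remains to check that the decomposition $X=BA$ is unique once a basis $B$ of $U$ is fixed; this is immediate because $B$ has full column rank. The ball formula then follows by summing over $l=0,\ldots,r$ using \eqref{c3}, since the spheres $S(0,l)$ are disjoint by \eqref{c2}. I expect no real obstacle here; the only care needed is the bijection in the rank-$k$ count.
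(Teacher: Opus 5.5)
Your proof is correct and follows the standard argument. The paper gives no proof of its own but cites \cite[Section III]{byrne2021fundamental}, where the formula comes from this same rank-profile decomposition of $S_l$ combined with the classical count ${n\choose k}_q\prod_{j=0}^{k-1}(q^m-q^j)$ of rank-$k$ matrices; the paper itself argues this way in the special case treated in the proof of Proposition~\ref{prop:boundvolume}. Your derivation of that count via the column space and a full-row-rank coefficient matrix, and your remark that ${n_i\choose k_i}_q=0$ for $k_i>n_i$, supply the details the paper leaves to the reference.
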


where for integers $0 \leq j \leq n$,
\[
{n \choose j}_q := \prod_{i=0}^{j-1} \frac{q^{n-i} - 1}{q^{j-i} - 1},
\]
and is called \textbf{Gaussian binomial coefficient} (or $q$-binomial coefficient).
This quantity counts the number of $j$-dimensional subspaces of $\mathbb{F}_q^n$.

The above expression is not very convenient for computations, especially when studying covering properties; see also \cite{puchinger2022generic} for a dynamic programming approach to evaluate it.

We explicitly treat the special case in which all but one of the entries of $\mathbf{m}$ and $\mathbf{n}$ are equal to one.

\begin{proposition}\label{prop:boundvolume}
    In the sum-rank-metric space $\spac$ with $\bfm=(m,1,\dots,1)$, $\bfn=(n,1,\dots,1)$, where $m\geq n\geq1$, while $q\geq 2$ is a prime power, we have that
    \[ V_r(\Fq^{\bfn\times\bfm})= \sum_{i=0}^r\sum_{j=0}^i {n \choose j}_q (q^m-1)\cdots(q^m-q^{j-1}) {t-1\choose i-j}(q-1)^{i-j}\]
    \[\leq (r+1) {t-1 \choose \lfloor (t-1)/2\rfloor} (q-1)^r q^{(m+n+1)r-r^2+1}. \]
    In particular, $V_r \equiv \sum_{i=0}^r {t-2 \choose i}\pmod{q}$.
\end{proposition}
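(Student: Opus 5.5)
The plan has three parts: derive the exact formula from Proposition~\ref{c6}, bound it term by term, and then reduce it modulo $q$.

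\textbf{Exact formula.} I specialise Proposition~\ref{c6} to $\bfn=(n,1,\dots,1)$, $\bfm=(m,1,\dots,1)$. In a tuple $(k_1,\dots,k_t)$ with $\sum k_s=i$, write $j=k_1\in\db{\min(i,n)}$. The remaining $t-1$ blocks are $1\times 1$ matrices, so $k_s\in\{0,1\}$ for $s\ge 2$. Exactly $i-j$ of them equal $1$, and these positions can be chosen in ${t-1\choose i-j}$ ways. Each such block contributes ${1\choose 1}_q(q-1)=q-1$, and each zero block contributes $1$. The first block contributes ${n\choose j}_q\prod_{l=0}^{j-1}(q^m-q^l)$. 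Summing over $j$ and then over $i\le r$ gives the stated identity, with the convention that terms with $j>n$ or $i-j>t-1$ vanish.

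\textbf{Upper bound.} I bound each factor of a term separately. First I use ${t-1\choose i-j}\le{t-1\choose\lfloor (t-1)/2\rfloor}$ and $(q-1)^{i-j}\le (q-1)^r$. Next, $\prod_{l<j}(q^m-q^l)\le q^{mj}$. For the Gaussian binomial I use the elementary inequality $\frac{q^a-1}{q^b-1}\le q^{a-b+1}$ for $a\ge b\ge 1$. This holds because $q^{a-b+1}(q^b-1)=q^{a+1}-q^{a-b+1}\ge q^{a+1}-q^a\ge q^a>q^a-1$. Applying it factor by factor gives ${n\choose j}_q\le q^{j(n-j+1)}$.

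Hence each term is at most ${t-1\choose\lfloor (t-1)/2\rfloor}(q-1)^r q^{f(j)}$ with $f(j)=(m+n+1)j-j^2$. Since $f(j+1)-f(j)=m+n-2j\ge 1$ for $j<n\le m$, the values $q^{f(j)}$ grow at least geometrically with ratio $q\ge 2$. So for fixed $i$ the sum over $j$ is at most $2q^{f(j_{\max})}\le q^{f(j_{\max})+1}$, where $j_{\max}=\min(i,n)$; this explains the ``$+1$'' in the exponent. Finally I need $f(\min(i,n))\le f(r)$ for all $i\le r$. Summing over the $r+1$ values of $i$ then yields the stated bound. The last comparison is where I expect the main obstacle. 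For $r\le (m+n+1)/2$ it follows from monotonicity of $f$. For larger $r$ one has to compare $f(n)$ with $f(r)$ directly, and $f(r)\ge f(n)$ iff $r\le m+1$. Beyond that range the bound may need to be read with $r$ replaced by $\min(r,m+1)$, or it may only be intended in the range $r\le m+1$. I would check this case separately.

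\textbf{Congruence.} I reduce each term modulo $q$. For $j\ge 2$ the product $(q^m-1)(q^m-q)\cdots$ contains the factor $q^m-q$, so the term is $\equiv 0$. For $j=1$ I have ${n\choose 1}_q=1+q+\dots+q^{n-1}\equiv 1$ and $q^m-1\equiv -1$, so the term is $\equiv(-1)^i{t-1\choose i-1}$. For $j=0$ the term is $\equiv(-1)^i{t-1\choose i}$. Summing the two via Pascal's rule gives $(-1)^i{t\choose i}$. The alternating partial sum then telescopes to $V_r\equiv(-1)^r{t-1\choose r}\pmod q$.

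I would then reconcile this with the stated form $\sum_{i\le r}{t-2\choose i}$. By Pascal's rule the latter equals $\sum_{i\le r}(-1)^{r-i}\binom{t-1}{i}$ up to sign conventions. So a careful check of signs is needed, and possibly a correction of the stated congruence.
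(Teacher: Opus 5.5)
Your route is essentially the paper's: specialise Proposition~\ref{c6}, bound $\binom{n}{j}_q\prod_{l<j}(q^m-q^l)\le q^{(m+n+1)j-j^2}$ (the paper cites Loidreau, you derive it from $\frac{q^a-1}{q^b-1}\le q^{a-b+1}$), replace the binomials by the central one, and absorb a geometric sum into the extra factor $q$. The two points where you hesitate are not gaps in your reasoning. They are errors in the statement itself, and you should say so outright rather than hedge.

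\textbf{The inequality.} It is false for large $r$. Take $q=2$, $n=m=1$, $t=20$, $r=5$. Then $V_5=\sum_{i\le 5}\binom{20}{i}=21700$, while the right-hand side is $6\binom{19}{9}2^{-9}<1083$. Non-degenerate examples exist too: for $q=2$, $n=m=2$, $t=21$, $r=7$ one has $V_7\ge\binom{20}{7}$, but the right-hand side is below $181$.

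Your argument proves the bound exactly for $r\le m+1$. It also proves it for every $r$ with exponent $f(\min(r,n))+1$ in place of $f(r)+1$. Since $f(n)=f(m+1)=n(m+1)$ and $f(r)\ge f(n)$ for $n\le r\le m+1$, your reading with $\min(r,m+1)$ is likewise valid for all $r$.

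This is more than the paper's own chain delivers. Its second step replaces $q^{f(j)}$ by $q^{f(i)}$ for all $j\le i$, which needs $f$ nondecreasing on $\{0,\dots,r\}$, i.e.\ $r\le\lfloor (m+n)/2\rfloor+1\le m+1$. Its third step uses $\sum_j\binom{t-1}{i-j}(q-1)^{i-j}\le (i+1)\binom{t-1}{i}(q-1)^i$, which is false in general once $i>(t-1)/2$: for $q=2$ and $i=t-1$ it reads $2^{t-1}\le t$. You avoid both problems by bounding each $\binom{t-1}{i-j}$ by the central binomial and by using that terms with $j>n$ vanish.

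\textbf{The congruence.} Your residue $V_r\equiv(-1)^r\binom{t-1}{r}\pmod q$ is correct, and the stated one is wrong. For $q=3$, $n=m=1$, $t=3$, $r=1$ one has $V_1=7\equiv 1$, whereas $\sum_{i\le 1}\binom{1}{i}=2$. They differ even for $q=2$: with $n=m=1$, $t=4$, $r=2$, $V_2=11$ is odd but $\sum_{i\le 2}\binom{2}{i}=4$ is even.

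The paper's computation effectively replaces each term by $(-1)^j\binom{t-1}{i-j}$. That is, it takes $\prod_{l<j}(q^m-q^l)\equiv(-1)^j$, which is false for $j\ge 2$ as you observe, and it ignores $(q-1)^{i-j}\equiv(-1)^{i-j}$.

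So drop your attempted reconciliation. It is also incorrect, since $\sum_{i\le r}(-1)^{r-i}\binom{t-1}{i}=\binom{t-2}{r}$, not $\sum_{i\le r}\binom{t-2}{i}$. The paper's only later use, in the additive perfect-code theorem, needs just $V_1\equiv 1-t\pmod p$. The paper computes that directly from Corollary~\ref{cor:V1}, and it agrees with your formula.
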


\begin{proof}
    We start by computing the number of elements in $\Fq^{\bfn\times\bfm}$ that have sum-rank $i$, for some $i\leq r$.
    An element $(X,x_1,\ldots,x_{t-1})\in \Fq^{\bfn\times\bfm}$ has sum-rank $i$ if and only if the Hamming weight of $(x_1,\ldots,x_{t-1})$ is $i-\mathrm{rk}(X)$ and so the number of elements in $\Fq^{\bfn\times\bfm}$ with sum-rank weight $i$ is 
    \[\sum_{j=0}^i {n \choose j}_q (q^m-1)\cdots(q^m-q^{i-1}) {t-1\choose i-j}(q-1)^{i-j}, \]
    where the first part counts the number of matrices in $\Fq^{m\times n}$ of rank $j$ and the second part counts the number of elements in $\F_q^{t-1}$ with Hamming weight $i-j$.
    From this we derive the assertion.
    Using \cite[Proposition 1]{loidreau2006properties}, we know that
    \[ {n \choose j}_q (q^m-1)\cdots(q^m-q^{i-1}) \leq q^{(m+n+1)j-j^2}, \]
    and so
    \begin{align*}
        V_r(\Fq^{\bfn\times\bfm})&\leq \sum_{i=0}^r\sum_{j=0}^i q^{(m+n+1)j-j^2} {t-1\choose i-j}(q-1)^{i-j} \\   
    & \leq \sum_{i=0}^r q^{(m+n+1)i-i^2} \sum_{j=0}^i {t-1\choose i-j}(q-1)^{i-j}\\
    &\leq \sum_{i=0}^r q^{(m+n+1)i-i^2} (i+1) {t-1 \choose i} (q-1)^i\\
   & \leq (r+1) {t-1 \choose \lfloor (t-1)/2\rfloor} (q-1)^r q^{(m+n+1)r-r^2+1}. 
    \end{align*}
    For the last part, observe that ${n \choose j}_q \equiv 1 \pmod{q}$ and so
    \[ V_r \equiv \sum_{i=0}^r \sum_{j=0}^i (-1)^j {t-1 \choose i-j} \equiv \sum_{i=0}^r {t-2 \choose i}\pmod{q}. \]
\end{proof}

We point out the following, which will be used later.

\begin{corollary}\label{cor:V1}
     In the sum-rank-metric space $\spac$ with $\bfm=(m,1,\dots,1)$, $\bfn=(n,1,\dots,1)$, where $m\geq n\geq1$, while $q\geq 2$ is a prime power, we have that
    \[ V_1(\Fq^{\bfn\times\bfm})= \frac{q^n-1}{q-1} (q^m-1)+ (t-1)(q-1)+1.\]
\end{corollary}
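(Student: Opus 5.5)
The plan is to specialize the exact formula in Proposition~\ref{prop:boundvolume} to $r=1$ and evaluate the double sum term by term. With $r=1$ the outer index $i$ runs over $\{0,1\}$, and for each $i$ the inner index $j$ runs over $\{0,\dots,i\}$. So there are three terms, $(i,j)\in\{(0,0),(1,0),(1,1)\}$.

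First I will record the conventions for empty products and trivial binomials. These are ${n\choose 0}_q=1$, the empty product $(q^m-1)\cdots(q^m-q^{j-1})$ equal to $1$ when $j=0$, and ${t-1\choose 0}=1$. Then the three terms evaluate as follows.
\begin{itemize}
\item The term $(0,0)$ gives $1$; it counts the zero element.
\item The term $(1,0)$ gives ${t-1\choose 1}(q-1)=(t-1)(q-1)$. It counts the elements with $X=0$ and exactly one nonzero scalar among $x_1,\dots,x_{t-1}$.
\item The term $(1,1)$ gives ${n\choose 1}_q(q^m-1)=\frac{q^n-1}{q-1}(q^m-1)$. It counts the rank-one $n\times m$ matrices $X$ with all $x_k=0$.
\end{itemize}
Summing the three terms gives exactly the claimed expression.

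The only point needing care is the product factor inside the sum. In the displayed formula of Proposition~\ref{prop:boundvolume} it is indexed by $j$, namely $(q^m-1)\cdots(q^m-q^{j-1})$. I will use that form, which agrees with Proposition~\ref{c6}, rather than the $i$-indexed variant that appears inside that proof. This is the correct count of rank-$j$ matrices, and with it the $(1,0)$ term carries no spurious factor $(q^m-1)$.

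As a cross-check, the corollary also follows directly from Proposition~\ref{c6} with $l\le 1$. The weight-one elements are exactly those tuples $(k_1,\dots,k_t)$ with a single $k_i=1$. Block $i=1$ contributes ${n\choose 1}_q(q^m-1)$, and each of the $t-1$ blocks with $n_i=m_i=1$ contributes ${1\choose 1}_q(q-1)=q-1$. Adding $\mathbf{V}(S_0)=1$ then gives the formula again.
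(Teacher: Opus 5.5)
Your proposal is correct and is essentially the paper's argument: the paper states the corollary without a separate proof, as the exact formula of Proposition~\ref{prop:boundvolume} evaluated at $r=1$, and your three terms $1$, $(t-1)(q-1)$ and $\frac{q^n-1}{q-1}(q^m-1)$ sum to the stated expression. You are also right to use the $j$-indexed product $(q^m-1)\cdots(q^m-q^{j-1})$ from that proposition's statement, which agrees with Proposition~\ref{c6}. The $i$-indexed version inside its proof is a typo and would have put a spurious factor $(q^m-1)$ on the $(1,0)$ term.
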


The classical approach used in the Hamming metric for proving the Sphere-Packing bound also works in the sum-rank metric, leading to the following result.

\begin{theorem}[see \text{\cite[Theorems III.6 and III.7]{byrne2021fundamental}}]\label{thm:non-induced} Let $\mC\subseteq\spac$ be a code with $|\mC|\geq2$ and $\srk(\mC)\ge d$. Let $\ell\leq t-1$ and $\delta'\leq n_{\ell+1}-1$ be the unique positive integers such that $\smash{d-3=\sum_{j=1}^\ell n_j+\delta'}$.
Define $\smash{\bfn'=(n_{\ell+1}-\delta',n_{\ell+2},\dots,n_t)}$ and $\bfm'=(m_{\ell+1},m_{\ell+2},\dots,m_t)$. 
The following hold.
{\small{
\begin{center}
    \begin{tabular}{ll}
     \textbf{Sphere-Packing bound:} & $|\mC|\leq\floor{\frac{|\spac|}{V_r(\spac)}}$, where $r=\floor{(d-1)/2}$, \\[0.5cm]
     \textbf{Projective Sphere-Packing bound:} & $|\mC|\leq\floor{\frac{|\Fq^{\bfn'\times\bfm'}|}{V_1(\Fq^{\bfn'\times\bfm'})}}$ if $3\leq d\leq N$.
\end{tabular}
\end{center}
}}

\end{theorem}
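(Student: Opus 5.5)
The plan is to prove the two bounds separately. Both follow the classical packing argument, adapted to the sum-rank metric.

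\textbf{Sphere-Packing bound.} Set $r=\floor{(d-1)/2}$ and consider the balls $B(X,r)$ for $X\in\mC$. I will show they are pairwise disjoint. If $Z\in B(X,r)\cap B(Y,r)$ with $X\neq Y$, then the triangle inequality for $\srk$ (each $\rk$ is subadditive, hence so is their sum) gives
\[
\srk(X-Y)\le \srk(X-Z)+\srk(Z-Y)\le 2r\le d-1,
\]
which contradicts $\srk(\mC)\ge d$. By translation invariance each ball has exactly $\mathbf{V}_r(\spac)$ elements. Hence $|\mC|\,\mathbf{V}_r(\spac)\le|\spac|$, and since $|\mC|$ is an integer we may take the floor.

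\textbf{Projective Sphere-Packing bound.} The idea is to puncture $\mC$ so that its minimum distance drops to at least $3$, and then apply the radius-$1$ packing in the smaller space.

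1. Write $d-3=\sum_{j=1}^\ell n_j+\delta'$. The hypothesis $3\le d\le N$ guarantees that such $\ell\le t-1$ and $0\le\delta'\le n_{\ell+1}-1$ exist.
2. Define $\pi:\spac\to\Fq^{\bfn'\times\bfm'}$ by deleting the first $\ell$ blocks entirely and the first $\delta'$ rows of block $\ell+1$.
3. Deleting a row of a matrix lowers its rank by at most $1$. So for any $X$,
\[
\srk(X)-\srk(\pi(X))\le \sum_{j\le \ell}n_j+\delta'=d-3.
\]
4. Therefore, for distinct $X,Y\in\mC$ we get $\srk(\pi(X)-\pi(Y))\ge 3$. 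In particular $\pi$ is injective on $\mC$, so $|\pi(\mC)|=|\mC|$, and $\pi(\mC)$ has minimum distance at least $3$.
5. Apply the first part with $r=1$ in $\Fq^{\bfn'\times\bfm'}$ to conclude.

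A minor subtlety is the row/column convention. Since $n_i\le m_i$, we delete rows (the smaller dimension), which keeps each block shape $n'\times m$ with $n'\le m$. The hard step is only the bookkeeping in step 3: checking that the total rank loss is exactly bounded by $d-3$, together with the existence of the decomposition of $d-3$.
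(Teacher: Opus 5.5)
Your proof is correct and follows the standard argument that the paper itself only invokes by citation to Byrne et al.: disjointness of the radius-$r$ balls via the triangle inequality, and, for the projective version, deleting the first $\ell$ blocks and the first $\delta'$ rows of block $\ell+1$ (rank loss at most $d-3$). This yields an injective image with minimum distance at least $3$, to which radius-$1$ packing applies. You are also right to treat $\ell$ and $\delta'$ as nonnegative rather than positive as the statement says, since either may equal $0$.
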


As classically done in the Hamming metric, we give the definition of perfect code.

\begin{definition}
    Let $\mC\subseteq\spac$ be a code with $|\mC|\geq2$ and $\srk(\mC)\ge d$. If its parameters reach the Sphere-Packing bound with equality, then we say that $\mC$ is a \textbf{perfect code}.
\end{definition}

\subsection{The sum-rank-metric graph and overview of known graph theoretical bounds}\label{sec:graph}

In previous section, an overview of graph theoretical bounds on the maximal cardinality of a sum-rank-metric code was provided. In this section, we explore the sum-rank metric from a graph-theoretical point of view, and discuss other known bounds that arise from inspecting the eigenvalues of an adjacency matrix of a respective graph.

\begin{definition}\cite{byrne2022anticodes} For a sum-rank-metric space $\spac$, the \textbf{sum-rank-metric graph} $\Gamma(\spac)$ is a graph such that $\spac$ is its vertex set, and two elements $X,Y\in\spac$ form an edge in $\Gamma(\spac)$ if the sum-rank distance between them $\srk(X-Y)$ is exactly~$1$. The sum-rank-metric graph $\Gamma(\spac)$ is sometimes denoted by $\Gamma$ if no confusion arises.
\end{definition}

We highlight two extremal cases of sum-rank-metric graphs. If the number of blocks $t=1$ so that $\bfn = (n)$, $\bfm=(m)$, and so $\spac=\Fq^{n\times m}$ is a rank-metric space, then the graph is called a \textbf{bilinear forms graph}, denoted by $\Bil_q(n,m)$; see \emph{e.g.}~\cite{BCNbookDRG}. On the other hand, if the sizes of all $t$ blocks are $1\times1$, so that $\bfn=\bfm=(1,\dots,1)$, then the graph is called a \textbf{Hamming graph}, denoted by $H(t,q)$, associated with the Hamming space of vectors of length $t$ over a finite field $\Fq$.

In~\cite{AKR2024,AGKP2024,ART2025}, the sum-rank-metric graph has been used to derive bounds on the cardinality of the sum-rank-metric code. The key observation which allows the connection to be made is that the value $A(q,\bfn,\bfm,d)$ is equal to the \textbf{$(d-1)$-independence number} of $\Gamma$, which is the maximal size of a vertex subset such that any two vertices in the subset are at distance at least~$d$ from each other~\cite[Corollary 16]{AKR2024}. The rest of the section is an overview of the upper bounds on the maximal cardinality of a sum-rank-code that arise from this connection to graphs.

\subsubsection{The Ratio-type LP bound}\label{sec:RTboundLP}
The following result was originally presented in~\cite{ACF2019} in the context of regular graphs as an upper bound on the $(d-1)$-independence number, and then further discussed in the context of sum-rank-metric graphs in~\cite{AKR2024}.

\begin{theorem}[Ratio-type bound; c.f. \cite{ACF2019,AKR2024}]\label{thm:hoffman-like} Let $\spac$ be a sum-rank-metric space, and let $A$ be the adjacency matrix of $\Gamma(\spac)$ with eigenvalues \mbox{$\lambda_1\geq\dots\geq\lambda_{|\spac|}$,} where $|\spac|=q^{\sum_{i=1}^t m_in_i}$. Let $p$ be a polynomial with real coefficients of degree at most $d-1$. Define \mbox{$W(p)=\max_{\mathrm{X}\in \spac}\{(p(A))_{\mathrm{X}\mathrm{X}}\}$}, i.e. the maximal element of the diagonal of the matrix $p(A)$, and\\ \mbox{$\lambda(p)=\min_{i=2,\dots,|\spac|}\{p(\lambda_i)\}$.} Then
\begin{equation}\label{eq:hoffman-like}
    A(q,\bfn,\bfm,d) \leq |\spac| \, \frac{W(p)-\lambda(p)}{p(\lambda_1)-\lambda(p)}.
\end{equation}
\end{theorem}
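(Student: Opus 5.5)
The plan is to run the classical weighted-ratio argument of Hoffman type. Let $C\subseteq\spac$ be a code with $\srk(C)\ge d$ and $|C|=A(q,\bfn,\bfm,d)$, write $v=|\spac|$, and let $\chi\in\{0,1\}^{v}$ be the characteristic vector of $C$. The sum-rank-metric graph $\Gamma$ is vertex-transitive, since translations $Y\mapsto Y+X$ are automorphisms, so it is regular of degree $\lambda_1$, and the all-ones vector $\mathbf{1}$ is an eigenvector for $\lambda_1$. Two facts drive the proof.

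\textbf{Step 1: vanishing of off-diagonal entries on $C$.} The graph distance in $\Gamma$ equals the sum-rank distance, because a matrix of rank $k$ is a sum of $k$ rank-one matrices in the same block; this is the observation behind \cite[Corollary 16]{AKR2024}. Hence $(A^k)_{XY}=0$ whenever $k<\srk(X-Y)$. Since $\deg p\le d-1$ and distinct $X,Y\in C$ are at distance at least $d$, we get $p(A)_{XY}=0$ for all distinct $X,Y\in C$. It follows that
\[
\chi^\top p(A)\chi=\sum_{X\in C}p(A)_{XX}\le W(p)\,|C|.
\]

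\textbf{Step 2: spectral lower bound.} Since $A$ is symmetric, $p(A)$ is symmetric with eigenvalues $p(\lambda_i)$, and it shares the orthonormal eigenbasis of $A$, with $u_1=\mathbf{1}/\sqrt{v}$. Decompose $\chi=\frac{|C|}{v}\mathbf{1}+w$ with $w\perp\mathbf{1}$. The vector $w$ lies in the span of the remaining eigenvectors. One has to be careful if $\lambda_1$ has multiplicity greater than one, which happens when $\Gamma$ is disconnected. That case does not arise here, since $\Gamma$ is connected, so $\lambda_2<\lambda_1$; in any case $\lambda(p)$ is defined as the minimum over $i\ge2$, so the argument below goes through. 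We obtain
\[
\chi^\top p(A)\chi\ \ge\ p(\lambda_1)\frac{|C|^2}{v}+\lambda(p)\|w\|^2,
\qquad \|w\|^2=|C|-\frac{|C|^2}{v}.
\]

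\textbf{Step 3: combine.} Putting the two steps together gives
\[
W(p)|C|\ \ge\ p(\lambda_1)\frac{|C|^2}{v}+\lambda(p)\Bigl(|C|-\frac{|C|^2}{v}\Bigr).
\]
Dividing by $|C|$ and rearranging yields
\[
\frac{|C|}{v}\bigl(p(\lambda_1)-\lambda(p)\bigr)\le W(p)-\lambda(p),
\]
which is \eqref{eq:hoffman-like} provided $p(\lambda_1)>\lambda(p)$. This positivity is implicitly assumed for the bound to be meaningful. If $p(\lambda_1)\le\lambda(p)$ the right-hand side of \eqref{eq:hoffman-like} is not a valid bound, so I would state this hypothesis explicitly.

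The main obstacle is Step 1, namely justifying that the graph distance coincides with the sum-rank distance, so that powers $A^k$ with $k\le d-1$ cannot connect two codewords. This is exactly the rank-one decomposition in each block, and it is already provided by the cited correspondence.
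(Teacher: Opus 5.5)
Your proof is correct. The paper does not prove this statement; it quotes it from \cite{ACF2019}, as discussed for sum-rank-metric graphs in \cite{AKR2024}. There the argument is the interlacing version of the ratio bound. One takes the $2\times 2$ quotient matrix $B$ of $p(A)$ with respect to the partition $\{C,\spac\setminus C\}$. Its rows sum to $p(\lambda_1)$, and its $(1,1)$ entry is $\frac{1}{|C|}\sum_{X\in C}p(A)_{XX}\le W(p)$, by the same vanishing you prove in Step 1. The second eigenvalue $\mu_2=\tr B-p(\lambda_1)$ is then bounded from below by interlacing.

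Your Rayleigh-quotient route is the same computation in different form: $\mu_2$ equals the Rayleigh quotient of $p(A)$ at your vector $w=\chi-\frac{|C|}{v}\mathbf{1}$. Your version is self-contained and gives a slightly sharper intermediate statement. Because you work directly on $\mathbf{1}^\perp$, you get $|C|\,(p(\lambda_1)-\lambda(p))\le v\,(W(p)-\lambda(p))$ for every admissible $p$. Plain interlacing only yields $\mu_2\ge\min_{i\ge 1}p(\lambda_i)$, which equals $\lambda(p)$ only once $p(\lambda_1)>\lambda(p)$ is assumed. The interlacing version is shorter once that theorem is available.

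Your caveat is not a technicality. The hypothesis $p(\lambda_1)>\lambda(p)$ is part of the result in \cite{ACF2019} but has been dropped from the statement as printed, and without it the inequality is false. Take $q=3$, $\bfn=\bfm=(1,1)$ (the Hamming graph $H(2,3)$, eigenvalues $4,1,-2$), $d=2$ and $p(x)=-x$. Then the right-hand side equals $9\cdot\frac{0-(-1)}{-4-(-1)}=-3<3=A(3,(1,1),(1,1),2)$.

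One small correction to your closing paragraph. Step 1 only uses $\mathrm{dist}_\Gamma(X,Y)\ge\srk(X-Y)$, i.e.\ that a walk of length $k$ from $X$ to $Y$ forces $\srk(X-Y)\le k$. That is just subadditivity of rank along the walk. The rank-one decomposition proves the reverse inequality, which this upper bound does not need.
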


Hence, the application of the bound relies on the choice of the polynomial $p$ which minimizes the right-hand side of Eq.\:\eqref{eq:hoffman-like}. Throughout the paper, the value achieved by choosing an optimal $p$ will be denoted by $\mathrm{RT}(q,\bfn,\bfm,d)$. For $d=3$ and $d=4$, the best choice of the polynomial is known for any regular graph~\cite{ACF2019,KN2022}, including sum-rank-metric graphs. We state the case $d=3$ here due to its relevance for discussions in Section~\ref{sec:nonexistence}.

\begin{theorem}[Ratio-type bound, $d=3$; cf.~\cite{ACF2019}]\label{thm:hoffman-k=2} 
Let $\spac$ be a sum-rank-metric space and let $A$ be the adjacency matrix of $\Gamma(\spac)$ with $r\geq 2$ distinct eigenvalues \mbox{$\theta_0>\theta_1>\dots>\theta_r$}. Let $\theta_i$ be the largest eigenvalue such that $\theta_i\leq -1$. Then $$A(q,\bfn,\bfm,3)\leq |\spac|\frac{\theta_0+\theta_i\theta_{i-1}}{(\theta_0-\theta_i)(\theta_0-\theta_{i-1})}.$$ Moreover, this is the best possible bound that can be obtained by choosing a polynomial via Theorem~\ref{thm:hoffman-like}.    
\end{theorem}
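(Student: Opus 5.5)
The strategy is to specialize Theorem~\ref{thm:hoffman-like} to polynomials of degree at most $2$, since $d-1=2$. I will then optimize the right-hand side of \eqref{eq:hoffman-like} over this family, using only that $\Gamma=\Gamma(\spac)$ is a connected $\theta_0$-regular graph. Write $A$ for its adjacency matrix and $n=|\spac|$.

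First I compute $W(p)$. For $p(x)=ax^2+bx+c$, every diagonal entry of $A^2$ equals the degree $\theta_0$, and $A$ has zero diagonal. Hence $p(A)$ has constant diagonal $a\theta_0+c$, so $W(p)=a\theta_0+c$. The bound is invariant under adding a constant to $p$ and under scaling $p$ by a positive factor. I may therefore normalize and minimize
\[
F(p)=\frac{a\theta_0-\lambda(p)}{p(\theta_0)-\lambda(p)},\qquad \lambda(p)=\min_{1\le j\le r}p(\theta_j),
\]
where $p$ ranges over nonzero polynomials with $p(\theta_0)>\lambda(p)$.

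Next I check that the claimed value is attained. Take $p(x)=(x-\theta_i)(x-\theta_{i-1})$. Then $p(\theta_j)\ge 0$ for all $j\ge1$, because no eigenvalue lies strictly between $\theta_i$ and $\theta_{i-1}$, and $p$ vanishes at these two points. So $\lambda(p)=0$ and $W(p)=\theta_0+\theta_i\theta_{i-1}$. Substituting into $F$ gives exactly the stated expression, which proves the inequality.

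For optimality I first reduce to $a>0$. If $a\le 0$, the polynomial is concave or linear, and comparing with Hoffman's bound shows the value is at least the degree-$1$ value. I would check that this degree-$1$ value $-\theta_r/(\theta_0-\theta_r)$ is not smaller than the claimed expression, which is a known inequality in \cite{ACF2019}. With $a=1$, the polynomial $p(x)=(x-\alpha)(x-\beta)-\mu$ is a shifted quadratic. Up to the constant shift, the minimum $\lambda(p)$ over $\{\theta_1,\dots,\theta_r\}$ is attained at the eigenvalues nearest the vertex $(\alpha+\beta)/2$. One can therefore arrange $\lambda(p)=0$ with $p$ vanishing at two consecutive eigenvalues $\theta_{k}>\theta_{k+1}$ and nonnegative on the others. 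The bound then becomes
\[
g(k)=\frac{\theta_0+\theta_k\theta_{k+1}}{(\theta_0-\theta_k)(\theta_0-\theta_{k+1})},
\]
and any other choice of roots is dominated by sliding them to such a consecutive pair.

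The main obstacle is the final discrete optimization: showing that $g(k)$ is minimized when $\theta_{k+1}$ is the largest eigenvalue with $\theta_{k+1}\le -1$, that is, at $k+1=i$. I would try to prove that $g$ decreases while $\theta_{k+1}>-1$ and increases afterwards. The approach is to compare $g(k)$ with $g(k+1)$, or more generally to treat $g$ as a function of two real roots $x>y$ and compute the sign of its partial derivatives. I expect that derivative sign to be governed by the factor $(1+y)$, with the regularity fact $\theta_0\ge|\theta_j|$ fixing the sign of the remaining factors. This is exactly the argument of \cite{ACF2019}, and since only regularity of $\Gamma$ is used, it applies verbatim to sum-rank-metric graphs.
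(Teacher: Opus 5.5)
Your proposal is correct and follows essentially the argument of \cite{ACF2019}, to which the paper simply defers without giving a proof of its own. The computation you anticipate goes through: for $g(x,y)=\frac{\theta_0+xy}{(\theta_0-x)(\theta_0-y)}$ one has $\partial_x g=\frac{\theta_0(1+y)}{(\theta_0-x)^2(\theta_0-y)}$ and symmetrically $\partial_y g=\frac{\theta_0(1+x)}{(\theta_0-x)(\theta_0-y)^2}$, which justifies the sliding (passing through a double root when both roots lie on the same side of $-1$, since sliding to the endpoints of the same gap need not help) and also gives $g(\theta_{i-1},\theta_i)\le\lim_{y\to-\infty}g(\theta_r,y)=\frac{-\theta_r}{\theta_0-\theta_r}$ without citing it, while the case $a\le 0$ follows because, after normalizing $\lambda(p)=0$ and $p(\theta_0)=1$, the bound equals $\sum_j m(\theta_j)p(\theta_j)$ and a concave $p$ lies above the Hoffman line $\frac{x-\theta_r}{\theta_0-\theta_r}$ on $[\theta_r,\theta_0]$.
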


The bound of Theorem~\ref{thm:hoffman-k=2} can be further refined by considering particular families of sum-rank-metric spaces for which the distinct eigenvalues $\theta_i,i\in\{0,\dots,r\},$ can be explicitly expressed in terms of $q,\bfn,\bfm$,~and~$t$. In particular, a special case of sum-rank-metric spaces was considered in~\cite{AKR2024}, for which the Ratio-type bound takes the following form.

\begin{theorem}\cite{AKR2024}\label{thm:rt-bound_family}
    Let $\spac$ be a sum-rank-metric space with \linebreak \mbox{$\bfn=(n,1,\dots,1)$} and \mbox{$\bfm=(m,1,\dots,1)$,} with $m\geq n$ for some integers $m\geq2$ and $n\geq1$. Then $A(q,\bfn,\bfm,3)$ is upper bounded by
    {\scriptsize{
\begin{equation*}\label{eq:alpha2_bound}
\frac{q^{m n+t-1}(q-1)\left((q^m-1)(q^n-1) + (q-1)^2 (t-1) + (q-1)(\varepsilon+1) (\varepsilon-q+1)\right)}{\left(\left(q^m-1\right)
   \left(q^n-1\right)+(q-1)^2 (t-1)+\varepsilon(q-1)+1\right)\left(\left(q^m-1\right) \left(q^n-1\right)+(q-1)^2 (t-2)+\varepsilon(q-1)\right)},
\end{equation*}
}}
where \mbox{$\varepsilon=(t-1)\mod q$.}
\end{theorem}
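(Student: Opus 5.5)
The plan is to apply Theorem~\ref{thm:hoffman-k=2} directly. This requires the spectrum of $\Gamma(\spac)$ for $\bfn=(n,1,\dots,1)$ and $\bfm=(m,1,\dots,1)$.

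\textbf{Spectrum.} Since $\srk$ is additive over blocks, $\Gamma(\spac)$ is the Cartesian product of $\Bil_q(n,m)$ with $t-1$ copies of the complete graph $K_q$. Hence its eigenvalues are the sums of one eigenvalue of each factor. The eigenvalues of $\Bil_q(n,m)$ are
\[
\frac{(q^{n-j}-1)(q^m-q^j)-(q^j-1)}{q-1}, \qquad j\in\db{n},
\]
with $j=0$ giving the valency $k_B=(q^n-1)(q^m-1)/(q-1)$ and $j=1$ giving $\theta_B^{(1)}=k_B-(q^{m}+q^{n}-q-1)/(q-1)\cdot q/q$ (to be read off from the standard formula). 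The eigenvalues of $K_q$ are $q-1$ and $-1$. Taking $s$ copies of $-1$ contributes $(t-1-s)(q-1)-s=(t-1)(q-1)-sq$. So the distinct eigenvalues of $\Gamma$ are
\[
\theta_{B}^{(j)}+(t-1)(q-1)-sq,\qquad j\in\db{n},\ s\in\db{t-1},
\]
and the largest one is $\theta_0=k_B+(t-1)(q-1)$.

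\textbf{Locating $\theta_i$ and $\theta_{i-1}$.} We need the largest eigenvalue that is $\le -1$ and the eigenvalue immediately above it. Taking $j=0$ gives the eigenvalues $k_B+(t-1)(q-1)-sq$, which are spaced by $q$. These are typically the ones near $-1$, because the gaps between consecutive bilinear eigenvalues are of order $q^{m}$, far larger than the range $[-1,q-1)$. Writing $(t-1)(q-1)=(t-1)q-(t-1)$ and reducing modulo $q$ introduces $\varepsilon=(t-1)\bmod q$. I expect the two consecutive candidates around $-1$ to come out of the form $k_B(q-1)\,/\,(q-1)$-scaled quantities matching the two denominator factors. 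After multiplying through by $(q-1)$, these are
\[
(q^m-1)(q^n-1)+(q-1)^2(t-1)+\varepsilon(q-1)+1 \quad\text{and}\quad (q^m-1)(q^n-1)+(q-1)^2(t-2)+\varepsilon(q-1),
\]
which should equal $(q-1)(\theta_0-\theta_i)$ and $(q-1)(\theta_0-\theta_{i-1})$ respectively. Equivalently, $\theta_{i-1}=\varepsilon-1$ and $\theta_i=\varepsilon-1-q$, up to checking.

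The main obstacle is verifying that no eigenvalue with $j\ge1$ lies strictly between these two values. One also has to rule out a $j\ge1$ eigenvalue in the window $(-1-q,-1]$ that beats the $j=0$ one. A clean argument needs explicit control on where the $j\ge1$ shifts fall, using that $k_B-\theta_B^{(1)}$ is divisible structure-wise by $q$ modulo the arithmetic. If some $j\ge 1$ eigenvalue interferes, I would first check whether it is congruent to the $j=0$ family modulo $q$; since $\theta_B^{(j)}\equiv -\frac{q^j-1}{q-1}\cdot$ const $\pmod q$, it may land on the same residue and then coincide with the $j=0$ candidates without changing the bound.

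\textbf{Substitution.} Finally, substitute into $|\spac|\frac{\theta_0+\theta_i\theta_{i-1}}{(\theta_0-\theta_i)(\theta_0-\theta_{i-1})}$ with $|\spac|=q^{mn+t-1}$. Multiplying the numerator and denominator by $(q-1)^2$ gives the factor $(q-1)$ in front. Expanding $(q-1)\theta_0+(q-1)\theta_i\theta_{i-1}$ with $\theta_{i-1}=\varepsilon-1$ and $\theta_i=\varepsilon-1-q$ should produce the stated numerator $(q^m-1)(q^n-1)+(q-1)^2(t-1)+(q-1)(\varepsilon+1)(\varepsilon-q+1)$. This last step is routine algebra.
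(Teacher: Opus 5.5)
Your template (spectrum of the Cartesian product of $\Bil_q(n,m)$ with $t-1$ copies of $K_q$, then Theorem~\ref{thm:hoffman-k=2}, then algebra) is the right one. It mirrors the paper's proof of Theorem~\ref{thm:RT22}; the statement itself is only quoted from \cite{AKR2024}. However, the decisive step, locating the eigenvalues around $-1$, is wrong.

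\textbf{Where the eigenvalues sit.} The $j=0$ family $k_B+(t-1)(q-1)-sq$ is bounded below by $k_B-(t-1)$. So it is far above $-1$ unless $t$ exceeds the bilinear valency $k_B$. The relevant eigenvalues come from $j\ge 1$, e.g.\ from $\theta_B^{(n)}=-\frac{q^n-1}{q-1}$.

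\textbf{Your target values are wrong.} With $\theta_{i-1}=\varepsilon-1$ you get $\theta_{i-1}=-1$ when $\varepsilon=0$, contradicting the choice of $\theta_i$. Also $(\varepsilon-1)(\varepsilon-1-q)\neq(\varepsilon+1)(\varepsilon-q+1)$ in general, so your final ``routine algebra'' cannot produce the stated numerator. Reading off the numerator and the second denominator factor gives $\theta_i=-1-\varepsilon$ and $\theta_{i-1}=q-1-\varepsilon$. The first denominator factor as printed, ending in $\varepsilon(q-1)+1$, is not of the form $(q-1)(\theta_0-\theta)$ for any integer $\theta$ when $q>2$. The intended term is $(\varepsilon+1)(q-1)$, as in the proof of Theorem~\ref{thm:RTSP.oneblock.d=3}; the printed version is merely a weaker bound.

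\textbf{The missing idea is a congruence.} For $j\ge1$,
$\theta_B^{(j)}=q^j\frac{q^{n-j}-1}{q-1}(q^{m-j}-1)-\frac{q^j-1}{q-1}\equiv-1\pmod q$.
Also $k_B\equiv-1$, and both eigenvalues $q-1$ and $-1$ of $K_q$ are $\equiv-1\pmod q$. Hence every eigenvalue of $\Gamma(\spac)$ is $\equiv -t\equiv-1-\varepsilon\pmod q$. So no eigenvalue lies strictly between $-1-\varepsilon$ and $q-1-\varepsilon$.

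\textbf{This only brackets the true pair.} You get $\theta_i\le-1-\varepsilon$ and $\theta_{i-1}\ge q-1-\varepsilon$, and equality can fail. For $q=3$ and $\bfn=\bfm=(2,1)$, the spectrum is $\{34,31,7,4,-2,-5\}$ with $\varepsilon=1$, so $\theta_{i-1}=4\neq1$. Hence you cannot just substitute; use one of two routes:
\begin{itemize}
\item Use monotonicity of $F(a,b)=\frac{\theta_0+ab}{(\theta_0-a)(\theta_0-b)}$: it is nondecreasing in $a$ when $b>-1$, and nonincreasing in $b$ when $a\le-1$.
\item Apply Theorem~\ref{thm:hoffman-like} directly with $p(x)=(x+1+\varepsilon)(x-q+1+\varepsilon)$. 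Here $W(p)=\theta_0+ab$ and $\lambda(p)\ge0$ by the congruence. Since $p(\theta_0)>W(p)$, the resulting bound is at most $|\spac|\,W(p)/p(\theta_0)$.
\end{itemize}
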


The cases discussed above arise from introducing limitations on the minimum distance $d$ (namely by taking $d=3$ or $4$) or on the structure of the sum-rank-metric space, like in Theorem~\ref{thm:rt-bound_family}. For the general case, where the polynomial $p$ that minimizes the Ratio-type bound of Theorem~\ref{thm:hoffman-like} is not necessarily known, a linear programming (LP) approach can be used to compute the bound instead. This LP method was introduced in~\cite{F2020} for the so-called partially walk-regular graphs, and takes the minimum distance $d$ and the distinct eigenvalues of the adjacency matrix of the graph as input. We note that sum-rank-metric graphs are partially walk-regular, and their eigenvalues can be efficiently computed, so this LP method is applicable to sum-rank-metric spaces, as discussed in~\cite{AKR2024}.

The implementation and optimization of the polynomial used in the Ratio-type bound is a non-trivial problem that was investigated in \cite{F2020}. The best polynomials for the Ratio-type bound are the socalled minor polynomials. Below we provide a summary, for more details we refer the reader to \cite{F2020}.

\begin{lemma}[cf. \cite{F2020}]\label{l:FiolGen} Let $\spac$ be a sum-rank-metric space, and let\\ \mbox{$\theta_0>\cdots>\theta_r$} be the distinct eigenvalues of the adjacency matrix of $\Gamma(\spac)$. Let $m(\theta_j)$ denote the multiplicity of the eigenvalue $\theta_j$ for $j\in\{0,\dots,r\}$. Finally, let $\mathcal{I}$ be the set of all subsets of $\{1,\dots,r\}$ of cardinality $d-1$. Then 
\begin{equation}\label{eq:MPbound}
    A(q,\bfn,\bfm,d)\leq F(\theta_0,\dots,\theta_n):=\min\limits_{I\in\mathcal{I}} \sum_{j\notin I} m(\theta_j)\prod\limits_{i\in I}\frac{\theta_j-\theta_i}{\theta_0-\theta_i}.    
\end{equation}
\end{lemma}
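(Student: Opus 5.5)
We derive the bound of Lemma~\ref{l:FiolGen} from the Ratio-type bound of Theorem~\ref{thm:hoffman-like} by restricting to a special family of polynomials. Fix $I\in\mathcal{I}$, a set of $d-1$ indices from $\{1,\dots,r\}$, and set
\[
p_I(x)=\prod_{i\in I}\frac{x-\theta_i}{\theta_0-\theta_i}.
\]
This polynomial has degree exactly $d-1$, so it is admissible in Theorem~\ref{thm:hoffman-like}. It vanishes at every $\theta_i$ with $i\in I$ and satisfies $p_I(\theta_0)=1$. Since $\Gamma(\spac)$ is a connected regular graph, $\theta_0=\lambda_1$ is a simple eigenvalue, so $p(\lambda_1)=1$. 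The denominators $\theta_0-\theta_i$ are positive, so $p_I$ is well defined.

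Rather than evaluating $W(p)$ and $\lambda(p)$ for $p_I$ directly (where $\lambda(p)$ could be negative and spoil the formula), we use the mechanism behind the ratio-type argument. Let $C$ be a code with $\srk(C)\ge d$, let $\chi$ be its characteristic vector, and let $N=|\spac|$.

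1. Since $\deg p_I\le d-1$ and any two distinct codewords are at distance at least $d$, we have $(p_I(A))_{XY}=0$ for distinct $X,Y\in C$. Hence $\chi^\top p_I(A)\chi=\sum_{X\in C}(p_I(A))_{XX}$.
2. The graph is partially walk-regular, in fact distance-regular-like, since the sum-rank scheme is an association scheme. So the diagonal of $p_I(A)$ is constant, equal to $\frac1N\tr p_I(A)=\frac1N\sum_{j} m(\theta_j)p_I(\theta_j)$. Therefore $\chi^\top p_I(A)\chi=\frac{|C|}{N}\sum_{j\notin I}m(\theta_j)p_I(\theta_j)$.
3. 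Decompose $\chi=\frac{|C|}{N}\mathbf 1+\sum_{j\ge1}u_j$ into eigencomponents. This gives $\chi^\top p_I(A)\chi=\frac{|C|^2}{N}+\sum_{j\ge1}p_I(\theta_j)\|u_j\|^2$.

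Combining steps 2 and 3 yields
\[
|C|\Bigl(\tfrac{|C|}{N}\Bigr)+\sum_{j\ge1}p_I(\theta_j)\|u_j\|^2=\tfrac{|C|}{N}\sum_{j\notin I}m(\theta_j)p_I(\theta_j).
\]
If all $p_I(\theta_j)$ with $j\ge1$ were nonnegative, this would give $|C|\le\sum_{j\notin I}m(\theta_j)p_I(\theta_j)$ directly, and minimizing over $I$ would finish the proof.

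\textbf{Main obstacle.} The sign of $p_I(\theta_j)$ for $j\notin I$ is not controlled in general. Two routes are available.

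\emph{First route.} Plug $p_I$ into Theorem~\ref{thm:hoffman-like}. This gives $W=\frac1N\sum_j m(\theta_j)p_I(\theta_j)$, and the bound $N\frac{W-\lambda}{1-\lambda}$ equals $\sum_j m(\theta_j)p_I(\theta_j)$ exactly when $\lambda(p_I)=0$. This holds when $p_I\ge0$ on the remaining eigenvalues, for instance when $I$ consists of consecutive eigenvalues in appropriate positions.

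\emph{Second route, if the first fails.} Interpret the minimum in Lemma~\ref{l:FiolGen} as the value of Fiol's LP \cite{F2020}. That LP minimizes $\sum_j m(\theta_j)p(\theta_j)$ over polynomials with $p(\theta_0)=1$, $p(\theta_j)\ge0$ for all $j$, and $\deg p\le d-1$. Its feasible set is a polytope in the values $(p(\theta_j))_j$, and its vertices are minor polynomials: $d-1$ of the nonnegativity constraints are tight, so the polynomial vanishes at $d-1$ eigenvalues, and with the normalization it coincides with some $p_I$. Each feasible value is an upper bound on $|C|$ by the first route. The LP optimum is attained at a vertex, and such a vertex is a $p_I$ with all $p_I(\theta_j)\ge0$, so the optimum is at least the minimum over all $I$. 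For infeasible $I$, where some $p_I(\theta_j)<0$, the corresponding sum still upper-bounds $|C|$, since the LP optimum is the true minimum over feasible vertices.

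This last point, whether infeasible $I$ can undercut the bound, is the step to check most carefully against \cite{F2020}. If it cannot be resolved, the minimum should be understood as taken over those $I$ for which $p_I$ is nonnegative on the spectrum.
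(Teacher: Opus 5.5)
Your Steps 1--3 are correct. For every $I$ with $p_I(\theta_j)\ge 0$ for all $j\ge 1$, they give a complete, self-contained proof that $|C|\le \tr p_I(A)=\sum_{j\notin I}m(\theta_j)p_I(\theta_j)$. The paper itself offers no argument beyond the citation to \cite{F2020}.

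The gap is the last sentence of your second route. The claim that an infeasible $I$ still upper-bounds $|C|$ is false, and your justification does not address infeasible $I$ at all. The vertex argument only shows that the LP optimum is at least $\min_{I\in\mathcal I}$, which is the wrong direction. In fact no argument can close this gap, because the inequality fails as literally stated. Take $\bfn=\bfm=(1,1,1)$ and $q=2$, so $\Gamma(\spac)=H(3,2)$ has eigenvalues $3,1,-1,-3$ with multiplicities $1,3,3,1$, and let $d=3$. For $I=\{1,3\}$ one gets $p_I(x)=(x-1)(x+3)/12$ and $p_I(-1)=-1/3$. The sum is then $1+3\cdot(-1/3)=0<2=A(2,\bfn,\bfm,3)$. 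Similarly, for $d=2$ and $I=\{1\}$ the sum equals $-|\spac|\,\theta_1/(\theta_0-\theta_1)$, which is negative whenever $\theta_1>0$.

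So your fallback is exactly the right statement. The minimum must be restricted to those $I\in\mathcal I$ with $p_I\ge 0$ on the spectrum. Since the sign of $p_I(\theta_j)$ is $(-1)^{|\{i\in I:\, i<j\}|}$, these are the sets $I$ that are a disjoint union of pairs $\{i,i+1\}$ of consecutive indices, together with $\{r\}$ when $d-1$ is odd. This restricted minimum is precisely the value of the LP \eqref{eq:FiolLP}, whose constraints $x_j\ge 0$ encode feasibility.

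For this corrected statement your argument is a full proof:
\begin{itemize}
\item every feasible $p$ of degree at most $d-1$ gives $|C|\le \tr p(A)$ by Steps 1--3;
\item the feasible region is a pointed polyhedron on which the objective is at least $m(\theta_0)x_0=1$, so an optimal vertex exists;
\item a vertex has $d-1$ tight constraints $p(\theta_j)=0$, hence coincides with a feasible $p_I$.
\end{itemize}
Every application of the lemma in the paper uses only feasible $I$ (Theorem~\ref{thm:equiv_rank}, Lemma~\ref{l:Ham_RT_leq_S}, Theorem~\ref{thm:Ham_q=2_d=t-1}). Those sets are the $d-1$ smallest eigenvalues, or consecutive pairs plus the smallest eigenvalue. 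So those results are unaffected by the misstatement.
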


The alternative formulation of right-hand side of Eq.\:\eqref{eq:MPbound} is as an~LP which follows the notation of Lemma~\ref{l:FiolGen}, with solution $\mathrm{RT}(q,\bfn,\bfm,d)$:
\begin{equation}\label{eq:FiolLP}
\boxed{
\begin{array}{ll@{}l}
\text{minimize}  &\sum_{j=0}^{r} m(\theta_j)x_j &\\
\text{subject to} &f[\theta_0,\dots,\theta_s]=0, \ \quad \ & s=d,\dots,r\\
&x_j\geq0,~& j=1,\dots,r\\
& x_0=1 \\
\end{array}
}
\end{equation}
Here, \mbox{$f[\theta_0,\dots,\theta_s]$} denotes the $s$-th divided difference of Newton interpolation, 
recursively defined by $$f[\theta_i,\dots,\theta_j]=\frac{f[\theta_{i+1},\dots,\theta_j]-f[\theta_i,\dots,\theta_{j-1}]}{\theta_j-\theta_i},$$ where $j>i$, starting with \mbox{$f[\theta_i]=x_i$} for \mbox{$i\in\{0,\dots,r\}$.} The LP formulation\:\eqref{eq:FiolLP}, which is equivalent to Eq.\:\eqref{eq:MPbound}, was first proposed in~\cite{F2020}, where more details can be found.

\subsubsection{The Delsarte LP bound}\label{sec:DelsarteLPsrkcodes}
 
We conclude the overview with another linear programming bound on the cardinality of a sum-rank-metric code. The Delsarte LP approach is a well established bound, originally introduced in the context of the Hamming metric~\cite{DelsarteLP} and subsequently applied to other metrics, e.g.~\cite{D1978,delsarte1975alternating,DIL2020,A1982,S1986,R2010}, including the sum-rank metric~\cite{AGKP2024}. The key step of the method is to define an appropriate association scheme of the metric space under consideration. The invariants of the association scheme then act as an input for the Delsarte LP. In~\cite{AGKP2024}, the so-called sum-rank-metric scheme was defined inspired by the structure of the sum-rank-metric graph, which allows to define the LP as follows.

Let $\spac=\mathbb{F}_q^{n_1\times m_1}\times \cdots \times
\mathbb{F}_q^{n_t\times m_t}$ be a sum-rank-metric space with $m_i\geq n_i$ for $i\in[t]$, and let $\db{n}=0\cup[n]$ for a positive integer $n$. We define a set $\mathcal{R}$ of binary relations on $\spac$ in the following way: for ${\bf j}=(j_1,\dots,j_t)\in\db{n_1}\times\dots\db{n_t}$, two tuples $X=(X_1,\dots,X_t)$ and $Y=(Y_1,\dots,Y_t)\in\spac$ are in relation $R_{\bf j}$ if and only if $\rk(X_i-Y_i)=j_i$ for all $i\in[t]$. In other words, the $t$-tuple $\bf j$ encodes the rank distances between the $t$ elements of the tuples, each in its respective rank space $\F_q^{n_i\times m_i}$. The set $\mathcal{R}$ together with a point set $\spac$ forms the sum-rank-metric scheme; see~\cite{AGKP2024} for more information on the construction and the background on association schemes.

For each $\F_q^{n_i\times m_i}$, $i\in[t]$, we define the matrix $Q^{(i)}$ of size~$(n_i+1)\times(n_i+1)$, which corresponds to the so-called \emph{second eigenmatrix} of a bilinear forms scheme corresponding to the rank space~$\F_q^{n_i\times m_i}$. The first and second eigenmatrices are key invariants that are defined for a symmetric association scheme through a respective Bose-Mesner algebra; see, e.g., \cite{DelsarteLP} or \cite[Chapter 2]{BCNbookDRG} for more details. In practice, using properties of bilinear forms graphs (see e.g.~\cite{BCNbookDRG,AGKP2024}), the entries~$Q^{(i)}$ can be defined recursively using the following relations:
$$Q^{(i)}_{uv}=p_v(\theta_u)=\frac1{c_v} \left( (\theta_u - a_{v-1}) p_{v-1}(\theta_u) - b_{v-2} p_{v-2}(\theta_u)\right),\quad u,v\in \{0,\dots,n_i\},$$
where $p_0(\theta_u)=1$, $p_1(\theta_u)=\theta_u$, and 
\begin{align*}
b_v &:= \frac{q^{2v}(q^{m_i-v}-1)(q^{n_i-v}-1)}{q-1}, \\
c_v &:= \frac{q^{v-1}(q^v-1)}{q-1}, 
\\
a_v &:= b_0 - b_v - c_v.
\end{align*}
From the $t$ matrices $Q^{(1)},\dots,Q^{(t)}$ we obtain the matrix $Q=Q^{(1)}\otimes\cdots\otimes Q^{(t)}$, where $\otimes$ denotes the Kronecker product of matrices. The matrix $Q$ then corresponds to the so-called second eigenmatrix of the sum-rank-metric scheme defined above. This matrix $Q$ acts as the input for the Delsarte LP with variables forming a vector $\bf a$ with entries $a_{\bf j}$ indexed by ${\bf j}=(j_1,\dots,j_t)\in\db{n_1}\times\cdots\times\db{n_t}$:
\begin{equation}\label{eq:LPsrk}
\boxed{
\begin{array}{ll@{}ll}
\text{maximize}  &\sum_{{\bf j}\in \db{n_1} \times\cdots\times \db{n_t}} a_{\bf j} &\\
\text{subject to} &\mathbf{a}Q\geq \mathbf{0},\\
&\mathbf{a}\geq \mathbf{0},\\
&a_{(0,\ldots,0)}=1,\\
&a_{\bf j} = 0, &\quad 0<\sum_{i=1}^t j_i<d.
\end{array}
}
\end{equation}

The vector $\bf a$ is called the \textbf{distribution vector} corresponding to a subset $\mathcal{C}$ of elements of the space (or, alternatively, points of the association scheme). Intuitively, the variable $a_{\bf j}$ with ${\bf j}=(j_1,\dots,j_t)\in\db{n_1}\times\cdots\times\db{n_t}$ corresponds to the number of pairs of matrix tuples $\mathrm{X}=(X_1,\dots,X_t)$ and $\mathrm{Y}=(Y_1,\dots,Y_t)$ in a code $\mathcal{C}$ such that $\rk(X_i-Y_i)=j_i$ for $i\in[t]$, divided by the size of $|\mathcal{C}|$. In other words, ${\bf j}=(j_1,\dots,j_t)$ encodes the partition of the sum-rank distance into $t$ components, each belonging to its own rank-metric space. In particular, the constraint $a_{\bf j}=0$ for $\sum_{i=1}^t j_i<d$ encodes the fact that there are no two codewords in $\mathcal{C}$ at distance less than $d$. The constraint ${\bf a}Q\geq 0$ comes from a key result of~\cite{DelsarteLP}. The goal function, which is the sum of all entries of the vector $\bf a$, is equal to $|\mathcal{C}|$ by the definition of the distribution vector. Hence, the Delsarte LP\:\eqref{eq:LPsrk} above outputs an upper bound on the maximal cardinality of a sum-rank-metric code $\mathcal{C}$ with minimum distance $d$ in $\spac$. Throughout this paper, the optimal solution of the Delsarte LP\:\eqref{eq:LPsrk} is denoted by~$D(q,\bfn,\bfm,d)$.

\subsubsection{The Lov\'asz theta number}\label{sssec:Lovasz}\label{sec;Lovasztheta}

To conclude this section, we mention another prominent bound on the $(d-1)$-independence number of a graph, and hence on the maximal size of a code with minimum distance $d$, arising an algebraic-graph-theoretical approach: the Lov\'asz theta number.

Let $\Gamma$ be a graph on $n$ vertices. We say that a graph $\Gamma^k$ is the \textbf{$k$-th power} of $\Gamma$ when its vertex set is the same as in $\Gamma$, and two vertices in $\Gamma^k$ form an edge if these vertices are at distance at most~$k$ in $\Gamma$. It is easy to see that the $(d-1)$-independence number of $\Gamma$ is exactly the independence number of $\Gamma^{d-1}$.

A very well-known bound on the independence number is the \textbf{Lov\'asz theta number} $\vartheta(\Gamma)$~\cite{Lovasz}. Analogously, a \textbf{Lov\'asz theta-$k$ number} $\vartheta_k(\Gamma)=\vartheta(\Gamma^k)$ can be introduced. The value is estimated using an SDP method as follows~\cite{Lovasz}: 
For a graph $\Gamma$ on $n$ vertices, let $\mathcal{S}_+^n$ denote the set of all $n\times n$ symmetric positive semidefinite matrices. Then $\vartheta(\Gamma)$ is the solution of:
\begin{equation}\label{eq:LovaszSDP}
\boxed{
\begin{array}{ll@{}l}
\text{max}  &\operatorname{tr}(BJ) &\\
\text{subj. to} &\operatorname{tr}B=1, &\\
&b_{ij}=0, &\quad (i,j)\in E,\,i,j\in[n],\\
&B\in\mathcal{S}_+^n, &\\ 
\end{array}
}
\end{equation}
where $J$ is an all-ones matrix of size $n\times n$. Note that $\operatorname{tr}(BJ)$ is the sum of the entries in $B$.

An important connection to the Delsarte LP bound\:\eqref{eq:LPsrk} follows from~\cite{S79}: for graphs derived from symmetric association schemes, including sum-rank-metric graphs, the Lov\'asz theta-$k$ number $\vartheta_k(\Gamma)$ is greater than or equal to the value obtained through Delsarte's LP method~\cite{DelsarteLP}; see also~\cite[Remark~4.1]{AGKP2024}.

\section{Extremal cases of the sum-rank metric: LP bounds equivalence}\label{sec:equivalencesLPbounds}

\subsection{Rank-metric case}

In this section we address a question asked in \cite{AAR2025} and \cite{AGKP2024}.

Recall that a bilinear forms graph $\Bil_q(n,m)$ is equivalent to the sum-rank-metric graph $\Gamma(\F_q^{\bfn\times\bfm})$ with $\bfn=(n)$ and $\bfm=(m)$, i.e., all tuples of matrices are of length $t=1$. It was previously observed in~\cite{AGKP2024} that the output of the Ratio-type bound of Theorem~\ref{thm:hoffman-like} computed using Fiol's LP\:\eqref{eq:FiolLP} coincides with the output of Delsarte's LP bound~\cite{DelsarteLP,D1978} in computational experiments on bilinear forms graphs up to $10^7$ vertices. Here, we prove the equivalence of these two bounds in the general case.

We start with a preliminary result that allows us to compare the two bounds in the rank-metric case.

\begin{lemma}\label{l:DelOpt}\cite[Theorem 5.1.1]{W2022} The optimum of Delsarte's LP for codes with minimum distance $d$ in a rank-metric space~$\Fq^{n \times m}$ is given by $q^{m(n-d+1)}$.
\end{lemma}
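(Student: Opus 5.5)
Two inequalities: the Singleton bound of Theorem~\ref{thm:non-induced-singleton-td} (with $t=1$) gives the upper bound via an explicit dual feasible solution, and an MRD code supplies a primal feasible solution attaining it.

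\textbf{Lower bound.} I would exhibit a feasible point of~\eqref{eq:LPsrk} (with $t=1$) of value $q^{m(n-d+1)}$. Take a Gabidulin code $\mathcal G\subseteq\Fq^{n\times m}$, which is linear, has minimum distance $d$ and size $q^{m(n-d+1)}$ \cite{D1978,gabidulin1985theory}. Its distribution vector $a_j=\frac{1}{|\mathcal G|}|\{(X,Y)\in\mathcal G^2:\rk(X-Y)=j\}|$ satisfies:
\begin{itemize}
\item[(i)] $a_0=1$;
\item[(ii)] $a_j=0$ for $0<j<d$;
\item[(iii)] $\mathbf a\ge 0$;
\item[(iv)] $\mathbf aQ\ge0$.
\end{itemize}
Condition (iv) is Delsarte's general inequality for any subset of a symmetric association scheme: $(\mathbf aQ)_k=\frac{1}{|\mathcal G|}\,\mathbf 1_{\mathcal G}^{\top}E_k\mathbf 1_{\mathcal G}\cdot|X|$, where $E_k$ is a PSD primitive idempotent. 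Its objective value is $\sum_j a_j=|\mathcal G|=q^{m(n-d+1)}$.

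\textbf{Upper bound.} I would mimic Delsarte's proof of the Singleton bound by LP duality. The dual of~\eqref{eq:LPsrk} asks for a vector $\mathbf b\ge0$ with $b_0$ free such that the polynomial-type function $f(j)=\sum_k b_kQ_{jk}$ satisfies $f(0)=b_0+\cdots$, with $f(j)\le -1$ for $j\ge d$ (after normalization), bounding the value by $f(0)$. The cleanest choice of dual certificate is the one coming from the Delsarte dual code. Set $k=n-d+1$, and let $\mathbf b$ be proportional to the inner distribution of the annihilator (trace-dual) of an MRD code of minimum distance $d$. That dual code is itself MRD, with minimum distance $n-d+2$ and size $q^{m(d-1)}$ \cite{D1978}. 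Delsarte's inequality $|C|\,|C^\perp|\le q^{mn}$ for $C$ with distance $\ge d$ and a design of strength $d-1$ then gives $|C|\le q^{mn}/q^{m(d-1)}$.

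Concretely, I would define $b_k=a'_k/q^{m(d-1)}$ for $k\ge1$, where $\mathbf a'$ is the distribution vector of the dual MRD code; it vanishes for $1\le k\le n-d+1$. The key verification is that for $j\ge d$, $\sum_k a'_kQ_{jk}=0$. This holds because the $Q$-transform of the distribution of $C^\perp$ equals $|C^\perp|$ times the distribution of $C^{\perp\perp}=C$, which vanishes at those indices $j\ge d$ where needed. I would adjust this to $\le0$ using the MacWilliams identity and MRD weight distributions.

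\textbf{Main obstacle.} The hard step is getting the dual certificate's sign conditions exactly right with the indexing of $Q$ versus $P$ and the self-duality $Q=P$ of bilinear forms schemes. If this becomes messy, I would fall back to a simpler argument. For any feasible $\mathbf a$, consider $\mathbf aQ$ restricted to the coordinates $k\le d-1$. Using that the bilinear forms scheme is self-dual and that the "design" subcode structure yields the identity $\sum_{k\le d-1}(\mathbf aQ)_k\binom{n-k}{d-1-k}_q$-type, one obtains $\sum_j a_j\le q^{m(n-d+1)}$. This is the standard $q$-analogue of the Delsarte LP proof of the Singleton bound, which is what \cite[Theorem 5.1.1]{W2022} carries out.
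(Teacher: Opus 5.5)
\textbf{Verdict: the lower bound is correct, but the upper bound has a genuine gap.} The paper gives no proof of this lemma and only cites it, so I judge your argument on its own.

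\textbf{Lower bound.} The distribution vector of a Gabidulin code is feasible with value $q^{m(n-d+1)}$. This uses $m\ge n$.

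\textbf{Why your certificate fails.} The dual certificate built from the distribution $\mathbf a'$ of $C^\perp$ cannot work. MacWilliams gives $\mathbf a'Q=|C^\perp|\,\mathbf a(C)$. But $\mathbf a(C)$ vanishes on $1\le j\le d-1$, not on $j\ge d$, and it is positive at $j=d$. So you get the wrong sign exactly where you need $f(j)\le 0$.

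Structurally, suppose $\beta$, with $\beta_0=1$, $\beta_k\ge 0$ and $f(j)=\sum_k\beta_kQ_{jk}\le 0$ for $j\ge d$, certifies the bound $q^{m(n-d+1)}$. Plug in the MRD distribution $\mathbf a$. Then the chain $\sum_j a_j=(\mathbf aQ)_0\le\sum_k\beta_k(\mathbf aQ)_k=\sum_j a_jf(j)\le f(0)$ is tight, so $\beta_k(\mathbf aQ)_k=0$ for all $k\ge1$. Since $(\mathbf aQ)_k=|C|\,a'_k$ and $a'_{n-d+2}>0$, this forces $\beta_{n-d+2}=0$. Hence an optimal certificate is supported on $\{0,\dots,n-d+1\}$, which is exactly the complement of the support of your $\mathbf a'$.

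A concrete check: take $q=2$, $n=m=2$, $d=2$. Then $Q=\left(\begin{smallmatrix}1&9&6\\1&1&-2\\1&-3&2\end{smallmatrix}\right)$ and $\mathbf a'=(1,0,3)$. Your $f(2)=Q_{20}+3Q_{22}=7>0$, and also $(\mathbf a'Q)_2=12>0$.

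Your auxiliary claim that $|C|\,|D|\le q^{mn}$ for a code of distance $d$ and a design $D$ of strength $d-1$ is false: take $D=\F_q^{n\times m}$. What LP duality gives is the code--anticode bound, which needs $D$ to have diameter at most $d-1$.

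\textbf{The missing idea.} Build the certificate from an anticode rather than from $C^\perp$.
\begin{itemize}
\item Let $A$ be the set of $X\in\F_q^{n\times m}$ whose column space lies in a fixed $(d-1)$-dimensional subspace $U\le\F_q^n$. It is linear, has size $q^{m(d-1)}$ and diameter $d-1$. Let $\mathbf b$ be its distribution.
\item Put $\beta_k=(\mathbf bQ)_k/(|A|\,v_k)\ge 0$, where $v_k$ is the number of rank-$k$ matrices.
\item Self-duality ($Q_{jk}/v_k=Q_{kj}/v_j$ and $QQ=q^{mn}I$) gives $\beta_0=1$ and $f(j)=\sum_k\beta_kQ_{jk}=q^{mn}b_j/(|A|\,v_j)$. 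This vanishes for $j\ge d$, and $f(0)=q^{m(n-d+1)}$.
\item Explicitly, MacWilliams with $A^\perp=\{Y:\text{column space in }U^\perp\}$ gives $\beta_k=\binom{n-d+1}{k}_q/\binom{n}{k}_q$. This is supported on $k\le n-d+1$, as complementary slackness demands.
\end{itemize}
The chain above then yields $\sum_j a_j\le q^{m(n-d+1)}$ for every feasible $\mathbf a$.

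\textbf{Your fallback does not supply this.} It is only asserted as a ``type'' identity. It also pairs $\binom{n-k}{d-1-k}_q$ with $(\mathbf aQ)_k$. Up to the factor $q^{m(n-d+1)}/\binom{n}{d-1}_q$, these numbers are the values $f(k)$ and belong with $a_k$. The weights that multiply $(\mathbf aQ)_k$ are $\beta_k=\binom{n-k}{d-1}_q/\binom{n}{d-1}_q$. The substantive point is that these weights are nonnegative, and that is exactly what the anticode provides.
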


Another key result for this section is Lemma~\ref{l:FiolGen}, which is phrased in terms of a sum-rank-metric space and the corresponding graph. However, we note that $F(\theta_0,\dots,\theta_n)$ in Eq.\:\eqref{eq:MPbound} is given in general form and acts as an upper bound on the $(d-1)$-independence number of any graph with given $n+1$ distinct eigenvalues.

We now state the main result of this subsection.

\begin{theorem}\label{thm:equiv_rank}
    In a rank-metric space $\Fq^{n \times m}$, the optimum of Delsarte's LP for codes with minimum distance $d$ is equal to the optimum of the Ratio-type LP\:\eqref{eq:FiolLP}.
\end{theorem}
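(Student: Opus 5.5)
Our plan is to sandwich the Ratio-type LP value $\mathrm{RT}(q,(n),(m),d)$ between two quantities that both equal $q^{m(n-d+1)}$. The first ingredient is Lemma~\ref{l:DelOpt}, which gives $D(q,(n),(m),d)=q^{m(n-d+1)}$. It therefore suffices to show that $\mathrm{RT}(q,(n),(m),d)=q^{m(n-d+1)}$.

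\emph{Lower bound.} Since the LP~\eqref{eq:FiolLP} is equivalent to the minimum in Lemma~\ref{l:FiolGen}, its value is an upper bound on the $(d-1)$-independence number of $\Bil_q(n,m)$, i.e.\ on $A(q,(n),(m),d)$. Gabidulin (MRD) codes exist for all $1\le d\le n\le m$ and have size exactly $q^{m(n-d+1)}$. Hence $\mathrm{RT}\ge q^{m(n-d+1)}$. (One could instead invoke the general fact that the minor-polynomial bound dominates Delsarte's LP for distance-regular graphs, but the MRD argument is cleaner.)

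\emph{Upper bound.} We exhibit a feasible choice in Lemma~\ref{l:FiolGen} whose value is $q^{m(n-d+1)}$. The eigenvalues of $\Bil_q(n,m)$ are
\[
\theta_j=\frac{q^{m+n-j}-q^m-q^n+1}{q-1},\qquad j=0,\dots,n,
\]
with multiplicities
\[
m(\theta_j)=\binom{n}{j}_q\prod_{i=0}^{j-1}(q^m-q^i).
\]
We take $I=\{n-d+2,\dots,n\}$, the indices of the $d-1$ smallest eigenvalues. Writing $x=q^{-j}$, the eigenvalue $\theta_j$ is affine in $x$. Each factor is
\[
\frac{\theta_j-\theta_i}{\theta_0-\theta_i}=\frac{q^{-j}-q^{-i}}{1-q^{-i}},
\]
so the product over $i\in I$ becomes a $q$-analogue of Lagrange interpolation weights evaluated at $q^{-j}$. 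The bound thus reduces to the identity
\[
\sum_{j=0}^{n-d+1}\binom{n}{j}_q\prod_{i=0}^{j-1}(q^m-q^i)\prod_{i=n-d+2}^{n}\frac{q^{-j}-q^{-i}}{1-q^{-i}}=q^{m(n-d+1)}.
\]

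The main obstacle is verifying this $q$-binomial identity. We would handle it in one of two ways. The first is to recognise the sum as $\sum_j m(\theta_j)\,p(\theta_j)/p(\theta_0)$ for the polynomial $p$ vanishing on $\theta_i$, $i\in I$. This equals $|\Fq^{n\times m}|\,\frac{1}{p(\theta_0)}\,(p(A))_{XX}$. Expanding $p$ in the distance polynomials of the distance-regular graph, $p(A)_{XX}$ is the coefficient of $A_0$. One then checks, via the Delsarte dual (the $Q$-polynomial structure, where $\theta_j$ is affine in $q^{-j}$), that this reproduces the Singleton value $q^{mn}/q^{m(d-1)}$. The second, which we would try first, is induction on $d$. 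For $d=1$ the sum is $\sum_j m(\theta_j)=q^{mn}$. Adding a new smallest index to $I$ multiplies the terms by $(q^{-j}-q^{-i})/(1-q^{-i})$, and we would use the $q$-Pascal recurrence together with a $q$-Vandermonde / $q$-binomial theorem evaluation, $\sum_j \binom{n}{j}_q (-1)^j q^{\binom{j}{2}} q^{-jk}\cdots$, to show that each step divides the total by $q^m$.

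Combining the lower and upper bounds gives $\mathrm{RT}=q^{m(n-d+1)}=D(q,(n),(m),d)$, which proves Theorem~\ref{thm:equiv_rank}.
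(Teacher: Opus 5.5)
\textbf{Gap: the key identity is never proved.} Your overall strategy is the paper's. You sandwich the Ratio-type value: from below by an MRD code, since the Ratio-type bound dominates $A(q,(n),(m),d)$, and from above by Fiol's formula with $I=\{n-d+2,\dots,n\}$. Your reduction is correct: the eigenvalues, the multiplicities, and the factor $\frac{\theta_j-\theta_i}{\theta_0-\theta_i}=\frac{q^{i-j}-1}{q^i-1}$ all agree with the paper. Your explicit appeal to Gabidulin codes is cleaner than the paper's ``code corresponding to an optimal solution of Delsarte's LP'', which silently relies on the same existence result.

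The problem is that the argument stops exactly at the step you call the main obstacle: the displayed $q$-identity is never established. Route one rewrites the sum as $|\Fq^{n\times m}|\,(p(A))_{XX}/p(\theta_0)$. It then says ``one then checks, via the Delsarte dual, that this reproduces the Singleton value''. The value of that ratio is the whole content of the claim, so this is not an argument. Route two inducts on the \emph{total}. But the new factor $\frac{q^{-j}-q^{-i}}{1-q^{-i}}$ depends on $j$, so the previous total cannot determine the next one; you need control of the individual terms.

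\textbf{The missing idea: termwise telescoping.} No $q$-Vandermonde machinery is needed, because the interpolation weights cancel against the $q$-binomial coefficient term by term. For $0\le j\le n-d+1$,
\[
\binom{n}{j}_q\prod_{i=n-d+2}^{n}\frac{q^{i-j}-1}{q^{i}-1}
=\frac{\prod_{k=n-d+2-j}^{n}(q^k-1)}{\prod_{k=1}^{j}(q^k-1)\,\prod_{k=n-d+2}^{n}(q^k-1)}
=\binom{n-d+1}{j}_q .
\]
Hence your sum equals
\[
\sum_{j=0}^{n-d+1}\binom{n-d+1}{j}_q\prod_{i=0}^{j-1}(q^m-q^i).
\]
This counts all matrices in $\Fq^{(n-d+1)\times m}$ by rank, so it equals $q^{m(n-d+1)}$. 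Equivalently, it is the sum of the multiplicities of $\Bil_q(n-d+1,m)$, which is how the paper phrases it.

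In inductive form this is the one-step identity $\binom{N}{j}_q\frac{q^{N-j}-1}{q^N-1}=\binom{N-1}{j}_q$. It makes your second route immediate once the induction hypothesis is placed on each term rather than on the sum. With this step inserted, your proof is complete and coincides with the paper's.
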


\begin{proof} We apply Lemma~\ref{l:FiolGen} to the bilinear forms graph $\Bil_q(n,m)$. The eigenvalues $\theta_j$ and their multiplicities $m(\theta_j)$ are given by~\cite{BCNbookDRG}:
\begin{equation}\label{eq:Bil_evals}
\theta_j = \frac{(q^{n-j}-1)(q^m-q^j)-q^j+1}{q-1},\quad m(\theta_j)=\prod_{i=0}^{j-1}\frac{(q^{n-i}-1)(q^m-q^i)}{q^{i+1}-1}
\end{equation}
for $j\in\{0,\dots,n\}$. In particular, $F(\theta_0,\dots,\theta_n)$ is upper-bounded by
\begin{equation}\label{eq:MPref}
    F(\theta_0,\dots,\theta_n)\leq\sum\limits_{j=0}^{n-d+1} \left( \prod_{i=0}^{j-1}\frac{(q^{n-i}-1)(q^m-q^i)}{q^{i+1}-1} \cdot \prod\limits_{i=n-d+2}^n\frac{\theta_j-\theta_i}{\theta_0-\theta_i}\right),
\end{equation}
which is derived from Eq.\:\eqref{eq:MPbound} by setting $I=\{n-d+2,\dots,n\}$. Next, it is straightforward to show that
$$\frac{\theta_j-\theta_i}{\theta_0-\theta_i} = \frac{q^{i-j}-1}{q^i-1},$$ meaning that Eq.\:\eqref{eq:MPref} can be rewritten as
\begin{equation}\label{eq:MPref2}
    F(\theta_0,\dots,\theta_n)\leq\sum\limits_{j=0}^{n-d+1} \left( \prod_{i=0}^{j-1}\frac{(q^{n-i}-1)(q^m-q^i)}{q^{i+1}-1} \cdot \prod\limits_{i=n-d+2}^n\frac{q^{i-j}-1}{q^i-1}\right).
\end{equation}

We consider the product
{\footnotesize{
\begin{align*}
    &\prod_{i=0}^{j-1}(q^{n-i}-1) \cdot \prod\limits_{i=n-d+2}^n\frac{q^{i-j}-1}{q^i-1} = \\
    &= \frac{(q^n-1)(q^{n-1}-1)\cdots(q^{n-j+1}-1)\cdot (q^{n-j}-1)(q^{n-1-j}-1)\cdots(q^{n-d+2-j}-1)}{(q^n-1)(q^{n-1}-1)\cdots(q^{n-d+2}-1)} =\\
    &= (q^{n-d+1}-1)(q^{n-d}-1)\cdots(q^{n-d+2-j}-1) = \prod\limits_{i=0}^{j-1} (q^{n-d+1-i}-1).
\end{align*}
}}

Then, Eq.\:\eqref{eq:MPref2} is equivalent to
\begin{equation}\label{eq:MPref3}
    F(\theta_0,\dots,\theta_n)\leq\sum\limits_{j=0}^{n-d+1} \prod_{i=0}^{j-1}\frac{(q^{n-d+1-i}-1)(q^m-q^i)}{q^{i+1}-1}.    
\end{equation}
Note that the product in Eq.\:\eqref{eq:MPref3} is equal to the multiplicity of $j$-th largest distinct eigenvalue of the bilinear forms graph $\Bil_q(n-d+1,m)$. Hence, the right-hand side of Eq.\:\eqref{eq:MPref3} is exactly the number of vertices of $\Bil_q(n-d+1,m)$, which is equal to $q^{m(n-d+1)}$:
\begin{equation}\label{eq:MPfinal}
    F(\theta_0,\dots,\theta_n)\leq q^{m(n-d+1)}.
\end{equation}

Now, suppose $\mathcal{C}$ is a rank-metric code with minimum distance $d$ that corresponds to an optimal solution to Delsarte's LP. From Lemma~\ref{l:DelOpt}, we have $|\mathcal{C}|=q^{m(n-d+1)}$. Hence, Eq.\:\eqref{eq:MPbound} together with Eq.\:\eqref{eq:MPfinal} imply
$$q^{m(n-d+1)}=|\mathcal{C}|\leq F(\theta_0,\dots,\theta_n)\leq q^{m(n-d+1)},$$
meaning that $$F(\theta_0,\dots,\theta_n)= q^{m(n-d+1)}.$$
This means that the minimum in Eq.\:\eqref{eq:MPbound} is achieved by choosing $I=\{n-d+2,\dots,n\}$, and so in the rank metric both the optimum of Delsarte's LP and the optimum of the Ratio-type LP are equal to $q^{m(n-d+1)}$ and to each other.
\end{proof}

In particular, the proof of Theorem~\ref{thm:equiv_rank} implies that, for a bilinear forms graph $\Bil_q(n,m)$, the $(d-1)$-minor polynomial $f_{d-1}(x)$ defined in~\cite{F2020} takes the form
$$f_{d-1}(x)=\prod\limits_{i=n-d+2}^n\frac{x-\theta_i}{\theta_0-\theta_i}.$$

\subsection{Hamming case}

Recall that $H(t,q)$ denotes the Hamming graph, which is a particular case of the sum-rank-metric graph $\Gamma(\F_q^{\bf 1\times 1})$, where ${\bf1}=(1,\dots,1)$ is a $t$-tuple of ones. In other words, the vertex set of the graph $H(t,q)$ consists of all vectors of length $t$ over a finite field $\F_q$, and two vertices are adjacent if the Hamming distance between the two vectors is exactly $1$.

In \cite{F2020}, the equivalence of the Ratio-type LP bound and Delsarte's LP bound has been briefly discussed over a series of examples. In particular for the Hamming case, the Hamming graph $H(7,2)$ was considered in detail. It was shown that, for any choice of minimum distance $d$, the two LP bounds are equivalent. In this section, we build on this discussion and present a comparison of the two LP bounds, proving some new equivalences of the bounds under certain conditions.

We start with several preliminary lemmas.

\begin{lemma}\label{l:Ham_Del_leq_RT}
    In the Hamming metric space of vectors of length $t$ over a finite field $\F_q$, the optimum of Delsarte's LP for codes with minimum distance $d$ is at most the optimum of the Ratio-type LP bound.
\end{lemma}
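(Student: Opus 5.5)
We prove Lemma~\ref{l:Ham_Del_leq_RT} by weak duality: every feasible point of the Ratio-type LP~\eqref{eq:FiolLP} gives an upper bound on the optimum of Delsarte's LP~\eqref{eq:LPsrk} for $H(t,q)$. The Hamming graph is distance-regular, with distinct eigenvalues $\theta_k=(q-1)t-qk$ for $k\in\db{t}$ and multiplicities $m(\theta_k)=\binom{t}{k}(q-1)^k$. Its second eigenmatrix $Q$ has entries $Q_{jk}=K_k(j)$, the Krawtchouk polynomials, which coincides with $Q=Q^{(1)}\otimes\cdots\otimes Q^{(t)}$ after grouping the indices $\mathbf{j}$ by Hamming weight.

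First, we reinterpret a feasible solution $(x_0,\dots,x_t)$ of~\eqref{eq:FiolLP}. The divided-difference constraints for $s=d,\dots,t$ state exactly that the interpolating polynomial $f$ with $f(\theta_k)=x_k$ has degree at most $d-1$. Hence $p:=f$ is a polynomial of degree at most $d-1$ with $p(\theta_0)=1$ and $p(\theta_k)\ge 0$ for $k\ge1$, and the objective equals $\sum_k m(\theta_k)p(\theta_k)=\tr p(A)=q^t\,(p(A))_{XX}$.

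Next, we expand $p(A)$ in the Bose--Mesner algebra, $p(A)=\sum_{i=0}^{d-1}\beta_i A_i$. This uses that $A^i$ lies in the span of $A_0,\dots,A_i$ for a distance-regular graph, where $A_i$ is the distance-$i$ matrix. Equivalently, we write $p(\theta_k)=\sum_{i\le d-1}\beta_i P_{ki}$, where $P$ is the first eigenmatrix and $P_{ki}=K_i(k)$. Then $\beta_0=(p(A))_{XX}=q^{-t}\sum_k m(\theta_k)p(\theta_k)$.

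Now let $\mathbf a=(a_0,\dots,a_t)$ be feasible for Delsarte's LP. Define $b_k=(\mathbf aQ)_k\ge0$, and note $b_0=\sum_j a_j$, the objective. Using the relation $m(\theta_k)P_{ki}=|S_i|\,Q_{ik}$ (a standard identity between the two eigenmatrices) together with $a_i=0$ for $1\le i\le d-1$, we compute
\[
\sum_k b_k\,\frac{p(\theta_k)}{m(\theta_k)}\,m(\theta_k)
\]
in two ways. The cleanest route is the standard Delsarte dual argument. Consider the function $F(Y)=\sum_{X,X'\in C}(p(A))_{XX'}$ for a code $C$, or its LP analogue $\sum_j a_j\,(p(A))_{\text{dist }j}$. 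This equals $\beta_0 a_0=\beta_0$, since $a_0=1$ and $p(A)$ vanishes on distances $1,\dots,d-1$ by degree, and on distances $\ge d$ because $a_j$ only pairs with $A_j$ for $j\le d-1$. On the spectral side it equals $q^{-t}\sum_k b_k\,p(\theta_k)\ge q^{-t}b_0\,p(\theta_0)=q^{-t}b_0$, since $b_k,p(\theta_k)\ge0$ for $k\ge1$. Hence $b_0\le q^t\beta_0=\sum_k m(\theta_k)x_k$, which is the Ratio-type objective. Taking the optimum on both sides proves the lemma.

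The main obstacle is the bookkeeping in this spectral identity. We must verify $\sum_j a_j (p(A))_{\text{dist}\,j}=q^{-t}\sum_k (\mathbf aQ)_k\,p(\theta_k)$, which is the identity $A_j=q^{-t}\sum_k Q_{jk}E_k$ up to normalization, with $E_k$ the primitive idempotents. In particular we must make sure the normalization of $Q$ in~\eqref{eq:LPsrk} is $Q=P^{-1}q^t$, which is indeed the case for Krawtchouk matrices.
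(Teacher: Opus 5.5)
Your proposal is correct, but it argues differently from the paper. The paper never compares the two LPs directly. It sandwiches the Lov\'asz number between them, citing two results. Schrijver shows that Delsarte's LP is at most $\vartheta_{d-1}(\Gamma)$ whenever the relations of the scheme refine the distances of $\Gamma$. Fiol shows that $\vartheta_{d-1}(\Gamma)$ is at most the Ratio-type LP for walk-regular graphs.

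You instead use LP duality directly. A feasible point of \eqref{eq:FiolLP} is a polynomial $p$ of degree at most $d-1$ with $p(\theta_0)=1$ and $p(\theta_k)\ge 0$ for $k\ge 1$. Then $c_k:=p(\theta_k)$ is feasible for the dual of Delsarte's LP, because $\sum_k Q_{jk}p(\theta_k)=q^t\beta_j=0$ for $j\ge d$, and it has the same objective value $\sum_k m(\theta_k)p(\theta_k)$. In other words, the Ratio-type LP is Delsarte's dual LP with the constraints at distances $j\ge d$ tightened from $\le 0$ to $=0$.

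Your route is self-contained and explains exactly when the two optima coincide. In the notation of the proof of Theorem~\ref{thm:Ham_q=2_d=t-1}, this happens precisely when Delsarte's dual has an optimal solution $f$ with $f(j)=0$ for all $j\ge d$. What the paper's route buys is the intermediate quantity $\vartheta_{d-1}(H(t,q))$. That quantity is needed for the chain in Theorem~\ref{thm:HammingBigIneq}, and your argument does not produce it.

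A few statements in your write-up need correcting, though none affects the argument.

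\begin{itemize}
\item When you show $\sum_j a_j\beta_j=\beta_0$, the justifications are swapped. The terms with $1\le j\le d-1$ vanish because $a_j=0$ there. The terms with $j\ge d$ vanish because $\beta_j=0$ by the degree bound. The matrix $p(A)$ does not vanish at distances $1,\dots,d-1$.
\item The identity you need is $E_k=q^{-t}\sum_j Q_{jk}A_j$. Equivalently, $QP=q^tI$, which gives $\sum_k b_k p(\theta_k)=q^t\sum_j a_j\beta_j$ at once. The formula $A_j=q^{-t}\sum_k Q_{jk}E_k$ that you wrote is wrong, since $A_j=\sum_k P_{kj}E_k$.
\item The matrix $Q^{(1)}\otimes\cdots\otimes Q^{(t)}$ in \eqref{eq:LPsrk} is $2^t\times 2^t$ and is not literally the Krawtchouk matrix. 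You can either symmetrize a feasible $\mathbf a$ over coordinate permutations, or run your computation directly in the product scheme. There, $A$ acts on the idempotent indexed by $\mathbf k\in\{0,1\}^t$ with eigenvalue $\theta_w$, where $w$ is the weight of $\mathbf k$.
\end{itemize}
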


\begin{proof} This is observed by comparing the two bounds with the Lov\'asz theta-$(d-1)$ number $\vartheta_{d-1}(\Gamma)$, which is defined on a graph $\Gamma$ and can be seen as an upper bound on the maximal size of a code with minimum distance $d$; see Section~\ref{sssec:Lovasz}, as well as~\cite{Lovasz} and~\cite[Remark 23.1]{AKR2024} for more details.

Firstly, from~\cite{S79} we observe that, for any choice of $d$, the optimum of Delsarte's LP is less than or equal to the Lov\'asz theta-$(d-1)$ number $\vartheta_{d-1}(\Gamma)$ whenever the point set of the association scheme $\mathcal{A}$ on which Delsarte's LP is defined is exactly the vertex set of $\Gamma$, and the edge set of $\Gamma$ can be represented as a union of relation classes of $\mathcal{A}$. Since the Hamming scheme $\mathcal{A}$ is defined from distance relations of the Hamming graph $\Gamma$, this is trivially true in the case considered here. 

Secondly, in~\cite[Section 6]{F2020}, it is shown that, if the underlying graph is \emph{walk-regular}, the Lov\'asz theta-$(d-1)$ number is at most the optimum of the Ratio-type LP on codes with minimum distance $d$. A graph is \emph{walk-regular} if, for any positive integer $k$, the number of closed walks of length $k$ does not depend on the choice of the starting vertex. It is a well-known fact that Hamming graphs are walk-regular; for instance, one may recall that Hamming graphs are a special case of sum-rank-metric graphs, which are always walk-regular~\cite[Proposition 11]{AKR2024}.

By combining the two facts, the statement of the lemma follows.
\end{proof}

\begin{lemma}\label{l:Ham_RT_leq_S}
    In the Hamming metric space of vectors of length $t$ over a finite field $\F_q$, the optimum of the Ratio-type LP for codes with minimum distance $d$ is at most the Singleton bound, which equals $q^{t-d+1}$.
\end{lemma}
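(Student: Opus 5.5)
We follow the proof of Theorem~\ref{thm:equiv_rank} and exhibit a feasible choice of $I$ in Lemma~\ref{l:FiolGen} whose value equals $q^{t-d+1}$. The Hamming graph $H(t,q)$ has distinct eigenvalues $\theta_j=(q-1)t-qj$ for $j\in\{0,\dots,t\}$, with multiplicities $m(\theta_j)=\binom{t}{j}(q-1)^j$. We take $I=\{t-d+2,\dots,t\}$, the $d-1$ smallest eigenvalues, which mirrors the rank-metric case.

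The key simplification is that, for $j\le t-d+1$ and $i\in I$,
$$\frac{\theta_j-\theta_i}{\theta_0-\theta_i}=\frac{i-j}{i}.$$
This follows because the eigenvalues are an arithmetic progression. The product over $i\in I$ is then
$$\prod_{i=t-d+2}^{t}\frac{i-j}{i}=\frac{(t-j)!\,(t-d+1)!}{t!\,(t-d+1-j)!}.$$
Multiplying by $\binom{t}{j}$ gives
$$\binom{t}{j}\frac{(t-j)!\,(t-d+1)!}{t!\,(t-d+1-j)!}=\binom{t-d+1}{j}.$$
For $j\ge t-d+2$ and $j\notin I$ there is nothing left to handle, since every such $j$ lies in $I$. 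The value of $F$ at this $I$ is therefore at most
$$\sum_{j=0}^{t-d+1}\binom{t-d+1}{j}(q-1)^j=q^{t-d+1}.$$
Equivalently, this sum counts the vertices of $H(t-d+1,q)$.

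By Lemma~\ref{l:FiolGen}, the minimum over $I$ is at most this value. Since the LP~\eqref{eq:FiolLP} is equivalent to~\eqref{eq:MPbound}, the Ratio-type LP optimum is at most $q^{t-d+1}$. This is the Singleton bound of Theorem~\ref{thm:non-induced-singleton-td} with all $n_i=m_i=1$: there $j-1=d-1$ and $\delta=0$, so the bound is $q^{t-d+1}$.

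The one point that needs care is checking that every term of the sum in~\eqref{eq:MPbound} is nonnegative and that only the indices $j\le t-d+1$ contribute. Nonnegativity holds because $i>j$ for all $i\in I$. We also need $d\le t$, so that $I\subseteq\{1,\dots,t\}$ has cardinality $d-1$. For $d=t+1$ the bound is $1$, which can be handled trivially.
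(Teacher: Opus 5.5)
Your proof is correct and follows the paper's argument essentially verbatim: the same choice $I=\{t-d+2,\dots,t\}$ in Lemma~\ref{l:FiolGen}, the same simplification $\frac{\theta_j-\theta_i}{\theta_0-\theta_i}=\frac{i-j}{i}$, and the same reduction to $\sum_{j=0}^{t-d+1}{t-d+1\choose j}(q-1)^j=q^{t-d+1}$. Your explicit check that the corresponding polynomial is nonnegative on the spectrum, which is what makes it a feasible point of the LP~\eqref{eq:FiolLP}, is a worthwhile detail that the paper leaves implicit; note also that $d=t+1$ needs no separate treatment, since then $I=\{1,\dots,t\}$ still has cardinality $d-1$.
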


\begin{proof}
    The method of the proof is similar to one presented in Theorem~\ref{thm:equiv_rank} for the rank-metric case. Namely, we apply Lemma~\ref{l:FiolGen} to Hamming graphs.

    The eigenvalues $\theta_j$ and their multiplicities $m(\theta_j)$ are given by~\cite[Theorem 9.2.1]{BCNbookDRG}:
\begin{equation*}
\theta_j = q(t-j)-t,\quad m(\theta_j)={t \choose j}(q-1)^j
\end{equation*}
for $j\in\{0,\dots,t\}$. In particular, by setting $I=\{t-d+2,\dots,t\}$, we observe from Eq.\:\eqref{eq:MPbound} that $F(\theta_0,\dots,\theta_t)$ is upper-bounded by
\begin{align*}
    F(\theta_0,\dots,\theta_t) &\leq\sum\limits_{j=0}^{t-d+1} \left( {t \choose j}(q-1)^j \cdot \prod\limits_{i=t-d+2}^t\frac{q(t-j)-t-q(t-i)+t}{q(t-0)-t-q(t-i)+t}\right) \\
    &= \sum\limits_{j=0}^{t-d+1} \left( {t \choose j}(q-1)^j \cdot \prod\limits_{i=t-d+2}^t\frac{i-j}{i}\right) \\
    &= \sum\limits_{j=0}^{t-d+1} \left( {t \choose j}(q-1)^j \cdot \frac{(t-d+1)!\cdot(t-j)!}{t!\cdot(t-d+1-j)!} \right) \\
    &=\sum\limits_{j=0}^{t-d+1} {t-d+1 \choose j}(q-1)^j.
\end{align*}

Note that the term ${t-d+1 \choose j}(q-1)^j$ is equal to the multiplicity of $j$-th largest distinct eigenvalue of the Hamming graph $H(t-d+1,q)$. Hence, $\sum\limits_{j=0}^{t-d+1} {t-d+1 \choose j}(q-1)^j$ is exactly the number of vectors in $H(t-d+1,q)$, which is equal to $q^{t-d+1}$. Thus,
$$F(\theta_0,\dots,\theta_n)\leq q^{t-d+1},$$
which is exactly the inequality we aimed to show.
\end{proof}

Combining Lemmas~\ref{l:Ham_Del_leq_RT} and~\ref{l:Ham_RT_leq_S}, we  derive the following result.

\begin{theorem}\label{thm:HammingBigIneq}
    For a Hamming space~$\F_q^{\bf 1\times1}$ on vectors of length~$t$ over a finite field~$\Fq$, we have
$$D(q,{\bf 1},{\bf 1},d)\leq \vartheta_{d-1}(H(t,q))\leq \mathrm{RT}(q,{\bf 1},{\bf 1},d)\leq q^{t-d+1},$$
where ${\bf 1}=(1,\dots,1)$ is a $t$-tuple of ones, while $D(q,{\bf 1},{\bf 1},d)$ and $\mathrm{RT}(q,{\bf 1},{\bf 1},d)$ denote the optima of Delsarte's LP and Ratio-type LP for codes with minimum distance $d$ in the Hamming space $\F_q^{\bf 1\times1}$, respectively.
\end{theorem}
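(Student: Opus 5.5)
The chain of inequalities in Theorem~\ref{thm:HammingBigIneq} splits into three links, and I will establish each one separately by invoking the results already proved. The only new work is to check that the intermediate quantity $\vartheta_{d-1}(H(t,q))$ sits exactly where the proof of Lemma~\ref{l:Ham_Del_leq_RT} implicitly places it.

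For the first inequality $D(q,{\bf 1},{\bf 1},d)\leq \vartheta_{d-1}(H(t,q))$, I will use the result of~\cite{S79}, recalled in Section~\ref{sssec:Lovasz}. The sum-rank-metric scheme with $\bfn=\bfm={\bf 1}$ has relations $R_{\bf j}$ indexed by ${\bf j}\in\{0,1\}^t$. These relations refine the Hamming distance relations. The edge set of $H(t,q)^{d-1}$ consists of pairs at Hamming distance between $1$ and $d-1$, so it is the union of the classes $R_{\bf j}$ with $0<\sum_i j_i\le d-1$. This is precisely the hypothesis under which Schrijver's comparison applies, which gives the first bound. If one prefers to use the classical Hamming scheme instead, the same argument works. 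In that case one should note that the LP~\eqref{eq:LPsrk} with $Q=Q^{(1)}\otimes\cdots\otimes Q^{(1)}$ has the same optimum as the classical Delsarte LP, obtained by symmetrizing any feasible ${\bf a}$ over coordinate permutations.

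The second inequality $\vartheta_{d-1}(H(t,q))\le \mathrm{RT}(q,{\bf 1},{\bf 1},d)$ is the second half of the proof of Lemma~\ref{l:Ham_Del_leq_RT}. By~\cite[Section 6]{F2020}, for a walk-regular graph the Lov\'asz theta-$(d-1)$ number is bounded by the optimum of Fiol's LP~\eqref{eq:FiolLP}. Hamming graphs are walk-regular, being sum-rank-metric graphs~\cite[Proposition 11]{AKR2024}, so the inequality follows.

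The third inequality $\mathrm{RT}(q,{\bf 1},{\bf 1},d)\le q^{t-d+1}$ is exactly Lemma~\ref{l:Ham_RT_leq_S}. There we exhibited the feasible choice $I=\{t-d+2,\dots,t\}$ in Lemma~\ref{l:FiolGen}, whose value telescopes to $|H(t-d+1,q)|=q^{t-d+1}$. Chaining the three inequalities yields the theorem.

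The main point requiring care is the first step. There I must make sure the scheme used to define $D(q,{\bf 1},{\bf 1},d)$ is compatible with the graph power $H(t,q)^{d-1}$, in the sense that its edges are a union of relation classes. The refinement argument above handles this.
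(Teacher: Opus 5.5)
Your proposal is correct and follows the paper's proof. The paper obtains the theorem by combining Lemma~\ref{l:Ham_Del_leq_RT}, whose proof consists of exactly your two steps through $\vartheta_{d-1}$ via \cite{S79} and \cite[Section 6]{F2020}, with Lemma~\ref{l:Ham_RT_leq_S}. Your check that the edge set of $H(t,q)^{d-1}$ is a union of the finer classes $R_{\bf j}$, ${\bf j}\in\{0,1\}^t$, of the sum-rank scheme adds care that the paper skips, since it refers directly to the Hamming scheme.
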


Theorem~\ref{thm:HammingBigIneq} implies that the question of the equivalence between Delsarte's LP and the Ratio-type LP bounds is closely related to the question of finding optimum of Delsarte's LP in a Hamming space. In regards to the latter, the following result is known:

\begin{lemma}\label{l:DelOptHam}\cite{DelsarteLP}, \cite[Theorem 3.1(b)]{SW2025}
    In the Hamming metric space of vectors of length $t$ over a finite field $\F_q$, the optimum of Delsarte's LP for codes with minimum distance $d$ is equal to $q^{t-d+1}$ if $q\geq\max\{d,t-d+2\}$.
\end{lemma}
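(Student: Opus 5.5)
The plan is to sandwich Delsarte's optimum between two equal quantities. The upper bound comes for free from Theorem~\ref{thm:HammingBigIneq}, which gives $D(q,{\bf 1},{\bf 1},d)\leq \mathrm{RT}(q,{\bf 1},{\bf 1},d)\leq q^{t-d+1}$ for every $q$ and $d$. It therefore remains to show that, under $q\geq\max\{d,t-d+2\}$, the LP~\eqref{eq:LPsrk} (with all blocks $1\times 1$, so $Q$ is the Krawtchouk matrix of $H(t,q)$) has a feasible vector whose coordinates sum to $q^{t-d+1}$.

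I would not rely on actual MDS codes, since for $t>q+1$ they may not exist. Instead I would take the \emph{formal} MDS weight distribution with $k=t-d+1$:
\[
a_0=1,\qquad a_w=\binom{t}{w}\sum_{j=0}^{w-d}(-1)^j\binom{w}{j}\bigl(q^{w-d+1-j}-1\bigr),\quad d\leq w\leq t,
\]
and $a_w=0$ for $0<w<d$. The standard derivation of this formula goes through the MacWilliams identities: it is the unique vector with $a_w=0$ for $0<w<d$ whose MacWilliams transform $\mathbf{a}Q$ vanishes in positions $1,\dots,t-d+1$. Inverting these identities then shows two things.
\begin{itemize}
\item The sum $\sum_w a_w$ equals $(\mathbf{a}Q)_0=q^{k}=q^{t-d+1}$.
\item $\mathbf{a}Q=q^{k}\,\mathbf{b}$, where $\mathbf{b}$ is the formal MDS distribution with parameters $(t,\,t-k,\,k+1)$, i.e.\ with minimum distance $t-d+2$.
\end{itemize}
This step is pure algebra (binomial inversion) and holds for every $q$. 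Hence feasibility reduces to nonnegativity of two formal MDS distributions: $\mathbf{a}$, with minimum distance $d$, and $\mathbf{b}$, with minimum distance $t-d+2$.

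Nonnegativity is the main obstacle, and it is where the hypothesis enters, symmetrically: $q\geq d$ should serve $\mathbf{a}$ and $q\geq t-d+2$ should serve $\mathbf{b}$. I would first rewrite the weights in the form
\[
a_w=\binom{t}{w}(q-1)\sum_{j=0}^{w-d}(-1)^j\binom{w-1}{j}q^{w-d-j}.
\]
This exhibits the inner sum as a truncation, at the first $w-d+1$ terms, of the expansion of $q^{1-d}(q-1)^{w-1}$. The first two weights, $a_d=\binom{t}{d}(q-1)$ and $a_{d+1}=\binom{t}{d+1}(q-1)(q-d)$, show that $q\geq d$ is exactly what is needed at the start.

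For general $w$, I would first try to prove nonnegativity by induction on $w$. The idea is to write the truncated sum $S_w$ via a recurrence obtained from $\binom{w-1}{j}=\binom{w-2}{j}+\binom{w-2}{j-1}$; one obtains, for instance, $S_w=(q-1)S_{w-1}+(\text{boundary term})$, with the sign of the boundary term controlled by $q\geq d$. If that fails, my fallback would be the classical argument that the formal weights of an $[t,k]$ MDS-type distribution are nonnegative whenever $q\geq k+1$ or $q\geq t-k+1$, which are precisely our two conditions. That argument works by exhibiting $a_w$ as a count of nonnegative quantities: the number of weight-$w$ vectors in a suitable "punctured" configuration, or equivalently as a positive combination of the terms $(q^{s}-1)$ via inclusion–exclusion on supports.

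The same argument applied with $d$ replaced by $t-d+2$ gives $\mathbf{b}\geq 0$. Together with the upper bound from Theorem~\ref{thm:HammingBigIneq}, this yields that Delsarte's optimum equals $q^{t-d+1}$.
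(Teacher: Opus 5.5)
The paper does not prove this lemma; it quotes it from the literature. So the comparison is with the standard argument, and your outline follows it:
\begin{itemize}
\item the upper bound $q^{t-d+1}$ comes from Theorem~\ref{thm:HammingBigIneq};
\item for the lower bound you take the formal MDS distribution $\mathbf a$ with $\mathbf aQ=q^{t-d+1}\mathbf b$, which reduces feasibility to $\mathbf a\geq 0$ and $\mathbf b\geq 0$.
\end{itemize}
One minor point: with $1\times 1$ blocks, the LP~\eqref{eq:LPsrk} is indexed by $\{0,1\}^t$, not by weights. Spreading each weight class uniformly over the $\binom{t}{w}$ patterns of weight $w$ gives a feasible point with the same objective, so this identification is harmless.

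The genuine gap is the nonnegativity step. You correctly single it out as the crux, and it is the only place where $q\geq\max\{d,t-d+2\}$ is used. Neither of your two routes establishes it.

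\textbf{The induction.} Your Pascal recurrence is correct, and it reads
\[
S_w=(q-1)S_{w-1}+(-1)^{w-d}\binom{w-2}{d-2},\qquad S_d=1.
\]
So the sign of the boundary term is $(-1)^{w-d}$, which does not depend on $q$. When $w-d$ is odd, knowing $S_{w-1}\geq 0$ gives nothing; you need $(q-1)S_{w-1}\geq\binom{w-2}{d-2}$. This is not a technicality: for $d=q=2$ one has $S_w=(1+(-1)^w)/2$, so every odd step is tight.

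\textbf{The fallback.} It is not usable as written, for three reasons.
\begin{itemize}
\item The criterion you quote (``nonnegative whenever $q\geq k+1$ or $q\geq t-k+1$'') is false. For $\mathbf a$ with $k\geq 2$, the condition $q\geq k+1$ alone allows $a_{d+1}=\binom{t}{d+1}(q-1)(q-d)<0$. Each of $\mathbf a$ and $\mathbf b$ needs its own condition.
\item Reading $a_w$ as a count of words in a shortened or punctured code presupposes actual MDS codes. But the hypothesis allows $t$ up to $q+d-2$, beyond the lengths $t\leq q+1$ where (extended) Reed--Solomon codes exist; for example $q=d=5$, $t=8$. This is exactly the regime that made you switch to formal distributions.
\item The inclusion--exclusion formula is an alternating combination of the $q^{s}-1$, not a positive one.
\end{itemize}

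\textbf{The missing idea.} Unroll the recurrence; equivalently, expand $q^{s-j}=((q-1)+1)^{s-j}$ and apply Vandermonde. With $s=w-d$ this gives
\[
S_w=\sum_{i=0}^{s}(-1)^i c_i,\qquad c_i=\binom{d-2+i}{i}(q-1)^{s-i}.
\]
For $d\geq 2$ one has $c_{i+1}/c_i=\frac{d-1+i}{(i+1)(q-1)}\leq\frac{d-1}{q-1}$. So the terms are non-increasing as soon as $q\geq d$, and grouping $(c_0-c_1)+(c_2-c_3)+\cdots$ gives $S_w\geq 0$. The case $d=1$ is trivial.

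Equivalently, your induction closes if you carry the stronger hypothesis $S_{d+s}\geq\binom{d-2+s}{s}$ for even $s$. Applying this with minimum distance $d$ for $\mathbf a$ and $t-d+2$ for $\mathbf b$ completes the proof.
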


The above lemma has the following immediate consequence:

\begin{corollary}
    In the Hamming metric space of vectors of length $t$ over a finite field $\F_q$, the optimum of Delsarte's LP for codes with minimum distance $d$ is equal to the optimum of the Ratio-type LP if $q\geq\max\{d,t-d+2\}$.
\end{corollary}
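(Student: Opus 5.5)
The plan is a sandwich argument that uses the chain of inequalities in Theorem~\ref{thm:HammingBigIneq} together with Lemma~\ref{l:DelOptHam}. Throughout, we work in the Hamming space $\F_q^{\bf 1\times 1}$ of length $t$ and fix $d$ with $q\geq\max\{d,t-d+2\}$.

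First, Theorem~\ref{thm:HammingBigIneq} gives, for every $d$,
\[
D(q,{\bf 1},{\bf 1},d)\leq \vartheta_{d-1}(H(t,q))\leq \mathrm{RT}(q,{\bf 1},{\bf 1},d)\leq q^{t-d+1}.
\]
The first two inequalities come from Lemma~\ref{l:Ham_Del_leq_RT}, via Schrijver's comparison with the Lov\'asz theta number and Fiol's comparison for walk-regular graphs. The last inequality is Lemma~\ref{l:Ham_RT_leq_S}, obtained by the choice $I=\{t-d+2,\dots,t\}$ in Lemma~\ref{l:FiolGen}.

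Second, under the hypothesis $q\geq\max\{d,t-d+2\}$, Lemma~\ref{l:DelOptHam} states that $D(q,{\bf 1},{\bf 1},d)=q^{t-d+1}$. So the leftmost and rightmost terms of the chain coincide. All intermediate quantities are therefore squeezed, which yields $\mathrm{RT}(q,{\bf 1},{\bf 1},d)=q^{t-d+1}=D(q,{\bf 1},{\bf 1},d)$. As a byproduct, the Lov\'asz theta-$(d-1)$ number of $H(t,q)$ also equals $q^{t-d+1}$.

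There is no genuine obstacle here, since all the work is contained in the cited results. The only point to check is consistency of conventions: the Ratio-type LP~\eqref{eq:FiolLP} must be the same optimum $\mathrm{RT}$ appearing in Lemma~\ref{l:FiolGen}, which holds by the stated equivalence of~\eqref{eq:MPbound} and~\eqref{eq:FiolLP}. Also, the ``Delsarte LP'' in Lemma~\ref{l:DelOptHam} must be the LP~\eqref{eq:LPsrk} specialized to $\bfn=\bfm={\bf 1}$. This holds because in that case each $Q^{(i)}$ is the $2\times 2$ eigenmatrix of the complete graph $K_q$, and their Kronecker product is the eigenmatrix of the Hamming scheme up to merging relations by weight. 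The merging is harmless because the Hamming LP optimum is unchanged under symmetrization over coordinate permutations. I would note this symmetrization remark briefly and then conclude.
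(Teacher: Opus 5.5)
Your proposal is correct and follows the paper's own argument: the corollary is stated as an immediate consequence of Lemma~\ref{l:DelOptHam}. It is obtained by squeezing the chain $D(q,{\bf 1},{\bf 1},d)\leq \vartheta_{d-1}(H(t,q))\leq \mathrm{RT}(q,{\bf 1},{\bf 1},d)\leq q^{t-d+1}$ of Theorem~\ref{thm:HammingBigIneq} between its equal endpoints. Your additional check, that the LP~\eqref{eq:LPsrk} with $\bfn=\bfm={\bf 1}$ has the same optimum as the classical Hamming-scheme LP after merging relations by weight, is left implicit in the paper, and your symmetrization argument for it is sound.
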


No general closed formula is known for the optimum of Delsarte’s LP in the Hamming scheme. Lemma~\ref{l:DelOptHam} gives one important family of parameters for which we know the optimum. Naturally, any progress on the question of deriving closed expressions for the optimum of Delsarte’s LP gives immediate results for both the Ratio-type LP as well as the value of the Lov\'asz theta number in powers of Hamming graphs.

We note that the equality between the optima of Delsarte's and the Ratio-type LPs does not necessarily imply the optima reach the Singleton bound~$q^{t-d+1}$. The following theorem gives a family of examples that demonstrate this.

\begin{theorem}\label{thm:Ham_q=2_d=t-1}
    In the Hamming metric space of vectors of length $t\geq2$ over the binary field $\F_2$, the optima of Delsarte's and the Ratio-type LPs for codes with minimum distance $t-1$ are equal to 
    $$c(t)=\begin{cases}
        2+\frac2{t-1}, & t \text{ is even,}\\
        2+\frac2{t-2}, & t \text{ is odd.}
    \end{cases}$$
\end{theorem}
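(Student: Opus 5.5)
Since Theorem~\ref{thm:HammingBigIneq} gives $D(q,\mathbf 1,\mathbf 1,d)\le \mathrm{RT}(q,\mathbf 1,\mathbf 1,d)$, the plan is to squeeze both values to $c(t)$. We compute an upper bound on $\mathrm{RT}$ by choosing a specific index set $I$ in Lemma~\ref{l:FiolGen}. We then exhibit a feasible solution of Delsarte's LP\:\eqref{eq:LPsrk} with objective value $c(t)$. Throughout, $d=t-1$, $q=2$, the eigenvalues of $H(t,2)$ are $\theta_j=t-2j$ with multiplicities $\binom tj$, and $|I|=d-1=t-2$. Hence $I$ omits exactly two indices $\{a,b\}\subseteq\{1,\dots,t\}$.

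For the Ratio-type upper bound, Eq.\:\eqref{eq:MPbound} with $I=\{1,\dots,t\}\setminus\{a,b\}$ gives the value
\[
1+\binom ta\prod_{i\in I}\frac{i-a}{i}+\binom tb\prod_{i\in I}\frac{i-b}{i},
\]
since $\frac{\theta_j-\theta_i}{\theta_0-\theta_i}=\frac{i-j}{i}$. For $j=a$ the product equals $\prod_{i\ne a,b}(i-a)\big/\bigl(t!/(ab)\bigr)$, which simplifies to a signed ratio of factorials of the shape $\pm\frac{ab}{t!}(a-1)!(t-a)!/(b-a)$. So the term is $\pm\frac{b}{b-a}$ (up to sign), and the total is $1+\frac{b}{b-a}-\frac{a}{b-a}$-type expressions. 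This can only beat $2$ when the signs are favourable, i.e. when $\prod_{i\in I}(i-a)\ge0$ and $\prod_{i\in I}(i-b)\ge 0$. Nonnegativity forces the number of indices below $a$ (respectively between $a$ and $b$) to be even.

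I will take $b=t$ and $a=t-1$ when $t$ is even, and $a=t-2$ when $t$ is odd, matching the parity requirement. The exact evaluation should give $2+\frac{2}{t-1}$ and $2+\frac{2}{t-2}$ respectively. The degenerate cases $t=2,3$ need a separate check, since $d=t-1$ is tiny there, e.g.\ $d=1$ gives $q^t$.

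For the matching lower bound on Delsarte's LP, constraints force $a_j=0$ for $0<j<t-1$, leaving variables $a_{t-1},a_t\ge0$. The constraints $\mathbf aQ\ge0$ read
\[
\binom tk+a_{t-1}K_k(t-1)+a_t K_k(t)\ge0\quad\text{for all }k,
\]
with Krawtchouk values $K_k(t)=(-1)^k\binom tk$ and $K_k(t-1)=(-1)^k\bigl(\binom{t-1}{k}-\binom{t-1}{k-1}\bigr)$. This is a 2-variable LP. For $t$ even, I expect the optimum at $a_t=1$, giving $(-1)^k\binom tk\cdot$ cancellation at odd $k$, and $a_{t-1}=\frac{2}{t-1}$; the binding constraint should come from $k=1$ or $k=t-1$. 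For $t$ odd, the analogous vertex should have $a_{t-1}=\frac{2}{t-2}$ and $a_t=1$, or a nearby combination. I will verify feasibility for all $k$ by dividing through by $\binom tk$, which gives linear-in-$k$ inequalities using $\binom{t-1}{k}-\binom{t-1}{k-1}=\binom tk\frac{t-2k}{t}$. Then
\[
1+(-1)^k\Bigl(a_{t-1}\frac{t-2k}{t}+a_t\Bigr)\ge0,
\]
which is easy to check at the extreme $k$.

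The main obstacle is the sign and parity bookkeeping in the Ratio-type product and confirming the Delsarte vertex is feasible for all $k$. However, the reduction to $1+(-1)^k(\cdot)\ge0$ makes the latter routine. Equality of both optima to $c(t)$ then follows from $c(t)\le D\le\mathrm{RT}\le c(t)$.
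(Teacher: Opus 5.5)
Your overall strategy is the paper's: squeeze $c(t)\le D\le\mathrm{RT}\le c(t)$ using Theorem~\ref{thm:HammingBigIneq}, an explicit $I$ in Lemma~\ref{l:FiolGen}, and an explicit feasible point of the two-variable Delsarte LP. (The paper's ``dual'' LP in $y_1,y_2$ is exactly your LP with $y_1=a_t$, $y_2=a_{t-1}$.) However, both witnesses you propose are wrong, and they are the entire content of the proof.

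\textbf{Ratio-type side.} Finishing your factorial computation, for $I=[t]\setminus\{a,b\}$ with $a<b$ the nontrivial terms are $\frac{(-1)^{a-1}b}{b-a}$ (at $j=a$) and $\frac{(-1)^{b}a}{b-a}$ (at $j=b$). So in the admissible case ($a$ odd, $b$ even) the value is $1+\frac{a+b}{b-a}=2+\frac{2a}{b-a}$, not $1+\frac{b-a}{b-a}=2$ as your sketch suggests. This is minimized by taking $a=1$ and $b$ the largest even index, i.e.\ $I=\{2,\dots,t-1\}$ for even $t$ and $I=\{2,\dots,t-2,t\}$ for odd $t$; this is what the paper uses. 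Your choice $\{a,b\}=\{t-1,t\}$ for even $t$ gives $2t$ (for $t=4$: $1+4+3=8$ instead of $8/3$). Your choice $\{t-2,t\}$ for odd $t$ has $b=t$ odd, contradicting your own parity condition. The $j=t$ term is then $-\frac{t-2}{2}<0$, and the resulting value $2$ is not a valid bound at all: for $t=3$, $d=2$ the even-weight code has $4$ words.

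\textbf{Delsarte side.} Your reduction of $\mathbf{a}Q\geq\mathbf{0}$ to $1+(-1)^k\bigl(a_{t-1}\frac{t-2k}{t}+a_t\bigr)\ge 0$ is correct. But at $k=1$ it says $a_{t-1}\frac{t-2}{t}+a_t\le 1$, so $a_t=1$ forces $a_{t-1}=0$ whenever $t\ge 3$. Hence for $t\geq 3$ your points $(a_{t-1},a_t)=(\frac{2}{t-1},1)$ and $(\frac{2}{t-2},1)$ are infeasible; for $t=4$ the $k=1$ constraint evaluates to $-\frac13$.

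The binding constraints are $k=1$ and the largest even $k$ ($k=t$, resp.\ $k=t-1$). Solving them gives $(a_{t-1},a_t)=(\frac{t}{t-1},\frac{1}{t-1})$ for even $t$ and $(\frac{t}{t-2},0)$ for odd $t$, both with objective $c(t)$. Feasibility for all $k$ then reduces to $(-1)^{k+1}\frac{t-2k+1}{t-1}\le 1$, resp.\ $(-1)^{k+1}\frac{t-2k}{t-2}\le 1$. This means $k\ge 1$ for odd $k$, and $k\le t$, resp.\ $k\le t-1$, for even $k$, which always holds. These are exactly the paper's $y_2$ and $y_1$. With these two corrections your argument becomes the paper's proof.
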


Theorem~\ref{thm:Ham_q=2_d=t-1} implies that both LP bounds estimate the size of the code in question to be $2$ for any $t\geq4$, while the Singleton bound returns $4$.

\begin{proof}
    We first use Lemma~\ref{l:FiolGen} to show that the optimum of the Ratio-type LP is upper-bounded by $c(t)$. Then, we will show that the optimum of Delsarte's LP is lower-bounded by $c(t)$, which forces the equality.

    We prove the case when $t$ is even; the case when $t$ is odd can be shown analogously. By taking Eq.\:\eqref{eq:MPbound} and setting $I=\{2,\dots,t-1\}$ we obtain
    \begin{align*}
    F(\theta_0,\dots,\theta_t) &\leq \sum\limits_{j=0,1,t} \left( {t \choose j}(q-1)^j \cdot \prod\limits_{i=2}^{t-1}\frac{i-j}{i}\right) \\
    &= \sum\limits_{j=0,1,t} \left( {t \choose j}\cdot \prod\limits_{i=2}^{t-1}\frac{i-j}{i}\right)  \\
    &={t \choose 0}\cdot \prod\limits_{i=2}^{t-1}\frac{i-0}{i} +{t \choose 1}\cdot \prod\limits_{i=2}^{t-1}\frac{i-1}{i}+{t \choose t}\cdot \prod\limits_{i=2}^{t-1}\frac{i-t}{i} \\
    &=1+t\frac{(t-2)!}{(t-1)!}+(-1)^{t-2}\frac{(t-2)!}{(t-1)!} = 2+\frac2{t-1},\label{eq:example_RTub}
\end{align*}
where we use that $q=2$ and $t$ is even.

For the case when $t$ is odd, we obtain the bound on $F(\theta_0,\dots,\theta_t)$ by taking $I=\{2,\dots,t-2,t\}$.

To prove $c(t)$ is an upper bound on the optimum of Delsarte's LP, we use its representation through Krawtchouk polynomials; see e.g.~\cite{BD2021} for an overview. Namely, the optimum of Delsarte's LP is given by $f(0)$. Here, $f(z)$ is a real polynomial defined by $$f(z) = \sum_{i=0}^t f_i K_i^{(t,q)}(z),$$ where $K_i^{(n,q)}(z)$ are the Krawtchouk polynomials, and the coefficients $f_i$ are such that $f_0=1$, $f_i\geq 0$ for $i\in[t]$, while $f(0)>0$ and $f(i)\leq 0$ for $i\in\{d,\dots,t\}$. In our case $d=t-1$, and the latter condition reduces to $f(t-1)\leq0$ and $f(t)\leq0$. Additionally, for the binary Krawtchouk polynomials we have
$$K_i^{(t,2)}(0)={t \choose i}; \quad K_i^{(t,2)}(t)=(-1)^i{t \choose i}; \quad K_i^{(t,2)}(t-1)=(-1)^i\frac{t-2i}t{t \choose i},$$
which means
\begin{align*}
    f(0) &= \sum\limits_{i=0}^t f_i {t\choose i}, \\
    f(t-1) &= \sum\limits_{i=0}^t (-1)^i\frac{t-2i}t f_i {t\choose i}\leq 0, \\
    f(t) &= \sum\limits_{i=0}^t (-1)^i f_i {t\choose i} \leq 0.
\end{align*}
Let $x_i:= f_i {t\choose i}, i\in[t]$ be the variables of Delsarte's LP for this case (recall that $f_0{t\choose 0}=1$):
\begin{equation*}
    \boxed{
\begin{array}{ll@{}l}
\text{minimize}  &\sum\limits_{i=1}^t x_i\\
\text{subject to} &\sum\limits_{i=1}^t (-1)^{i+1}\frac{t-2i}t x_i\geq 1,&\\
&\sum\limits_{i=1}^t (-1)^{i+1} x_i\geq 1,&\\
&x_i\geq 0,&i\in[t].\\
\end{array}
}
\end{equation*}
The dual of this LP is
\begin{equation*}
    \boxed{
\begin{array}{ll@{}l}
\text{maximize}  &y_1+y_2\\
\text{subject to} &(-1)^{i+1}y_1+(-1)^{i+1}\frac{t-2i}t y_2\leq 1,&\;i\in[t],\\
&y_i\geq 0,&\;i\in\{1,2\}.\\
\end{array}
}
\end{equation*}
Any feasible solution of the dual LP is a lower bound on the optimum of the primal LP, i.e. the optimum Delsarte's LP $f(0)$ minus 1. As a feasible solution, we take $y_1 = \frac1{t-1}$ and $y_2=\frac{t}{t-1}$. Indeed, the constraints are satisfied: clearly, $y_1$ and $y_2$ are both at least $0$ for $t\geq2$, and
$$
    (-1)^{i+1}\left(\frac1{t-1}+\frac{t-2i}{t-1}\right) = (-1)^{i+1}\frac{t-2i+1}{t-1} = (-1)^{i+1} \left(1-2\frac{i-1}{t-1}\right)\leq 1
$$
for any $i\in[t]$.

Then we have 
\begin{equation}\label{eq:example_Dellb}
    f(0)\geq 1+y_1+y_2=1+\frac1{t-1}+\frac{t}{t-1}=2+\frac2{t-1}.
\end{equation}

If $t$ is even, we combine the lower bound on the optimum of Delsarte's LP\:\eqref{eq:example_Dellb} with the upper bound on the optimum of the Ratio-type LP \eqref{eq:FiolLP} to obtain the equality.

If $t$ is odd, we can further maximize $y_1+y_2$ by taking $y_1=0$ and $y_2=\frac{t}{t-2}.$ Then $y_1\geq0$, $y_2\geq 0$, and the final constraint becomes
$$(-1)^{i+1}\frac{t-2i}{t-2} = (-1)^{i+1} \left(1-2\frac{i-1}{t-2}\right)\leq 1.$$
For any odd $i\in[t]$, this becomes $$1-2\frac{i-1}{t-2}\leq 1\Leftrightarrow i\geq 1,$$ which is true by definition of $i$.
For any even $i\in[t]$, the constraint is equivalent to $$2\frac{i-1}{t-2}-1\leq 1\Leftrightarrow i\leq t-1,$$ which is also true since an even $i$ cannot be equal to an odd $t$.

Hence, in case $t$ is odd, the optimum of Delsarte's LP is lower-bounded by $$f(0)\geq 1+y_1+y_2=1+0+\frac{t}{t-2}=2+\frac2{t-2},$$ and, combined with the upper bound on the optimum of the Ratio-type LP\:\eqref{eq:FiolLP}, we obtain the desired equality.
\end{proof}

We conclude this section with Table~\ref{tab:Ham_Del_RT}, which lists some sporadic parameter values $t,q,d$ under which the optima of Delsarte's and the Ratio-type bounds coincide, but are strictly smaller than $q^{t-d+1}$. Note that such cases usually occur when the minimum distance $d$ is either very small or very close to $t$, which indicates that an approach similar to Theorem~\ref{thm:Ham_q=2_d=t-1} could be helpful to prove equivalence in other series of examples.

\section{A Schrijver-type SDP bound for sum-rank-metric codes}\label{sec:SchrijverSDPbound}

In this section, we consider an SDP approach for estimating the maximal size of a sum-rank-metric code with given minimum distance, which shows improvements on the best known bounds for sum-rank codes (the Delsarte LP bound~\cite{AGKP2024}) and the Lov\'asz theta bound~\eqref{eq:LovaszSDP} for several parameter regimes. Indeed, this approach can be seen as a refinement of Delsarte's LP bound from Section \ref{sec:DelsarteLPsrkcodes}, in the sense that the SDP considers triples of codewords and the distance relations between them, whereas Delsarte's LP bound focuses on the distribution of pairwise distances between the codewords.

This SDP approach was first introduced by Schrijver~\cite{S2005} (hence the name) for binary Hamming codes, and has been later reformulated in a more general form and developed to take quadruples of codewords~\cite{GMS2012} into account. Similar SDP bounds have also been applied to non-binary Hamming codes~\cite{LPS16,GST06}, binary constant weight Hamming codes~\cite{Polak19_B4}, and Lee codes~\cite{Polak19_Lee}. The goal of this section is to show an analogue of this SDP approach to the sum-rank metric, which has not been developed before in this setting.

\subsection{The three-point SDP-bound}\label{ssec:sdp.sdp}

Since the sum-rank metric is translation invariant, we may assume that an optimal code contains $\bf 0$, the zero codeword. We will use this to define a three-point SDP in which the variables encode information on triples of codewords containing $\bf 0$.

Let $X=(x_{u,v})$ be a square matrix of size $|\spac|$, with its rows and columns indexed by the elements of $\spac$. We also assume the codeword ${\bf 0}$ indexes the first row and the first column of $X$. We consider the following SDP:
\begin{equation}\label{eq:julia}
\boxed{
\begin{array}{ll@{}l}
\text{maximize}  &\tr(X) &\\
\text{subject to} &x_{{\bf 0},{\bf 0}}=1,~&\\
&x_{{\bf 0},u}=x_{u,u},~& \forall u\in\spac,\\
&x_{u,v}=0,~& \text{ if }\srk(\{{\bf 0},u,v\})<d,\\
&x_{u,v}=x_{v,u},~& \forall u,v\in\spac,\\
&x_{u,v}\geq0,~& \forall u,v\in\spac,\\
&X\text{ is positive semidefinite.}
\end{array}
}
\end{equation}

The SDP \eqref{eq:julia} is Schrijver-type in nature, since it strengthens the Delsarte framework, which is based on pairs of codewords, by imposing constraints coming from triples of codewords containing the zero word. Rather than introducing both semidefinite constraints and all accompanying linear constraints from Schrijver’s approach~\cite{S2005}, we focus on the single semidefinite constraint above. This already yields a substantial three-point strengthening of the pairwise Delsarte approach. Define 
$$
\mathrm{SDP}(q,\bfn,\bfm,d) := \text{ the optimum value of the SDP~\eqref{eq:julia}}.
$$
\begin{proposition}
It holds that $ A(q,\bfn,\bfm,d) \leq \mathrm{SDP}(q,\bfn,\bfm,d)$.
\end{proposition}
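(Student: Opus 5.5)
Take an optimal code $\C\subseteq\spac$ with $\srk(\C)\ge d$ and $|\C|=A(q,\bfn,\bfm,d)$, and build from it a feasible matrix $X$ for the SDP~\eqref{eq:julia} whose objective value is $|\C|$. By translation invariance of the sum-rank metric we may replace $\C$ by $\C-c$ for any $c\in\C$, so we assume ${\bf 0}\in\C$. Let $\chi\in\{0,1\}^{\spac}$ be the characteristic vector of $\C$ and set $X=\chi\chi^{\top}$, so that $x_{u,v}=1$ if $u,v\in\C$ and $x_{u,v}=0$ otherwise. (If one prefers, one could average over all translates $\C-c$ with $c\in\C$; this gives the same value and is more in Schrijver's spirit, but the single matrix already suffices here.)

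Next we verify the constraints one at a time. Symmetry and entrywise nonnegativity are immediate. $X$ is positive semidefinite because it is a rank-one Gram matrix: $y^{\top}Xy=(\chi^{\top}y)^2\ge0$. Since ${\bf 0}\in\C$, we have $x_{{\bf 0},{\bf 0}}=1$. We also get $x_{{\bf 0},u}=\chi_u=\chi_u^2=x_{u,u}$ for every $u$.

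The zero constraint is where the definition must be read carefully. We interpret $\srk(\{{\bf 0},u,v\})$ as the minimum sum-rank distance of the set $\{{\bf 0},u,v\}$, i.e.\ the minimum of $\srk(a-b)$ over distinct $a,b$ in this set, with value $\infty$ when the set has at most one element. Suppose $x_{u,v}\neq0$. Then $u,v\in\C$, so $\{{\bf 0},u,v\}\subseteq\C$. Every pair of distinct elements of this set therefore lies at distance at least $d$, and hence $\srk(\{{\bf 0},u,v\})\ge d$, including in the degenerate cases where some of ${\bf 0},u,v$ coincide. So $x_{u,v}=0$ whenever $\srk(\{{\bf 0},u,v\})<d$, as required.

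Finally, $\tr(X)=\sum_u\chi_u=|\C|$, so $\mathrm{SDP}(q,\bfn,\bfm,d)\ge|\C|=A(q,\bfn,\bfm,d)$. There is no real obstacle in this argument. The only delicate point is the convention for $\srk$ of a set containing repeated elements: the constraint must not force, for example, $x_{u,u}=0$ for $u\in\C$. This is guaranteed by the paper's convention that the minimum distance is taken over distinct pairs and equals $\infty$ for sets of size at most one.
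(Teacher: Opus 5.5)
Your proof is correct and is essentially the paper's argument: translate an optimal code so that it contains $\mathbf{0}$, take $X=\chi\chi^{\top}$, check feasibility, and note that $\tr(X)=|C|$. The paper's definition ($x_{u,v}=1$ iff $\{\mathbf{0},u,v\}\subseteq C$) gives the same matrix, and your remark on the convention for the minimum distance of $\{\mathbf{0},u,v\}$ when elements coincide usefully makes explicit a point the paper leaves implicit.
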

\begin{proof}
Let $C$ be a sum-rank-metric code in $\F^{\bfn\times\bfm}_q$ with minimum distance $d$ of size exactly $A(q,\bfn,\bfm,d)$ and containing the codeword $\bf 0$. Define the $0,1$ matrix~$X$ of size $|\spac|$ by 
$x_{u,v} =1$ if $\{{\bf 0}, u, v\}\subseteq C$ and $x_{u,v}=0$ otherwise. Then $X$ satisfies the constraints in~\eqref{eq:julia}. To see that $X$ is positive semidefinite, note $X$ is a rank one matrix $X=zz^\top$, where $z \in \R^{|\spac|}$ is the indicator vector of $C$. Moreover, the objective value is  $\tr(X)=|C|=A(q,\bfn,\bfm,d)$. So $ A(q,\bfn,\bfm,d) \leq \mathrm{SDP}(q,\bfn,\bfm,d)$.
\end{proof}

The bound can be compared to Delsarte's bound~\cite{DelsarteLP}. 
To compare the bounds, the analytic formulation of Delsarte's bound is used, which is equivalent to the LP formulation\:\eqref{eq:LPsrk} discussed in this chapter~\cite{S79}. Here, $B=(b_{u,v})$ is a square matrix with its rows and columns indexed by the elements of $\spac$. Let $N:=|\spac|$. Delsarte's bound $\Del$ on the size of a sum-rank-metric code with minimum distance $d$ is then given by (\emph{cf.}~\cite[Eq.\:(22)]{S79}):

\begin{equation}\label{eq:DelLP_sdp}
\boxed{
\small
\begin{array}{ll@{}l}
\text{maximize}  &\sum\limits_{u,v\in \spac} b_{u,v} &\\
\text{subject to} &b_{u,v}\geq0,~&\forall u,v\in\spac,\\
&b_{u,v} = b_{v,u},~&\forall u,v\in\spac,\\
&b_{u,v}=0,~& \text{ if }\srk(u-v)<d\text{ with }u\neq v,\\
&\tr(B)=1,~&B\in\mathbb{R}^{N\times N},\\
&B\text{ is positive semidefinite.}
\end{array}
}
\end{equation}

Here, $\tr(B)$ denotes the trace of the matrix $B$, \emph{i.e.} $\tr(B):=\sum_{u\in\spac}b_{u,u}$.

\begin{proposition}\label{prop:comparisonwithdelsarte}
It holds that $\mathrm{SDP}(q,\bfn,\bfm,d)\leq \Del$.
\end{proposition}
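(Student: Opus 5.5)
The plan is to take an arbitrary feasible solution $X$ of the SDP~\eqref{eq:julia} and build from it a feasible solution $B$ of Delsarte's analytic program~\eqref{eq:DelLP_sdp} whose objective value is at least $\tr(X)$. Taking $X$ optimal then gives $\mathrm{SDP}(q,\bfn,\bfm,d)\le \Del$. The construction symmetrizes $X$ over the translation group of $\spac$ and normalizes it.

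\emph{Construction.} For $g\in\spac$ define $X^g$ by $(X^g)_{u,v}:=x_{u-g,v-g}$. This is a simultaneous permutation of rows and columns, so $X^g$ is symmetric, entrywise nonnegative, positive semidefinite, and has the same trace and the same total entry sum as $X$. Since $x_{{\bf 0},{\bf 0}}=1$ we have $\tr(X)\ge 1>0$, so we may set
\[
B:=\frac{1}{N\,\tr(X)}\sum_{g\in\spac} X^g ,
\]
where $N=|\spac|$.

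\emph{Feasibility of $B$.} The matrix $B$ is a nonnegative combination of PSD, symmetric, nonnegative matrices, so it inherits all three properties. Its trace is $\frac{1}{N\tr(X)}\cdot N\tr(X)=1$. For the zero constraints, let $u\neq v$ with $\srk(u-v)<d$. For every $g$, the elements $u-g$ and $v-g$ are distinct and $\srk((u-g)-(v-g))=\srk(u-v)<d$. Hence the set $\{{\bf 0},u-g,v-g\}$ has minimum distance less than $d$; this also covers the case where one of $u-g,v-g$ equals ${\bf 0}$. The SDP constraint therefore gives $x_{u-g,v-g}=0$ for all $g$, and so $b_{u,v}=0$.

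\emph{Objective comparison, the main step.} The objective of $B$ is
\[
\sum_{u,v}b_{u,v}=\frac{\mathbf 1^\top X\mathbf 1}{\tr(X)},
\]
so it suffices to show $\mathbf 1^\top X\mathbf 1\ge \tr(X)^2$. This is where the constraints $x_{{\bf 0},{\bf 0}}=1$ and $x_{{\bf 0},u}=x_{u,u}$ are used. The row of $X$ indexed by ${\bf 0}$ is the vector $a:=(x_{u,u})_u$, and $x_{{\bf 0},{\bf 0}}=1$. The Schur complement argument for the PSD matrix $X$ then gives $X-aa^\top\succeq 0$; equivalently, $(y^\top X e_{\bf 0})^2\le (y^\top X y)(e_{\bf 0}^\top X e_{\bf 0})$ for all $y$, which is Cauchy--Schwarz for the semi-inner product defined by $X$. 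Taking $y=\mathbf 1$ yields
\[
\mathbf 1^\top X\mathbf 1\ \ge\ (\mathbf 1^\top a)^2=\tr(X)^2,
\]
and hence $\sum_{u,v} b_{u,v}\ge \tr(X)$.

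Apart from this inequality, the steps are routine checks. The only interpretive point is reading $\srk(\{{\bf 0},u,v\})$ as the minimum distance between distinct elements of the set, which is what makes the zero pattern transfer under translation.
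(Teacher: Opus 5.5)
Your proof is correct and follows the paper's argument. You normalize a feasible $X$ by its trace, check Delsarte's constraints, and use the Schur complement to get $\mathbf 1^\top X\mathbf 1\ge\tr(X)^2$; your form $X-aa^\top\succeq 0$ is exactly the paper's $Y-yy^\top\succeq 0$ padded with a zero row and column. The one difference is your translation-averaging, which is harmless but unnecessary: the paper simply takes $B=X/\tr(X)$, since the zero constraints already hold for $X$ itself, because $\srk(u-v)<d$ forces $\srk(\{{\bf 0},u,v\})<d$.
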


\begin{proof}
Let $X=(x_{u,v})_{u,v\in\spac}$ be an optimal solution of the SDP~\eqref{eq:julia}. We show that from $X$ one can construct a feasible solution of Delsarte's SDP~\eqref{eq:DelLP_sdp} whose objective value is at least $\tr(X)$. Write $X$ in block form as
\[
X=
\begin{pmatrix}
1 & y^\top\\
y & Y
\end{pmatrix},
\]
where the first row and column correspond to ${\bf 0}\in\spac$, the vector $y$ is indexed by $\spac\setminus\{{\bf 0}\}$, and $Y$ is a square matrix indexed by $\spac\setminus\{{\bf 0}\}$. By the constraint $x_{{\bf 0},u}=x_{u,u}$ for all $u\in\spac$, we have $y_u=Y_{u,u}$ for all $u\in\spac\setminus\{{\bf 0}\}$. Hence, ${\bf 1}^\top y=\tr(Y)$, where ${\bf 1}$ denotes the all-ones vector.

Now define $B:=\frac{1}{\tr(X)}\,X$. Then $B$ is feasible for~\eqref{eq:DelLP_sdp}. To see this, first, since $X$ is symmetric, entrywise nonnegative, and positive semidefinite, the same is true for $B$. Moreover, $\tr(B)=\frac{\tr(X)}{\tr(X)}=1$. To see the distance constraints, let $u,v\in\spac$ with $u\neq v$ and $\srk(u-v)<d$. Then the set $\{{\bf 0},u,v\}$ has minimum sum-rank distance smaller than $d$, and so the constraint in~\eqref{eq:julia} gives $x_{u,v}=0$. Therefore, $B$ is feasible for Delsarte's SDP.

We now compare the objective values. Since $X$ is positive semidefinite and its $( {\bf 0},{\bf 0})$-entry is equal to $1$, by taking a Schur complement we have $Y-yy^\top \succeq 0$. Therefore,
\[
{\bf 1}^\top Y{\bf 1}\geq {\bf 1}^\top yy^\top {\bf 1}=(\tr(Y))^2.
\]
Using this, we obtain
\begin{align*}
\sum_{u,v\in\spac} x_{u,v}
&=
1+2{\bf 1}^\top y+{\bf 1}^\top Y{\bf 1}\\
&\geq 1+2\tr(Y)+(\tr(Y))^2=(1+\tr(Y))^2\\&=\tr(X)^2.
\end{align*}
Hence the objective value of $B$ in~\eqref{eq:DelLP_sdp} satisfies $$\sum_{u,v\in\spac} B_{u,v}= \tfrac{\sum_{u,v\in\spac} x_{u,v}}{\tr(X)}
\geq \tr(X)=\mathrm{SDP}(q,\bfn,\bfm,d).$$ 
So $\mathrm{SDP}(q,\bfn,\bfm,d)\leq \Del$, as desired.
\end{proof}
The previous two propositions show that \eqref{eq:julia} yields a valid SDP-upper bound on $A(q,\bfn,\bfm,d)$ and is always at least as strong as Delsarte’s bound:
$$
A(q,\bfn,\bfm,d) \leq \mathrm{SDP}(q,\bfn,\bfm,d)\leq \Del
$$
So \eqref{eq:julia} should be viewed as a three-point (Schrijver-type) strengthening of the pairwise Delsarte approach for the sum-rank metric.

We numerically computed our bounds for sum-rank-metric graphs using the Jordan symmetry reduction package in Julia~\cite{BK2022_jordan}. Table~\ref{tab:sdp} lists some sum-rank-metric graphs on up to $2000$ vertices for which the output of the SDP\:\eqref{eq:julia} improves on Delsarte's LP bound from\:\eqref{eq:LPsrk}. Note that in all of these cases, Delsarte's LP bound outperforms the Ratio-type bound of Theorem~\ref{thm:hoffman-like} and the coding bounds of Theorems~\ref{thm:induced} and \ref{thm:non-induced}, meaning that the new computationally obtained SDP bound\:\eqref{eq:julia} also outperforms them.

\section{Non-existence of MSRD and perfect codes}\label{sec:nonexistence}

In this section, we study the existence of extremal sum-rank-metric codes, with particular focus on \emph{maximum sum-rank distance} (MSRD) codes and \emph{perfect codes}. While the previous sections were devoted to deriving upper bounds on $A(q,{\bf n},{\bf m},d)$, we now use these bounds to obtain structural results and, in particular, non-existence statements.

The main idea of this section is to exploit the hierarchy of bounds developed throughout the paper—namely, the Delsarte linear programming bound, the Ratio-type bound, and the semidefinite programming (SDP) bound—to investigate the feasibility of such extremal codes. Whenever one of these bounds is strictly smaller than the value predicted by a putative optimal construction (for instance, the Singleton bound), this discrepancy implies the non-existence of codes achieving that bound.

We first consider MSRD codes and identify parameter regimes where the Delsarte LP bound improves upon the classical bounds, thereby ruling out the existence of Singleton-optimal codes. In several cases, computational evidence shows that the Delsarte bound provides the only obstruction among the known bounds, highlighting its strength in the sum-rank setting.

We then turn to perfect codes. By combining Sphere-Packing arguments with estimates on the volume of balls in the sum-rank metric, we derive explicit conditions under which perfect codes cannot exist. These results demonstrate how analytic bounds can be used not only to estimate the size of optimal codes, but also to exclude their existence entirely.

Overall, this section illustrates how the interplay between combinatorial, spectral, and optimization-based bounds yields new insights into the structure and limitations of sum-rank-metric codes.

\subsection{MSRD codes}

In this section, we consider the sum-rank-metric space $\spac$ with $\bfm=\bfn=(2,\dots,2)$ for arbitrary $t\geq2$ and a prime power $q\geq2$. In this case, the graph $\Gamma(\spac)$ is the Cartesian product of $t$ copies of the bilinear forms graph $\Gamma(\F_q^{2\times 2})$. In Theorem~\ref{thm:RT22} we present the form the Ratio-type bound\:\eqref{eq:alpha2_bound} takes in case $d=3$.

\begin{theorem}\label{thm:RT22} Let $\spac$ be the sum-rank-metric space with $\bfm=\bfn=(2,\dots,2)$ for $t\geq2$ and a prime power $q\geq2$. Then the size of the code with minimum distance $d=3$ is upper-bounded by
$$A(\bfn,\bfm,3)\leq q^{4t}\frac{t(q^2-1)(q+1)-(1+\epsilon)(q^2-1-\epsilon)}{(t(q^2-1)(q+1)+1+\epsilon)(t(q^2-1)(q+1)+1+\epsilon-q^2)},$$
where $\epsilon=(tq+t-1)\mod q^2$.
\end{theorem}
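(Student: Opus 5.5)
The plan is to specialize Theorem~\ref{thm:hoffman-k=2} to the graph $\Gamma(\spac)$, which here is the Cartesian product of $t$ copies of $\Bil_q(2,2)$. To do so I need to (i) list the distinct eigenvalues of this product, (ii) find the largest eigenvalue $\theta_i\le -1$ and its successor $\theta_{i-1}$, and (iii) substitute into the bound, using $|\spac|=q^{4t}$.

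\emph{Step 1: eigenvalues of the factor.} Setting $n=m=2$ in \eqref{eq:Bil_evals} gives the three eigenvalues of $\Bil_q(2,2)$. Write $k:=(q^2-1)(q+1)$. Then
\[
\theta_0=k,\qquad \theta_1=q^2-q-1=k-q^3,\qquad \theta_2=-(q+1)=k-q^2(q+1).
\]
The adjacency matrix of a Cartesian product is a Kronecker sum. Hence its eigenvalues are sums of one factor eigenvalue per coordinate. If $b$ coordinates use $\theta_1$, $c$ use $\theta_2$, and the remaining $t-b-c$ use $\theta_0$, the eigenvalue is
\[
tk-q^2s,\qquad s=bq+c(q+1)=(b+c)q+c,\qquad b,c\ge 0,\ b+c\le t .
\]
So the distinct eigenvalues are $tk-q^2s$, where $s$ ranges over the set $S$ of integers admitting such a representation. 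Equivalently, $s\in S$ if and only if $qm\le s\le (q+1)m$ for some $0\le m\le t$. The largest eigenvalue is $\theta_0=tk$, obtained from $s=0$.

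\emph{Step 2: locating $\theta_i$ and $\theta_{i-1}$.} The condition $tk-q^2s\le -1$ reads $q^2s\ge tk+1$. Expanding, $tk+1=q^2(tq+t)-(tq+t-1)$. Writing $tq+t-1=q^2\lfloor (tq+t-1)/q^2\rfloor+\epsilon$, the smallest admissible integer is $s^*=t(q+1)-\lfloor (tq+t-1)/q^2\rfloor$. A direct computation then gives
\[
\theta_i=tk-q^2s^*=-1-\epsilon .
\]
Provided $s^*$ and $s^*-1$ both lie in $S$, the next larger distinct eigenvalue is
\[
\theta_{i-1}=\theta_i+q^2=q^2-1-\epsilon>-1 .
\]

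\emph{Step 3: substitution.} Plugging $\theta_0=tk$, $\theta_i=-1-\epsilon$ and $\theta_{i-1}=q^2-1-\epsilon$ into Theorem~\ref{thm:hoffman-k=2} gives:
\begin{itemize}
\item numerator $tk-(1+\epsilon)(q^2-1-\epsilon)$;
\item denominator $(tk+1+\epsilon)(tk+1+\epsilon-q^2)$;
\item overall factor $q^{4t}$.
\end{itemize}
This is exactly the claimed expression. Theorem~\ref{thm:hoffman-k=2} also needs at least three distinct eigenvalues, which holds since $t\ge2$.

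\emph{Expected obstacle: membership of $s^*$ and $s^*-1$ in $S$.} This is the step I expect to be the main difficulty. For $s^*$, take $m=t$: I need $tq\le s^*\le t(q+1)$. The upper bound is clear. The lower bound holds because $\lfloor (tq+t-1)/q^2\rfloor\le t(q+1)/q^2\le t$, using $q+1\le q^2$.

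For $s^*-1$, the choice $m=t$ still works whenever $s^*-1\ge tq$. In the remaining boundary cases I would try $m=t-1$, which needs $(t-1)q\le s^*-1\le (t-1)(q+1)$, and check this using the size of $\lfloor (tq+t-1)/q^2\rfloor$; the case $q=2$, $t=2$ is an instance where $s^*-1<tq$. Since consecutive intervals $[qm,(q+1)m]$ overlap or abut once $m\ge q-1$, gaps in $S$ only occur near $s=0$. So I expect this to reduce to a short case check, with small $q$ and $t$ verified by hand.
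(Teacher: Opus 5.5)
Your route is the same as the paper's: specialize Theorem~\ref{thm:hoffman-k=2}, using that the eigenvalues of the Cartesian product are sums of the factor eigenvalues $(q^2-1)(q+1)$, $q^2-q-1$, $-(q+1)$. You then read off $\theta_i=-1-\epsilon$ and $\theta_{i-1}=q^2-1-\epsilon$. Your choice $m=t$ corresponds to $a=0$ in the paper's parametrization, i.e.\ no coordinate carries the eigenvalue $(q^2-1)(q+1)$.

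The step you flag as the main obstacle has no boundary cases. Put $f:=\lfloor (tq+t-1)/q^2\rfloor$. Then
\[
f\le \frac{tq+t-1}{q^2}<\frac{t(q+1)}{q^2}\le t,
\]
and since $f$ is an integer, $f\le t-1$. Hence $s^*-1=t(q+1)-f-1\ge tq$, so $m=t$ already certifies both $s^*\in S$ and $s^*-1\in S$. No case check with $m=t-1$ is needed.

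Your claim that $q=t=2$ gives $s^*-1<tq$ is an arithmetic slip. There $f=1$ and $s^*=5$, so $s^*-1=4=tq$.

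With this one line added, your proof is complete. It is also slightly more careful than the paper's own text, which asserts these are the two eigenvalues closest to $-1$ without checking that the corresponding index choices are feasible.
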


\begin{proof}
    We first observe that the graph $\Gamma(\F_q^{2\times 2})$ has exactly three distinct eigenvalues (see, e.g., \cite[ Corollary 8.4.2]{BCNbookDRG}):
    \begin{gather*}
    \frac{q^4-2q^2+1}{q-1}=(q^2-1)(q+1),\quad \frac{q^3-2q^2+1}{q-1}=q^2-q-1,\\ \frac{q^2-2q^2+1}{q-1}=-(q+1).    
    \end{gather*}
    Since the eigenvalues of a Cartesian product are all possible sums of the eigenvalues of its Cartesian factors, each eigenvalue of $\Gamma(\spac)$ takes the form
    $$a(q^2-1)(q+1)+b(q^2-q-1)-(t-a+b)(q+1)=aq^2(q+1)+bq^2-t(q+1),$$
    where $a,b$ are non-negative integers with $a+b\leq t$.

    Our goal is to identify the two eigenvalues closest to $-1$, in particular the largest eigenvalue that does not exceed $-1$. For this we fix $a\leq\frac1{q^2}\left(t+\frac1{q+1}\right)$, then $\lambda\leq -1$ for some eigenvalue $\lambda$ of $\Gamma(\spac)$ is equivalent to
    $$b\leq\frac{(t-aq^2)(q+1)-1}{q^2}.$$
    Note that the value on the right-hand side is non-negative due to the conditions imposed on $a$. Let $\epsilon = (tq+t-1)\mod q^2$. Then the two eigenvalues closest to $-1$ are
    $$\theta_i=(aq^2-t)(q+1)+\floor{\frac{(t-aq^2)(q+1)-1}{q^2}}q^2 = -1 -\epsilon\quad\text{and}\quad \theta_{i-1}=q^2-1 -\epsilon.$$
    Note that the number of vertices in $\Gamma(\spac)$ is equal to $q^{4t}$, and the valency of the graph is $\delta=\frac{t(q^2-1)^2}{q-1}=t(q^2-1)(q+1)$. From these observations the claim of the theorem follows.
\end{proof}

Now, we can compare the Ratio-type bound of Theorem~\ref{thm:RT22} with the Singleton bound, which in the case of this section takes the form $A(q,\bfn,\bfm,3)\leq q^{4(t-1)}$ for $d=3$.

\begin{theorem}
    Let $\spac$ be the sum-rank-metric space with $\bfm=\bfn=(2,\dots,2)$ for $t\geq2$ and a prime power $q\geq2$. Then the Ratio-type bound does not perform worse than the Singleton bound for a code with minimum distance $d=3$. Moreover, if additionally $t\geq q-1$, then the Ratio-type bound strictly outperforms the Singleton bound.
\end{theorem}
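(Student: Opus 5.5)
The plan is to compare the bound of Theorem~\ref{thm:RT22} directly with the Singleton value $q^{4(t-1)}$, and to exploit a congruence that is implicit in the definition of $\epsilon$.

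\textbf{Setup.} Write $k=t(q^2-1)(q+1)$ for the valency and $s=1+\epsilon$, so that $s\in\{1,\dots,q^2\}$. The numerator contains the product
\[
(1+\epsilon)(q^2-1-\epsilon)=s(q^2-s).
\]
After cancelling $q^{4t}$, the claim ``Ratio-type $\le$ Singleton'' becomes
\[
q^4\bigl(k-s(q^2-s)\bigr)\le (k+s)(k+s-q^2).
\]
Put $u=k+s$. The difference $\Delta$ of the two sides is then
\[
\Delta=u^2-(q^2+q^4)u+q^4s(q^2+1-s),
\]
and I want $\Delta\ge 0$, with $\Delta>0$ for strictness.

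\textbf{Key observation.} We have $k=t(q^3+q^2-q-1)\equiv -t(q+1)\pmod{q^2}$ and $s\equiv t(q+1)\pmod{q^2}$. Hence $u\equiv 0\pmod{q^2}$, so $u=q^2w$ with $w$ a positive integer. Substituting gives the factorisation
\[
\Delta=q^4\bigl(w^2-(q^2+1)w+s(q^2+1-s)\bigr)=q^4\,(w-s)\,\bigl(w-(q^2+1-s)\bigr).
\]
So everything reduces to locating $w$ relative to $s$ and $q^2+1-s$.

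\textbf{Locating $w$.} Write $t(q+1)=s+cq^2$ with $c\ge 0$ an integer; this is possible precisely because $s\in[1,q^2]$ is the residue representative of $t(q+1)$. Then
\[
w=\frac{k+s}{q^2}=\frac{t(q+1)(q^2-1)+s}{q^2}=t(q+1)-c=s+c(q^2-1).
\]
This gives $w-s=c(q^2-1)\ge 0$ and
\[
w-(q^2+1-s)=2(s-1)+(c-1)(q^2-1).
\]

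\textbf{Conclusion of the comparison.}
\begin{itemize}
\item If $c=0$, then $w=s$ and $\Delta=0$. The two bounds coincide, so the Ratio-type bound is not worse.
\item If $c\ge 1$, both factors are nonnegative, so $\Delta\ge 0$. The second factor vanishes only for $s=1,\ c=1$, i.e.\ $t(q+1)=q^2+1$. That forces $t=q-1+\tfrac{2}{q+1}$, which is non-integral for $q\ge 2$. Hence $\Delta>0$, which gives strict improvement.
\end{itemize}

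\textbf{Main obstacle.} The delicate step is matching the threshold: strictness holds exactly when $c\ge1$, i.e.\ $t(q+1)>q^2$. At the boundary case $t=q-1$ one has $t(q+1)=q^2-1$, so $c=0$ and the computation above gives equality rather than strict improvement. There I would recheck the choice of $\theta_i,\theta_{i-1}$ and the range of $a$ in the proof of Theorem~\ref{thm:RT22}. If the formula is as stated, the correct threshold is $t\ge q$, or strictness at $t=q-1$ must come from a separate argument, for instance a floor/integrality argument, since $A$ is an integer while the Singleton value is attained exactly.
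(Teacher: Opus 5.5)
Your computation is correct and follows essentially the paper's own argument. Both proofs cross-multiply the bound of Theorem~\ref{thm:RT22} against $q^{4(t-1)}$ and reduce the comparison to the sign of a product of two factors. In your notation, the paper's factors $tq+t-1-\epsilon$ and $(q-1)(q+1)^2t-(q^2-1-\epsilon)(q^2+1)$ are exactly $cq^2$ and $q^2\bigl(2(s-1)+(c-1)(q^2-1)\bigr)$. Your congruence $u\equiv 0\pmod{q^2}$ makes the factorisation transparent, and your handling of the second factor is tidier than the paper's three-case split.

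Your doubt about $t=q-1$ is justified. The fault lies with the statement and the paper's proof, not with Theorem~\ref{thm:RT22}. The eigenvalues $-1-\epsilon$ and $q^2-1-\epsilon$ used there are indeed attained, with $a=0$ and $b\in\{c,c+1\}$, where $c+1\le t$.

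The paper's error is in the first factor. It claims $tq+t-1-\epsilon=0$ if and only if $t<q-1+\frac{2}{q+1}$, ``which for integers is equivalent to $t<q-1$''. Since $0<\frac{2}{q+1}<1$, this is in fact equivalent to $t\le q-1$. In its case $t=q-1$, the paper sets $\epsilon=q^2-2$ and checks only that the second factor is positive. It overlooks that the first factor equals $(q^2-2)-(q^2-2)=0$ there.

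Concretely, take $q=3$, $t=2$. Then $\epsilon=7$, and the Ratio-type value is $3^8\cdot\frac{56}{72\cdot 63}=81$. This is exactly the Singleton value $3^4$.

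So drop your fallback about a floor or integrality argument. At $t=q-1$ both bounds equal the integer $q^{4(t-1)}$. Moreover, linearized Reed--Solomon codes are MSRD for $t\le q-1$, so no argument can give strictness there. Corollary~\ref{cor:noMSRD} fails at $t=q-1$ for the same reason.

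What your proof establishes completely is the corrected statement. The Ratio-type bound is never worse than the Singleton bound, and it is strictly better if and only if $t\ge q$.
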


\begin{proof}
    For the Ratio-type bound to perform at least as good as the Singleton bound in the described case, the following inequality needs to hold:
    $$q^{4t}\frac{t(q^2-1)(q+1)-(1+\epsilon)(q^2-1-\epsilon)}{(t(q^2-1)(q+1)+1+\epsilon)(t(q^2-1)(q+1)+1+\epsilon-q^2)}\leq q^{4(t-1)},$$
where $\epsilon=(tq+t-1)\mod q^2$. This can be rewritten as
{\footnotesize{
\begin{equation}\label{eq:22RTvsS}
    \frac{q^{4t-2}(q-1)(q+1) (tq+t-1 - \epsilon) ( (q-1)(q+1)^2 t-(q^2-1-\epsilon)(q^2+1))}{(1+    \epsilon + (q-1) (q+1)^2 t) (\epsilon + (q^2-1) (qt+t-1))} \geq 0.
\end{equation}
}}
Since $q\geq 2$ and $t\geq 2$, the denominator is clearly positive. Since $\epsilon = (tq+t-1)\mod q^2$, we have $tq+t-1 - \epsilon\geq0$ with equality reached if and only if $tq+t-1 < q^2 \Leftrightarrow t < \frac{q^2+1}{q+1}=q-1+\frac2{q+1}$, which for integer values $t,q\geq2$ is equivalent to $t< q-1$. So, the inequality\:\eqref{eq:22RTvsS} reduces to
$$(q-1)(q+1)^2 t-(q^2-1-\epsilon)(q^2+1)\geq0\Leftrightarrow t\geq \frac{(q^2-1-\epsilon)(q^2+1)}{(q-1)(q+1)^2}.$$
We consider three cases: $t\geq q$, $t=q-1$, and $t\leq q-2$. In case $t\geq q$, then we have
$$
\frac{(q-1)(q+1)^2t}{q^2+1}\geq \frac{(q-1)(q+1)^2q}{q^2+1}> q^2-1\geq q^2-1-\epsilon,
$$
where the second inequality holds for all $q\geq 2$.

In case $t\leq q-2$, the inequality\:\eqref{eq:22RTvsS} reduces to $0\geq 0$ due to $tq+t-1=\epsilon$.

In case $t= q-1$, we have $\epsilon = q^2-2$, and then
$$q-1\geq \frac{(q^2-1-q^2+2)(q^2+1)}{(q-1)(q+1)^2} \Leftrightarrow (q-1)^2(q+1)^2\geq q^2+1.$$
This inequality always holds, and moreover, the equality is not attainable for an integer $q\geq2$, hence the inequality is strict.
\end{proof}

\begin{corollary}\label{cor:noMSRD}
    A sum-rank-metric space $\spac$ with $\bfm=\bfn=(2,\dots,2)$, a prime power $q\geq2$, and $t\geq q-1$ does not admit an MSRD code with minimum distance $d=3$.
\end{corollary}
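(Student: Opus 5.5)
The corollary should follow by combining three earlier results: Theorem~\ref{thm:hoffman-k=2} (through its specialization Theorem~\ref{thm:RT22}), the comparison theorem just proved, and the definition of MSRD codes via the Singleton bound of Theorem~\ref{thm:non-induced-singleton-td}.

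First I will compute the Singleton bound for $\bfn=\bfm=(2,\dots,2)$ and $d=3$. We have $d-1=2=n_1+0$, so $j=2$ and $\delta=0$. This gives
\[
|\mC|\le q^{\sum_{i=2}^t 4}=q^{4(t-1)}.
\]
An MSRD code with $d=3$ would therefore be a code $\mC$ with $\srk(\mC)\ge 3$ and $|\mC|=q^{4(t-1)}$. Such a code would force
\[
A(q,\bfn,\bfm,3)\ge q^{4(t-1)}.
\]

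Next I will invoke Theorem~\ref{thm:RT22}, which bounds $A(q,\bfn,\bfm,3)$ above by the explicit Ratio-type expression; call it $R(q,t)$. The preceding theorem shows that when $t\ge q-1$ the inequality $R(q,t)\le q^{4(t-1)}$ is strict. The strictness comes from two facts. The factor $tq+t-1-\epsilon$ is positive, because $tq+t-1\ge q^2$ when $t\ge q-1$. The last factor is positive by the case analysis for $t\ge q$ and $t=q-1$. Chaining these gives
\[
q^{4(t-1)}\le A(q,\bfn,\bfm,3)\le R(q,t)<q^{4(t-1)},
\]
which is a contradiction, so no MSRD code exists.

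The main obstacle is confirming that the strict inequality genuinely covers every $t\ge q-1$. In particular, I must check that $tq+t-1-\epsilon>0$ exactly when $tq+t-1\ge q^2$. For $t=q-1$ we get $tq+t-1=q^2-2<q^2$, which seems to make this factor zero. If that happens, I will instead verify the boundary case $t=q-1$ directly. The plan is to substitute $\epsilon=q^2-2$ into the Theorem~\ref{thm:RT22} expression and compare it with $q^{4(q-2)}$. If the strict inequality fails there, the argument will cover only $t\ge q$, and the $t=q-1$ case will need a separate bound, for example Delsarte's LP bound~\eqref{eq:LPsrk} or the sphere-packing bound.
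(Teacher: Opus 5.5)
Your route is the paper's own: take the Singleton value $q^{4(t-1)}$, bound $A(q,\bfn,\bfm,3)$ by the expression of Theorem~\ref{thm:RT22}, and use the strict comparison. For $t\ge q$ this works. Then $tq+t-1\ge q^2+q-1\ge q^2$, so the factor $tq+t-1-\epsilon$ is positive, and the second factor is positive because $(q-1)(q+1)^2q>(q^2-1)(q^2+1)$.

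However, your claim that $tq+t-1\ge q^2$ for all $t\ge q-1$ is false, as you suspect at the end. For $t=q-1$ one has $tq+t-1=q^2-2$, so $\epsilon=q^2-2$ and that factor vanishes. Substituting directly gives $\theta_0=t(q^2-1)(q+1)=(q^2-1)^2$ and $1+\epsilon=q^2-1$. The fraction in Theorem~\ref{thm:RT22} is then $\frac{(q^2-1)^2-(q^2-1)}{q^2(q^2-1)\cdot q^2(q^2-2)}=q^{-4}$. So the Ratio-type bound equals the Singleton bound $q^{4(t-1)}$ exactly, with no strict inequality. The paper's proof has the same slip. It asserts $tq+t-1<q^2$ iff $t<q-1$, but $t<q-1+\frac{2}{q+1}$ means $t\le q-1$. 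In its case $t=q-1$ it only checks the second factor.

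Your proposed repair, using Delsarte's LP bound or the Sphere-Packing bound at $t=q-1$, cannot succeed, because MSRD codes actually exist there when $q\ge 3$. Linearized Reed--Solomon codes over $\F_{q^2}$ with $t\le q-1$ blocks of length $2$ (one evaluation point per nonzero norm class) are MSRD. The one of $\F_{q^2}$-dimension $2t-2$ has $q^{4(t-1)}$ elements and minimum sum-rank distance $3$ in $(\F_q^{2\times 2})^t$. For instance, $A(3,(2,2),(2,2),3)=81=3^4$ meets the Singleton bound.

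So no valid upper bound can exclude this case: the corollary as stated is false for $q\ge 3$, $t=q-1$. What your argument, like the paper's, actually proves is the corollary under the hypothesis $t\ge q$. For $q=2$ this is no restriction, since $t\ge 2$ there.
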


We conclude the section with some computational observations on general sum-rank-metric spaces, in particular those outside of the family specified in Corollary~\ref{cor:noMSRD}. Table~\ref{tab:MSRDDelsarte} lists some examples of sum-rank-metric spaces (or equivalently, graphs) for which Delsarte's LP bound outperforms the Singleton bound of Theorem~\ref{thm:non-induced}, but no other bounds of Theorems~\ref{thm:induced} and \ref{thm:non-induced} nor the Ratio-type bound do. That way, Delsarte's LP bound provides new insight into the non-existence of MSRD codes. We also note that for some parameters listed the same conclusion on non-existence can be reached by computing the Schrijver-type SDP bound discussed in Section~\ref{sec:SchrijverSDPbound}, but Delsarte's LP is often a more computationally efficient choice.

\subsection{Perfect codes}
In this subsection, we investigate the existence of perfect codes in the sum-rank metric. 

Using the estimates on the volume of balls derived in Section \ref{sec:spheresandperfectcodes}, we establish explicit non-existence results for perfect codes in certain families of parameters. 

\subsubsection{$\bfm=(m,1,\dots,1)$ and  $\bfn=(n,1,\dots,1)$}

The next result provides a condition under which the Sphere-Packing bound cannot be met, thereby ruling out the existence of perfect codes with the given parameters in $\spac$ with $\bfm=(m,1,\dots,1)$, $\bfn=(n,1,\dots,1)$.

\begin{theorem}
    Consider the sum-rank-metric space $\spac$ with $\bfm=(m,1,\dots,1)$, $\bfn=(n,1,\dots,1)$, where $m\geq n\geq1$ and $q\geq 2$ is a prime power.
    If
    \[r^2+(m-n-2)r \geq  (t-1)(m-1)+1+ \log_q(r+1)+\log_q\left( {t-1 \choose \lfloor (t-1)/2\rfloor}\right),\]
    then there does not exist any perfect code with minimum distance $d$ such that $r=\lfloor \frac{d-1}2\rfloor$.
\end{theorem}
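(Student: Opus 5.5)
The plan is to compare the size a perfect code would have to have with the Induced Singleton bound of Theorem~\ref{thm:induced}. The upper estimate on $V_r(\Fq^{\bfn\times\bfm})$ from Proposition~\ref{prop:boundvolume} then turns this comparison into exactly the displayed inequality.

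Here $N=n+t-1$, $|\spac|=q^{mn+t-1}$ and $\max_i m_i=m$. Suppose a perfect code $\mC$ with minimum distance $d$ exists, and set $r=\lfloor (d-1)/2\rfloor$. We may assume $r\ge 1$, since for $r=0$ the hypothesis cannot hold: its left side is $0$ and its right side is at least $1$. By definition of perfect, $|\mC|=q^{mn+t-1}/V_r(\spac)$. The Induced Singleton bound gives $|\mC|\le q^{m(N-d+1)}=q^{m(n+t-1)-m(d-1)}$. Combining the two yields
\[
V_r(\spac)\;\ge\; q^{\,m(d-1)-(t-1)(m-1)}\;\ge\; q^{\,2mr-(t-1)(m-1)},
\]
using $d-1\ge 2r$.

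On the other side, Proposition~\ref{prop:boundvolume} gives
\[
V_r(\spac)\le (r+1)\binom{t-1}{\lfloor (t-1)/2\rfloor}(q-1)^r q^{(m+n+1)r-r^2+1}.
\]
Since $r\ge1$, we have $(q-1)^r<q^r$, and so
\[
V_r(\spac)< (r+1)\binom{t-1}{\lfloor (t-1)/2\rfloor} q^{(m+n+2)r-r^2+1}.
\]
Taking $\log_q$ of the chain $q^{2mr-(t-1)(m-1)}\le V_r(\spac)<(\dots)$ gives
\[
2mr-(t-1)(m-1) < (m+n+2)r-r^2+1+\log_q(r+1)+\log_q\binom{t-1}{\lfloor (t-1)/2\rfloor}.
\]
Rearranging, this says
\[
r^2+(m-n-2)r<(t-1)(m-1)+1+\log_q(r+1)+\log_q\binom{t-1}{\lfloor (t-1)/2\rfloor},
\]
which contradicts the hypothesis. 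Hence no perfect code with these parameters exists.

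The main point needing care is the strictness of the final inequality. The hypothesis is stated with $\ge$, so the contradiction must come from a strict inequality somewhere. I will extract it from $(q-1)^r<q^r$ for $r\ge1$, which is why the degenerate case $r=0$ is dispatched separately first.

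I also need $|\mC|\ge 2$ for the Induced Singleton bound to apply. This holds because $V_r(\spac)<|\spac|$ whenever $d\le N$. If $d>N$, the Singleton-type bound forces $|\mC|\le 1$, so no perfect code with $|\mC|\ge2$ exists and the statement is vacuous.
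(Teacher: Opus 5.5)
Your proof is correct and follows the paper's argument exactly. Like the paper, you combine $|\mC|\,V_r=q^{mn+t-1}$ with the Induced Singleton bound and the volume estimate of Proposition~\ref{prop:boundvolume}, and you get the needed strict inequality from $(q-1)^r<q^r$ once $r\ge 1$. Your explicit checks that the hypothesis rules out $r=0$, and that $|\mC|\ge 2$ so the Induced Singleton bound applies, fill in details the paper only asserts or leaves implicit.
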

\begin{proof}
    Suppose that there exists a perfect code $\mathcal{C}$ in $\spac$ with minimum distance $d$ and let $r=\lfloor \frac{d-1}2\rfloor$.
    The Induced Singleton bound (see Theorem \ref{thm:induced}) implies that 
    \begin{equation}\label{eq:condgenISB}
    |\mathcal{C}|\leq q^{m(n+t-1-d+1)}\leq q^{m(n+t-2r-1)}. 
    \end{equation}
    Since $\mathcal{C}$ is perfect then we have
    \[ |\mathcal{C}|V_r(\Fq^{\bfn\times\bfm})=q^{mn+t-1}.  \]
    Using \eqref{eq:condgenISB} and Proposition \ref{prop:boundvolume}, we get that 
    \[q^{mn+t-1} \leq q^{m(n+t-2r-1)} (r+1) {t-1 \choose \lfloor (t-1)/2\rfloor} (q-1)^r q^{(m+n+1)r-r^2+1},  \]
    from which, since the theorem's hypothesis implies $r\geq 1$ and hence $(q-1)^r<q^r$, we get
    \[q^{r^2+(m-n-2)r-(t-1)(m-1)-1}< (r+1) {t-1 \choose \lfloor (t-1)/2\rfloor}, \]
    which implies that 
    \[r^2+(m-n-2)r < (t-1)(m-1)+1+ \log_q(r+1)+\log_q\left( {t-1 \choose \lfloor (t-1)/2\rfloor}\right),\]
    a contradiction to the assertion.
\end{proof}

We can refine the above argument by using the Singleton bound from Theorem \ref{thm:non-induced} when considering the minimum distance greater than $n$.

\begin{theorem}
    Consider the sum-rank-metric space $\spac$ with $\bfm=(m,1,\dots,1)$, $\bfn=(n,1,\dots,1)$, where $m\geq n\geq1$, $q\geq 2$ is a prime power and $d> n$.
    If 
    \[ r^2-(m+n)r > n+1-mn + \log_q(r+1)+\log_q\left( {t-1 \choose \lfloor (t-1)/2\rfloor}\right), \]
    where $r=\lfloor \frac{d-1}2\rfloor$, there do not exist perfect sum-rank-metric codes with minimum distance $d$.
\end{theorem}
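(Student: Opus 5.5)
We follow the proof of the previous theorem, replacing the Induced Singleton bound by the (non-induced) Singleton bound of Theorem~\ref{thm:non-induced-singleton-td}, which is sharper once $d>n$. Suppose a perfect code $\mathcal{C}\subseteq\spac$ with minimum distance $d>n$ exists, and set $r=\lfloor (d-1)/2\rfloor$. Perfectness gives $|\mathcal{C}|\,V_r(\Fq^{\bfn\times\bfm})=|\spac|=q^{mn+t-1}$. We will bound the left-hand side from above and show the result contradicts the hypothesis.

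\emph{Step 1: the Singleton bound.} For $\bfn=(n,1,\dots,1)$ and $\bfm=(m,1,\dots,1)$, write $d-1=\sum_{i=1}^{j-1}n_i+\delta$ with $0\le\delta\le n_j-1$. Since $d-1\ge n=n_1$ and every later block has size $n_i=1$, we are forced to take $\delta=0$ and $d-1=n+(j-2)$, that is, $j=d-n+1$. The Singleton bound then reads
\[
|\mathcal{C}|\le q^{\sum_{i=j}^t m_in_i}=q^{t-j+1}=q^{t-d+n}.
\]
Because $d\ge 2r+1$, this gives $|\mathcal{C}|\le q^{t+n-2r-1}$. (The implicit requirement $d\le N=n+t-1$ is harmless: for larger $d$ we have $|\mathcal{C}|\le 1$, and no code of size at least $2$ exists.)

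\emph{Step 2: combine with the volume estimate.} Proposition~\ref{prop:boundvolume} gives
\[
V_r\le (r+1)\binom{t-1}{\lfloor (t-1)/2\rfloor}(q-1)^r q^{(m+n+1)r-r^2+1}.
\]
We use $(q-1)^r\le q^r$, which also covers the case $r=0$. Substituting both bounds into $|\mathcal{C}|V_r=q^{mn+t-1}$ and taking $\log_q$ yields
\[
mn+t-1\le t+n-2r-1+r+(m+n+1)r-r^2+1+\log_q(r+1)+\log_q\binom{t-1}{\lfloor (t-1)/2\rfloor}.
\]

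\emph{Step 3: simplify.} Cancelling $t-1$ and collecting the terms in $r$, this becomes
\[
r^2-(m+n)r\le n+1-mn+\log_q(r+1)+\log_q\binom{t-1}{\lfloor (t-1)/2\rfloor}.
\]
This directly contradicts the strict inequality assumed in the statement, so no such perfect code exists.

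The main point needing care is Step 1: checking that the decomposition $d-1=\sum_{i<j}n_i+\delta$ really gives $\delta=0$ and $j=d-n+1$ for this block structure. The remaining steps are bookkeeping of exponents. Because the hypothesis is strict, the non-strict estimate $(q-1)^r\le q^r$ is enough, and no separate treatment of $r=0$ is needed.
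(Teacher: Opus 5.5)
Your proposal is correct and follows the paper's proof essentially step for step. You combine the Singleton bound $|\mathcal{C}|\le q^{t-d+n}\le q^{t+n-2r-1}$ (valid since $d>n$), the perfectness identity $|\mathcal{C}|V_r=q^{mn+t-1}$, the volume estimate of Proposition~\ref{prop:boundvolume} and $(q-1)^r\le q^r$ to reach $r^2-(m+n)r\le n+1-mn+\log_q(r+1)+\log_q\binom{t-1}{\lfloor (t-1)/2\rfloor}$, which contradicts the strict hypothesis, and your explicit derivation of $\delta=0$, $j=d-n+1$ and your remark on $d>N$ only spell out details the paper leaves implicit.
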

\begin{proof}
    We can argue as in the previous theorem. 
    Indeed, let $\C$ be a perfect sum-rank-metric code in $\spac$ with minimum distance $d$.
    Since $d>n$, the Singleton bound (Theorem \ref{thm:non-induced}) reads as follows
    \[ |C|\leq q^{t-d+n}\leq q^{t-2r+n-1}. \]
    Since $\C$ is perfect, then 
    \[ |\mathcal{C}|V_r(\Fq^{\bfn\times\bfm})=q^{mn+t-1}.  \]
    Using the above mentioned bound and Proposition \ref{prop:boundvolume}, we get that 
    \[ q^{mn+t-1}\leq q^{t-2r+n-1}(r+1) {t-1 \choose \lfloor (t-1)/2\rfloor} (q-1)^r q^{(m+n+1)r-r^2+1}, \]
    which implies that
    \[
    q^{r^2-(m+n-1)r+mn-n-1}\leq (r+1) {t-1 \choose \lfloor (t-1)/2\rfloor}(q-1)^r.
    \]
    Since $(q-1)^r\leq q^r$, we obtain
    \[
    q^{r^2-(m+n)r+mn-n-1}\leq (r+1) {t-1 \choose \lfloor (t-1)/2\rfloor},
    \]
    and therefore
    \begin{align*} 
    &r^2-(m+n)r \leq n+1-mn + \log_q(r+1)+\log_q\left( {t-1 \choose \lfloor (t-1)/2\rfloor}\right). 
  \qedhere  \end{align*}
\end{proof}


The above results state that if $d$ is large enough with respect to the length of the blocks, then perfect codes with minimum distance $d$ do not exist.
In the following we will derive some more non existence results based on some congruences.

\begin{proposition}\label{prop:congruence}
    Consider the sum-rank-metric space $\spac$ with $\bfm=(m,1,\dots,1)$, $\bfn=(n,1,\dots,1)$, where $m\geq n\geq1$, $q\geq 2$ is a prime power of the prime $p$. If there exists a non trivial additive perfect sum-rank-metric code with minimum distance $d$, then
    \[ V_{\left\lfloor \frac{d-1}2 \right\rfloor} \equiv 0 \pmod{p}. \]
\end{proposition}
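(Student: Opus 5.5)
The plan is to use the sphere-packing equality for perfect codes together with the fact that an additive code has size a power of $p$. Let $\mathcal{C}\subseteq\spac$ be a non-trivial additive perfect code with minimum distance $d$, and set $r=\lfloor (d-1)/2\rfloor$.

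First, since $\mathcal{C}$ is additive, it is a subgroup of the elementary abelian $p$-group $(\spac,+)$. Hence $|\mathcal{C}|=p^a$ for some integer $a\geq 0$. Moreover, $|\spac|=q^{mn+t-1}=p^{e(mn+t-1)}$, where $q=p^e$.

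Second, perfection means the balls of radius $r$ around codewords partition the space (Theorem~\ref{thm:non-induced} with equality). This gives
\[ |\mathcal{C}|\cdot V_r(\Fq^{\bfn\times\bfm})=p^{e(mn+t-1)}. \]
By unique factorization, $V_r$ is itself a power of $p$, namely $V_r=p^{e(mn+t-1)-a}$.

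Third, I need to exclude $V_r=1$, i.e.\ show the exponent is positive. Since $V_r\geq V_0=1$, with equality exactly when $r=0$, it suffices to show $r\geq 1$. I interpret ``non-trivial'' as ruling out the degenerate cases: the whole space (which corresponds to $d=1$, $r=0$) and $|\mathcal{C}|\le 1$. With $d\geq 3$ we get $r\geq1$, and then $V_r\geq V_1>1$ by Corollary~\ref{cor:V1}. So $V_r=p^b$ with $b\geq1$, and therefore $V_r\equiv 0\pmod p$.

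The main obstacle is the degenerate case $d=2$: then $r=0$, $V_0=1$, and a code such as a parity-check-type subspace would give a ``perfect'' code, so the statement would fail. I will handle this by fixing the convention that non-trivial perfect codes have $d\geq 3$, i.e.\ $r\geq1$, and state this explicitly as the meaning of ``non-trivial''.

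As a remark, combining this with Proposition~\ref{prop:boundvolume} gives the usable criterion $\sum_{i=0}^r\binom{t-2}{i}\equiv 0 \pmod p$. That is the intended application, not part of the proof.
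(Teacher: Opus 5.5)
This is essentially the paper's argument: additivity makes $|\mathcal{C}|$ a power of $p$, and the perfect-code identity $|\mathcal{C}|\,V_r=q^{mn+t-1}$ then forces $V_r$ to be a power of $p$. Your explicit check that this power is not $p^0$ is a step the paper leaves implicit.

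However, your $d=2$ ``obstacle'' is spurious. For $r=0$ one has $V_0=1$, so perfection forces $|\mathcal{C}|=|\spac|$, i.e.\ $\mathcal{C}=\spac$, which has minimum distance $1$. Hence no perfect code with $d=2$ exists: a parity-check subspace has fewer than $|\spac|$ elements and does not meet the bound. So $r\geq 1$ follows from non-triviality ($\mathcal{C}\neq\spac$) alone, and you do not need to impose $d\geq 3$ as a convention.
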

\begin{proof}
    It is is to see that, if $\C$ is an additive perfect code, then the size of $\C$ is a power of $p$ and in order to be perfect we need
    \[ |\mathcal{C}|V_{\left\lfloor \frac{d-1}2 \right\rfloor}(\Fq^{\bfn\times\bfm})=q^{mn+t-1},  \]
    i.e. $V_{\left\lfloor \frac{d-1}2 \right\rfloor}(\Fq^{\bfn\times\bfm})=\frac{q^{mn+t-1}}{|\mathcal{C}|}$ and so the assertion.
\end{proof}

Plugging together Corollary \ref{cor:V1} and Proposition \ref{prop:congruence} we obtain the following result.

\begin{theorem}
    A sum-rank-metric space $\spac$ with $\bfm=(m,1,\dots,1)$, $\bfn=(n,1,\dots,1)$, where $m\geq n\geq1$, while $q\geq 2$ is a prime power, and $t\geq 2$ with $t\not\equiv 1\mod p$, does not admit an additive perfect code with minimum distance $d\in\{3,4\}$.
\end{theorem}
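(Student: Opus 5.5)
For $d\in\{3,4\}$ we have $r=\lfloor (d-1)/2\rfloor=1$, so a perfect code has covering radius $1$ and the relevant volume is $V_1(\Fq^{\bfn\times\bfm})$. The plan is to combine Proposition~\ref{prop:congruence} with the explicit formula of Corollary~\ref{cor:V1}. Proposition~\ref{prop:congruence} says that a non-trivial additive perfect code forces $V_1\equiv 0 \pmod p$. We show this congruence fails whenever $t\not\equiv 1 \pmod p$.

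By Corollary~\ref{cor:V1},
\[
V_1=\frac{q^n-1}{q-1}(q^m-1)+(t-1)(q-1)+1 .
\]
We reduce each term modulo $p$, using $q\equiv 0 \pmod p$ since $q$ is a power of $p$:
\[
\frac{q^n-1}{q-1}=1+q+\dots+q^{n-1}\equiv 1, \qquad q^m-1\equiv -1, \qquad (t-1)(q-1)\equiv -(t-1).
\]
Hence
\[
V_1\equiv -1-(t-1)+1 = 1-t \pmod p,
\]
which is consistent with the last part of Proposition~\ref{prop:boundvolume}, giving $1+(t-2)=t-1$ up to sign. Thus $V_1\equiv 0 \pmod p$ holds iff $t\equiv 1 \pmod p$. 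Under the hypothesis $t\not\equiv 1 \pmod p$ this contradicts Proposition~\ref{prop:congruence}, so no non-trivial additive perfect code exists.

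The main point needing care is the word \emph{non-trivial}, together with the standing assumption $|\mC|\ge 2$ in the definition of perfect code. The single-codeword code is excluded by that definition. We also need $V_1>1$ so the division argument in Proposition~\ref{prop:congruence} is meaningful. Since $t\ge 2$, we have $V_1\ge 1+(q-1)>1$, and $V_1$ divides $q^{mn+t-1}$ only if it is a power of $p$ greater than $1$, hence divisible by $p$. Once this is noted, the argument is a short congruence computation with no real obstacle.
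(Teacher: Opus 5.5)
Your proposal is correct and is the paper's argument: reduce the explicit formula of Corollary~\ref{cor:V1} modulo $p$ to get $V_1\equiv 1-t \pmod{p}$, which is incompatible with the conclusion $V_1\equiv 0 \pmod{p}$ of Proposition~\ref{prop:congruence} when $t\not\equiv 1 \pmod{p}$. Your extra check that $V_1>1$, so that as a divisor of $q^{mn+t-1}$ it must be a nontrivial power of $p$ and hence divisible by $p$, makes explicit a step the paper leaves implicit inside Proposition~\ref{prop:congruence}.
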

\begin{proof}
    By Corollary \ref{cor:V1}, we have that
    \[ V_1(\Fq^{\bfn\times\bfm})= \frac{q^n-1}{q-1} (q^m-1)+ (t-1)(q-1)+1\equiv -1-t +1+1\pmod{p},\]
    and by Proposition \ref{prop:congruence}.
    \[ V_1(\Fq^{\bfn\times\bfm})\equiv 0 \pmod{p},\] 
    a contradiction to $t\not\equiv 1\mod p$.
\end{proof}

The above result can be extended to the non-additive cases as follows.

In \cite{AKR2024} it has been shown that if $n=1$ or $2$, then in this case the Ratio-type bound does not perform worse that the Sphere-Packing bound if $d=3$. Here, we generalize this result to any integer $n\geq1$, and also note that the Sphere-Packing bound is strictly outperformed by the Ratio-type one.

\begin{theorem}\label{thm:RTSP.oneblock.d=3}
    Let $\Gamma(\spac)$ be a sum-rank-metric graph such that $\bfm=(m,1,\dots,1)$, $\bfn=(n,1,\dots,1)$, with $m\geq n\geq1$, while $t\geq 2$, and $q\geq 2$ is a prime power. Then the Ratio-type bound does not perform worse than the Sphere-Packing bound for minimum distance $d=3$. Moreover, if additionally $t\not\equiv 1\mod q$, then the Ratio-type bound strictly outperforms the Sphere-Packing bound.
\end{theorem}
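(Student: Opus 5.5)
The cleanest route avoids the closed form of Theorem~\ref{thm:rt-bound_family}. I would work directly with Theorem~\ref{thm:hoffman-k=2} and reduce the comparison to a sign condition on the two eigenvalues around $-1$. Let $D$ be the valency of $\Gamma(\spac)$, so $\theta_0=D$. Every vertex is adjacent exactly to the elements at sum-rank distance $1$ from it, so $D+1=V_1(\spac)$. The Sphere-Packing bound for $d=3$ (Theorem~\ref{thm:non-induced}, $r=1$) is therefore, before flooring,
\[
\frac{|\spac|}{D+1}.
\]
Let $\theta:=\theta_i\le -1$ be the largest eigenvalue not exceeding $-1$, and $\theta':=\theta_{i-1}>-1$ the next one up. The Ratio-type bound reads $|\spac|\,(D+\theta\theta')/\bigl((D-\theta)(D-\theta')\bigr)$. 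Both $\theta$ and $\theta'$ exist: the graph has edges, so its smallest eigenvalue is at most $-1$, and $\theta_0=D>-1$.

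The key algebraic step is to clear denominators. Both $D-\theta$ and $D-\theta'$ are positive. The inequality $\frac{D+\theta\theta'}{(D-\theta)(D-\theta')}\le\frac{1}{D+1}$ becomes $(D+\theta\theta')(D+1)\le (D-\theta)(D-\theta')$. Expanding and cancelling $D^2$ and $\theta\theta'$ leaves
\[
D\,(1+\theta)(1+\theta')\le 0 .
\]
This holds because $1+\theta\le 0<1+\theta'$, which proves the first claim; comparing the floors is then immediate. Equality holds exactly when $\theta=-1$. So the Ratio-type bound is strictly better than the (real-valued) Sphere-Packing bound if and only if $-1$ is not an eigenvalue of $\Gamma(\spac)$. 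Strictness after flooring needs a short separate remark, or is read at the level of the unfloored expressions, as in the earlier comparisons.

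It remains to show that $-1$ is not an eigenvalue when $t\not\equiv 1 \pmod q$. I will do this by a congruence argument on the spectrum. The graph $\Gamma(\spac)$ is the Cartesian product of $\Bil_q(n,m)$ and the Hamming graph $H(t-1,q)$, so its eigenvalues are the sums $\theta_j+\eta_k$. Here $\theta_j$ is given by Eq.\:\eqref{eq:Bil_evals} and $\eta_k=q(t-1-k)-(t-1)$. Clearly $\eta_k\equiv -(t-1)\pmod q$. For the bilinear part, write $\theta_j=(q^{n-j}-1)\frac{q^m-q^j}{q-1}-\frac{q^j-1}{q-1}$.
\begin{itemize}
\item For $j\ge1$, $\frac{q^m-q^j}{q-1}=q^j\frac{q^{m-j}-1}{q-1}\equiv 0$ and $\frac{q^j-1}{q-1}\equiv 1\pmod q$, so $\theta_j\equiv -1$.
\item For $j=0$, $\theta_0=\frac{q^n-1}{q-1}(q^m-1)\equiv 1\cdot(-1)=-1\pmod q$.
\end{itemize}
Hence every eigenvalue of $\Gamma(\spac)$ is $\equiv -t\pmod q$. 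If $-1$ were an eigenvalue we would need $t\equiv 1\pmod q$, contradicting the hypothesis.

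The main obstacle I expect is the consistency check against the explicit formula of Theorem~\ref{thm:rt-bound_family}. If one plugs in its expression with $\varepsilon=(t-1)\bmod q$ and compares it naively with $q^{mn+t-1}(q-1)/\bigl((q-1)V_1\bigr)$, the case $\varepsilon=0$ seems to give the wrong direction. This suggests a mismatch in how $\theta_i$ and $\theta_{i-1}$ are identified in that formula when $-1$ itself is an eigenvalue. That is precisely why I would bypass the closed form and argue from Theorem~\ref{thm:hoffman-k=2}, which the theorem states is the optimal Ratio-type choice for $d=3$. Doing so only needs care about the definition of $\theta_i$ when $-1$ lies in the spectrum, and there the computation above gives equality with the Sphere-Packing bound.
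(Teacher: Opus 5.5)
Your proof is correct, but it follows a different route from the paper's.

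The paper plugs in the explicit closed form from \cite[Theorem 24]{AKR2024}. That form rests on identifying the eigenvalues around $-1$ as $\theta_i=-1-\varepsilon$ and $\theta_{i-1}=q-1-\varepsilon$, with $\varepsilon=(t-1)\bmod q$. The paper clears denominators and reduces the comparison to $\varepsilon(\varepsilon-q)(q-1)\bigl((q^m-1)(q^n-1)+(t-1)(q-1)^2\bigr)\le 0$. The case $n=1$ is imported from \cite[Lemma 25]{AKR2024}.

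You instead prove a graph-independent fact. For any regular graph of valency $D$, the $d=3$ Ratio-type bound is at most $|V|/(D+1)$, because the comparison is equivalent to $D(1+\theta_i)(1+\theta_{i-1})\le 0$, with equality iff $-1$ is an eigenvalue. You then exclude $-1$ from the spectrum because every eigenvalue is congruent to $-t$ modulo $q$.

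The two computations are the same at heart. The paper's left-hand side equals $(q-1)^2D(1+\theta_i)(1+\theta_{i-1})$, with $(1+\theta_i)(1+\theta_{i-1})=\varepsilon(\varepsilon-q)$ at its choice of $\theta_i,\theta_{i-1}$. Your version is self-contained, treats $n=1$ uniformly, and never needs to know which eigenvalues flank $-1$. The paper's version gives an explicit expression for the gap. Granted the identification from \cite{AKR2024}, it also gives the exact equality case $t\equiv 1\pmod q$, whereas you establish only the direction the theorem asserts, which is all that is needed.

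Two small points.

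First, your claim that $D-\theta_{i-1}>0$ requires $\theta_{i-1}\neq\theta_0$, i.e.\ some eigenvalue in $(-1,D)$. This holds: for example, $\theta^{\mathrm{Bil}}_0+\eta_1=D-q\ge q-2\ge 0$ is an eigenvalue when $t\ge 2$. But it should be stated.

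Second, your diagnosis of the ``obstacle'' is off. The identification of $\theta_i,\theta_{i-1}$ behind Theorem~\ref{thm:rt-bound_family} is fine, also for $\varepsilon=0$. What is wrong is a misprint in its first denominator factor: $\varepsilon(q-1)+1$ should read $(\varepsilon+1)(q-1)=(q-1)(D-\theta_i)$. The paper's proof of this theorem in fact uses the corrected factor. With it, the closed form coincides with the Sphere-Packing bound at $\varepsilon=0$.
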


\begin{proof}
    We first note that the case $n=1$ is shown by \cite[Lemma 25]{AKR2024}. For the rest of the proof, it is assumed that $n\geq 2$.
    
    From \cite[Theorem 24]{AKR2024} we know that in our case, the Ratio-type bound reduces to
    \begin{align*}
        A(q,\bfn,\bfm,3)&\leq\frac{q^{mn+t-1}(q-1)}{(q^m-1)(q^n-1)+(t-1)(q-1)^2+(\varepsilon+1)(q-1)}\cdot \\
        &\cdot\frac{(q^m-1)(q^n-1)+(t-1)(q-1)^2+(\varepsilon+1)(\varepsilon-q+1)(q-1)}{(q^m-1)(q^n-1)+(t-2)(q-1)^2+\varepsilon(q-1)},
    \end{align*}
    where $\varepsilon=(t-1)\mod q$.

    On the other hand, the Sphere-Packing bound (Theorem \ref{thm:non-induced} with Corollary \ref{cor:V1}) implies that
    $$A(q,\bfn,\bfm,3)\leq\frac{q^{mn+t-1}(q-1)}{(q^m-1)(q^n-1)+(t-1)(q-1)^2+q-1}.$$

    Thus, to prove the claim one needs to show that
    \begin{multline*}
        \frac{q^{mn+t-1}(q-1)}{(q^m-1)(q^n-1)+(t-1)(q-1)^2+(\varepsilon+1)(q-1)} \cdot \\
        \cdot\frac{(q^m-1)(q^n-1)+(t-1)(q-1)^2+(\varepsilon+1)(\varepsilon-q+1)(q-1)}{(q^m-1)(q^n-1)+(t-2)(q-1)^2+\varepsilon(q-1)} \leq\\
        \leq\frac{q^{mn+t-1}(q-1)}{(q^m-1)(q^n-1)+(t-1)(q-1)^2+q-1}.
    \end{multline*}

    By dividing both sides by a positive value $q^{mn+t-1}(q-1)$, we get an equivalent inequality:
    \begin{multline*}
        \frac{1}{(q^m-1)(q^n-1)+(t-1)(q-1)^2+(\varepsilon+1)(q-1)} \cdot \\
        \cdot\frac{(q^m-1)(q^n-1)+(t-1)(q-1)^2+(\varepsilon+1)(\varepsilon-q+1)(q-1)}{(q^m-1)(q^n-1)+(t-2)(q-1)^2+\varepsilon(q-1)} \leq\\
        \leq\frac{1}{(q^m-1)(q^n-1)+(t-1)(q-1)^2+q-1}.
    \end{multline*}
    Moreover, since the denominators on both sides are clearly positive values, we can multiply by them and further simplify to
    $$\varepsilon (\varepsilon - q) (q-1) \left((q^m-1)(q^n-1) + (t-1)(q-1)^2\right) \leq 0.$$
    If $\varepsilon=0$, then the equality is achieved, and the Ratio-type and Sphere-Packing bounds coincide, satisfying the theorem's claim. Alternatively, if $\varepsilon>0$, then we have $\varepsilon (\varepsilon - q) (q-1)<0$, we derive
    $$t > -\frac{(q^m-1)(q^n-1)}{(q-1)^2}+1,$$
    which is clearly true under the claim's assumption that $t$ and $q$ are at least $2$ while $m\geq n\geq 1$.
\end{proof}

Theorem~\ref{thm:RTSP.oneblock.d=3} gives an immediate non-existence result for perfect codes.

\begin{corollary}
    A sum-rank-metric space $\spac$ with $\bfm=(m,1,\dots,1)$, $\bfn=(n,1,\dots,1)$, where $m\geq n\geq1$, while $q\geq 2$ is a prime power, and $t\geq 2$ with $t\not\equiv 1\mod q$, does not admit a perfect code with minimum distance $d=3$.
\end{corollary}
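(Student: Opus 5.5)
The plan is a short contradiction argument built on Theorem~\ref{thm:RTSP.oneblock.d=3}. Suppose $\mathcal{C}\subseteq\spac$ is a perfect code with minimum distance $d=3$. For $d=3$ the packing radius is $r=\lfloor (d-1)/2\rfloor=1$. By definition of a perfect code, $\mathcal{C}$ attains the Sphere-Packing bound of Theorem~\ref{thm:non-induced} with equality. Using Corollary~\ref{cor:V1}, this means
\[
|\mathcal{C}|=\frac{q^{mn+t-1}}{V_1(\spac)}=\frac{q^{mn+t-1}(q-1)}{(q^m-1)(q^n-1)+(t-1)(q-1)^2+q-1}.
\]
In particular, this quotient must be an integer.

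On the other hand, $|\mathcal{C}|\le A(q,\bfn,\bfm,3)$, and the Ratio-type bound of Theorem~\ref{thm:rt-bound_family} bounds $A(q,\bfn,\bfm,3)$ from above. The key step is to invoke the strict part of Theorem~\ref{thm:RTSP.oneblock.d=3}. When $t\not\equiv 1 \pmod q$ we have $\varepsilon=(t-1)\bmod q\in\{1,\dots,q-1\}$, so $\varepsilon(\varepsilon-q)(q-1)<0$. Hence the Ratio-type value is strictly smaller than the Sphere-Packing value $q^{mn+t-1}/V_1$. This gives
\[
|\mathcal{C}|\le A(q,\bfn,\bfm,3)\le \mathrm{RT}<\frac{q^{mn+t-1}}{V_1(\spac)}=|\mathcal{C}|,
\]
which is a contradiction, so no perfect code exists.

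The one point needing care is that strictness has to be read off correctly from the proof of Theorem~\ref{thm:RTSP.oneblock.d=3}, including the case $n=1$, which that proof delegates to \cite[Lemma 25]{AKR2024}. For $n=1$ I would redo the same cleared-denominator computation directly. The factor $\varepsilon(\varepsilon-q)(q-1)\bigl((q^m-1)(q-1)+(t-1)(q-1)^2\bigr)$ is strictly negative whenever $\varepsilon\neq 0$, so strictness holds there as well. No integrality argument is needed, since the inequality is strict against the exact size of the putative perfect code.
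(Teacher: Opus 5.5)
Your proposal is correct and follows the paper's argument. The paper also derives this corollary directly from the strict part of Theorem~\ref{thm:RTSP.oneblock.d=3}: a perfect code would have size exactly $q^{mn+t-1}/V_1(\spac)$, and this exceeds the Ratio-type bound whenever $\varepsilon=(t-1)\bmod q\neq 0$. Your direct check of the $n=1$ case via the same cleared-denominator factor $\varepsilon(\varepsilon-q)(q-1)\bigl((q^m-1)(q^n-1)+(t-1)(q-1)^2\bigr)$ is a harmless addition.
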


\subsubsection{$\bfm=\bfn=(2,2,\dots,2)$}

Based on the result of Theorem~\ref{thm:RT22} we can compare the Ratio-type bound with the Sphere-Packing bound in case $d=3$.

\begin{theorem}
    Let $\spac$ be the sum-rank-metric space with $\bfm=\bfn=(2,\dots,2)$ for $t\geq2$ and a prime power $q\geq2$. Then the Ratio-type bound does not perform worse than the Sphere-Packing bound for minimum distance $d=3$. Moreover, if additionally $t(q+1)\not\equiv 1\mod q^2$, then the Ratio-type bound strictly outperforms the Sphere-Packing bound.
\end{theorem}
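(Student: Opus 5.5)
The plan is to compare the closed form of the Ratio-type bound from Theorem~\ref{thm:RT22} directly with the Sphere-Packing bound of Theorem~\ref{thm:non-induced}, following the same scheme as the proof of Theorem~\ref{thm:RTSP.oneblock.d=3}.

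\textbf{Step 1: the Sphere-Packing bound.} For $d=3$ we have $r=1$, so we need $V_1(\spac)$. A tuple has sum-rank $1$ exactly when one block has rank $1$ and all other blocks are zero. The number of rank-$1$ matrices in $\F_q^{2\times 2}$ is
\[
\tbinom{2}{1}_q(q^2-1)=(q+1)(q^2-1),
\]
which is the valency of $\Gamma(\F_q^{2\times 2})$. Hence
\[
V_1=1+t(q^2-1)(q+1)=1+\delta,
\]
where $\delta$ is the valency already computed in the proof of Theorem~\ref{thm:RT22}. The Sphere-Packing bound therefore reads $A(q,\bfn,\bfm,3)\le q^{4t}/(1+\delta)$.

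\textbf{Step 2: reduce to a polynomial inequality.} Write $s=1+\epsilon$ and $Q=q^2$. The Ratio-type bound of Theorem~\ref{thm:RT22} becomes
\[
q^{4t}\,\frac{\delta-s(Q-s)}{(\delta+s)(\delta+s-Q)} .
\]
Both denominators are positive: $\delta\ge 2(q^2-1)(q+1)>q^2$ and $s\ge 1$, so $\delta+s-Q>0$. Cancelling $q^{4t}$ and cross-multiplying, the claim is equivalent to
\[
\bigl(\delta-s(Q-s)\bigr)(\delta+1)\le(\delta+s)(\delta+s-Q).
\]

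\textbf{Step 3: expand and factor.} Expanding both sides, the $\delta^2$ terms and the constant terms $\pm s(Q-s)$ cancel. What remains is
\[
\mathrm{RHS}-\mathrm{LHS}=\delta\,(sQ-s^2+2s-Q-1)=\delta\,(s-1)(Q-s+1)=\delta\,\epsilon\,(q^2-\epsilon).
\]
Since $0\le\epsilon\le q^2-1$ and $\delta>0$, this quantity is nonnegative. This proves that the Ratio-type bound is never worse than the Sphere-Packing bound.

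\textbf{Step 4: strictness.} The difference $\delta\,\epsilon\,(q^2-\epsilon)$ is zero precisely when $\epsilon=0$. By definition $\epsilon=(tq+t-1)\bmod q^2$, so $\epsilon=0$ exactly when $t(q+1)\equiv 1\pmod{q^2}$. Hence, under the hypothesis $t(q+1)\not\equiv 1\pmod{q^2}$, we have $\epsilon\neq 0$ and the inequality is strict.

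The only delicate point is the algebra in Step~3, where the difference must be factored cleanly as $\delta\,\epsilon\,(q^2-\epsilon)$. Steps~1, 2 and~4 are routine, so I would verify Step~3 carefully by expanding both sides explicitly.
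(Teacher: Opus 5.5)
Your proposal is correct and follows essentially the paper's route. The paper also writes the Sphere-Packing bound as $q^{4t}/(\delta+1)$ with $\delta=t(q^2-1)(q+1)$ and compares it with the closed form of Theorem~\ref{thm:RT22}. It reduces the difference to a fraction with numerator $\epsilon\,\delta\,(q^2-\epsilon)$ and positive denominator $(\delta+1)(\delta+1+\epsilon)(\delta+1+\epsilon-q^2)$, which is exactly the quantity you obtain after cross-multiplying. Your explicit count of $V_1$ and your positivity check of $\delta+1+\epsilon-q^2$ just fill in steps the paper treats as routine.
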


\begin{proof}
    Under the conditions specified, the Sphere-Packing bound takes form
    $$A(q,\bfn,\bfm,3)\leq \frac{q^{4t}}{t (q^2 - 1) (q + 1)+1},$$
    so to prove the claim, we need to prove the inequality
    {\small{
    $$\frac{q^{4t}}{t (q^2 - 1) (q + 1)+1} \geq q^{4t}\frac{t(q^2-1)(q+1)-(1+\epsilon)(q^2-1-\epsilon)}{(t(q^2-1)(q+1)+1+\epsilon)(t(q^2-1)(q+1)+1+\epsilon-q^2)},$$
    }}
    where the right-hand side corresponds to the value of the Ratio-type bound derived in Theorem~\ref{thm:RT22}. After standard algebraic manipulations, the inequality can be rewritten as
    $$\frac{\epsilon (q+1)^2 (q-1) (q^2-\epsilon) t}{((q+1)^2 (q-1) t+1) ((q+1)^2 (q-1) t + \epsilon +1) ((q^2-1) (qt+t-1)+\epsilon)} \geq 0.$$
    Clearly, every factor of the left-hand side expression is positive under the assumptions $q\geq2$ and $t\geq2$, except for $\epsilon$, which is non-negative in general and is equal to zero if and only if $t(q+1)\mod q^2$.
\end{proof}

\begin{corollary}
    A sum-rank-metric space $\spac$ with $\bfm=\bfn=(2,\dots,2)$, a prime power $q\geq2$, and $t\geq2$ with $t(q+1)\not\equiv 1\mod q^2$, does not admit a perfect code with minimum distance $d=3$.
\end{corollary}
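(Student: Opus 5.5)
This corollary follows from the theorem immediately preceding it, together with the definition of a perfect code. Suppose, for contradiction, that $\mC\subseteq\spac$ with $\bfm=\bfn=(2,\dots,2)$ is a perfect code with $\srk(\mC)\geq 3$. By definition, $|\mC|$ equals the Sphere-Packing bound of Theorem~\ref{thm:non-induced} with $r=\lfloor (3-1)/2\rfloor=1$. In this space the bound reads $|\mC| = q^{4t}/(t(q^2-1)(q+1)+1)$, because $V_1$ is one plus the valency $t(q^2-1)(q+1)$ of $\Gamma(\spac)$. Meanwhile $|\mC|\leq A(q,\bfn,\bfm,3)$, and Theorem~\ref{thm:RT22} bounds $A(q,\bfn,\bfm,3)$ by the Ratio-type expression.

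The key step is to apply the previous theorem. Under the hypothesis $t(q+1)\not\equiv 1 \pmod{q^2}$, we have $\epsilon=(tq+t-1)\bmod q^2\neq 0$. Hence $0<\epsilon<q^2$, and every factor of the rational expression obtained in that proof is strictly positive. This makes the Ratio-type bound strictly smaller than the Sphere-Packing value. Chaining the inequalities gives
\[
\frac{q^{4t}}{t(q^2-1)(q+1)+1}=|\mC|\leq A(q,\bfn,\bfm,3)\leq \mathrm{RT} < \frac{q^{4t}}{t(q^2-1)(q+1)+1},
\]
which is a contradiction.

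The only point needing care is the link between the congruence hypothesis and $\epsilon>0$. The statement of the previous theorem writes this condition loosely, so I will verify explicitly that $\epsilon=0$ holds exactly when $tq+t-1\equiv 0 \pmod{q^2}$, i.e.\ when $t(q+1)\equiv 1\pmod{q^2}$. I will also check that $q^2-\epsilon>0$, which holds automatically since $\epsilon$ is a residue mod $q^2$. Beyond this, I expect no obstacle. The Sphere-Packing bound carries a floor, but this is harmless: a perfect code attains the floored value, and that value is at most the unfloored quantity, which the Ratio-type bound already beats strictly.
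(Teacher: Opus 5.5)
Your argument is the paper's: a perfect code would have size $S:=q^{4t}/(t(q^2-1)(q+1)+1)$. The preceding theorem shows that when $\epsilon=(tq+t-1)\bmod q^2\neq 0$, i.e.\ $t(q+1)\not\equiv 1\pmod{q^2}$, the Ratio-type bound of Theorem~\ref{thm:RT22} on $A(q,\bfn,\bfm,3)$ is strictly below $S$.

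The one thing to fix is your closing justification for ignoring the floor. From $|\mC|=\lfloor S\rfloor\le S$ and $\mathrm{RT}<S$ you cannot conclude $|\mC|>\mathrm{RT}$, so that reasoning fails. The correct point is that ``perfect'' means the radius-$1$ balls around codewords tile the space, i.e.\ $|\mC|\,V_1=q^{4t}$. This is exactly how the paper uses the notion in its other perfect-code proofs. Then $S$ is an integer and the floor is vacuous, which is precisely what your displayed chain already assumes.
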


\begin{remark}
The above non-existence result complements recent results on perfect codes in the sum-rank metric, such as those reported in~\cite[Proposition 5.1 and Theorems 6.2 and 6.3]{PRZ2025}. In particular, the approach taken here, based on comparing the Ratio-type bound with the Sphere-Packing bound, provides an alternative, spectral method for ruling out perfect codes in specific parameter regimes not previously covered in \cite{PRZ2025}. 
\end{remark}

\section{Concluding remarks}\label{sec:concludingremarks}
We investigated upper bounds on the cardinality of sum-rank-metric codes through a combination of spectral, linear programming, and semidefinite programming techniques. By exploiting the connection between sum-rank-metric spaces and graph theory, we developed and compared several bounds, highlighting their relative strengths in different parameter regimes.
We also showed a new application of eigenvalue based bounds arising from sum-rank-metric graphs, illustrating their power in providing several non-existence results for perfect sum-rank-metric codes.

In particular, we first showed a detailed comparison between the Delsarte linear programming bound and the Ratio-type bound, establishing their equivalence in the rank-metric case and clarifying their behavior in more general settings. Furthermore, we introduced a new semidefinite programming (SDP) bound for the sum-rank metric (the so-called Schrijver-type SDP bound), and showed, via computational results, that it improves upon previously known bounds in several instances. Finally, and as a consequence of our bounds (both the eigenvalue bounds as the new Schrijver-type SDP bound), we obtained multiple non-existence results for both maximum sum-rank distance (MSRD) codes and perfect codes. These results illustrate how upper bounds based on algebraic graph theory and optimization techniques can be used not only to efficiently estimating the size of optimal codes, but also to exclude their existence in certain parameter regimes.

Several directions for future research remain open. On the theoretical side, it would be interesting to further investigate the gap between different bounds in the general sum-rank setting and to better understand the regimes in which SDP techniques provide a significant advantage. On the computational side, improving the scalability of SDP methods, for example through analytical symmetry reduction, could allow the study of larger parameter sets. 

\subsection*{Acknowledgements} 

Aida Abiad is supported by NWO (Dutch Research Council) through the grant VI.Vidi.213.085. 
Ferdinando Zullo was partially supported by the Italian National Group for Algebraic and Geometric Structures and their Applications (GNSAGA - INdAM).
This work has been partially written during the Opera 2026 conference in Bordeaux and so we acknowledge the support from the International Research Laboratory LYSM in partnership between CNRS and INdAM. Ferdinando Zullo is very grateful for the hospitality of Department of Mathematics and Computer Science of
Eindhoven University of Technology, where he spent a week as a visiting researcher.

\bibliographystyle{abbrv}
\bibliography{references}

\newpage
\appendix
\section{Tables}
This appendix shows all the tables referenced throughout the paper, providing a view of the data and the obtained computational results for ease of consultation.

{\footnotesize{
\begin{table}[h]
    \centering
\makebox[\textwidth][c]{%
$\begin{array}{|ccc|cc||ccc|cc||ccc|cc|} \hline
t & q & d & \text{DLP=RT} & \text{S} &
t & q & d & \text{DLP=RT} & \text{S} &
t & q & d & \text{DLP=RT} & \text{S} \\ \hline
5 & 3 & 3 & 18 & 27 &
9 & 2 & 4 & 25 & 64 &
12 & 3 & 5 & 1562 & 6561 \\
6 & 2 & 3 & 8 & 16 &
9 & 3 & 3 & 937 & 2187 &
12 & 3 & 6 & 729 & 2187 \\
6 & 3 & 3 & 48 & 81 &
9 & 3 & 4 & 340 & 729 &
13 & 2 & 4 & 292 & 1024 \\
6 & 3 & 4 & 18 & 27 &
9 & 4 & 3 & 9362 & 16384 &
13 & 2 & 11 & 2 & 8 \\
6 & 4 & 3 & 179 & 256 &
9 & 4 & 4 & 2340 & 4096 &
14 & 2 & 3 & 1024 & 4096 \\
7 & 2 & 3 & 16 & 32 &
9 & 4 & 5 & 614 & 1024 &
14 & 2 & 12 & 2 & 8 \\
7 & 2 & 4 & 8 & 16 &
10 & 2 & 3 & 85 & 256 &
15 & 2 & 3 & 2048 & 8192 \\
7 & 2 & 5 & 3 & 8 &
10 & 3 & 3 & 2811 & 6561 &
15 & 2 & 4 & 1024 & 4096 \\
7 & 3 & 3 & 145 & 243 &
10 & 3 & 4 & 937 & 2187 &
15 & 2 & 13 & 2 & 8 \\
7 & 3 & 4 & 48 & 81 &
10 & 3 & 5 & 243 & 729 &
16 & 2 & 3 & 3640 & 16384 \\
7 & 4 & 3 & 614 & 1024 &
11 & 2 & 3 & 170 & 512 &
16 & 2 & 4 & 2048 & 8192 \\
7 & 4 & 4 & 179 & 256 &
11 & 2 & 4 & 85 & 256 &
16 & 2 & 5 & 425 & 4096 \\
7 & 5 & 3 & 2291 & 3125 &
11 & 2 & 9 & 2 & 8 &
16 & 2 & 14 & 2 & 8 \\
8 & 2 & 3 & 25 & 64 &
11 & 3 & 3 & 7029 & 19683 &
17 & 2 & 4 & 3640 & 16384 \\
8 & 2 & 4 & 16 & 32 &
11 & 3 & 4 & 2811 & 6561 &
17 & 2 & 6 & 425 & 4096 \\
8 & 2 & 6 & 3 & 8 &
11 & 3 & 5 & 729 & 2187 &
17 & 2 & 15 & 2 & 8 \\
8 & 3 & 3 & 340 & 729 &
11 & 3 & 6 & 243 & 729 &
18 & 2 & 3 & 13107 & 65536 \\
8 & 4 & 3 & 2340 & 4096 &
12 & 2 & 3 & 292 & 1024 &
18 & 2 & 16 & 2 & 8 \\
8 & 4 & 4 & 614 & 1024 &
12 & 2 & 4 & 170 & 512 &
19 & 2 & 3 & 26214 & 131072 \\
8 & 5 & 3 & 9672 & 15625 &
12 & 2 & 10 & 2 & 8 &
19 & 2 & 4 & 13107 & 65536 \\
8 & 5 & 4 & 2291 & 3125 &
12 & 3 & 4 & 7029 & 19683 &
19 & 2 & 17 & 2 & 8 \\ \hline
\end{array}$
}
    \caption{List of parameters of Hamming graphs on at most 1000000 vertices for which the optima of Delsarte's and the Ratio-type coincide with each other (denoted by DLP and RT, respectively), but not with the Singleton bound (denoted by S). Examples covered by Theorem~\ref{thm:Ham_q=2_d=t-1} are excluded.}
    \label{tab:Ham_Del_RT}
\end{table}
}}

\begin{table*}[!htbp]
{
    \centering
    {\scriptsize
\makebox[\textwidth][c]{%
$\begin{array}{|ccllc|c|ccccc|}
\hline
t & q & {\bf n} & {\bf m} & d & |V| & \alpha_{d-1} & \vartheta_{d-1} & \text{RT}_{d-1} & \text{DLP}_{d} & \text{SDP}_{d} \\
\hline
 2 & 2 &             (2, 2)            &           (2, 2)     & 3       & 256  &  9 & 10 & 11 & 10 & \bf{9} \\  
 3 & 2 &           (2, 2, 1)           &         (2, 2, 1)    & 3       & 512   & 18 & 20 & 25 &  20 & \bf{19} \\ 
 3 & 2 &           (2, 2, 1)           &           (2, 2, 1)    & 4       & 512   & 5     & 8 & 10 & 6 & \bf{5} \\
 4 & 2 &          (2, 1, 1, 1)      &  (2, 2, 2, 1) & 3        & 512 & \text{time} & 24  & 28 & 24 & \bf{23} \\  
 4 & 2 &      (2, 1, 1, 1)      &  (2, 2, 2, 2) & 4        & 1024 & \text{time} &\text{time} & 18 & 10 & \bf{9} \\
 4 & 2 & (2, 2, 1, 1) & (2, 2, 1, 1) & 3 & 1024 & \text{time} &\text{time} & 46 & 40 & \bf{39} \\
 4 & 2 & (2, 2, 1, 1) & (2, 2, 1, 1) & 4 & 1024 & \text{time} &\text{time} & 19 & 12 & \bf{11} \\
 4 & 3 & (1, 1, 1, 1) & (2, 2, 1, 1) & 3 & 729 & \text{time} & 9 & 34 & 9 & \bf{3} \\
 5 & 2 & (2, 1, 1, 1, 1) & (2, 2, 2, 1, 1) & 3 & 1024 & \text{time} &\text{time} & 56 & 49 & \bf{48} \\
 5 & 2 & (2, 1, 1, 1, 1) & (2, 2, 2, 1, 1) & 4 & 1024 & \text{time} &\text{time} & 22 & 13 & \bf{12} \\
 6 & 2 &     (1, 1, 1, 1, 1, 1) &  (2, 1, 1, 1, 1, 1)   & 3        & 128  &   9  & 12 & 12   & 12 & \bf{11} \\
 6 & 2 & (2, 1, 1, 1, 1, 1) & (2, 1, 1, 1, 1, 1) & 4 & 512 & 9 & 12 & 16 & 12 & \bf{11} \\
 6 & 2 & (2, 1, 1, 1, 1, 1) & (2, 2, 1, 1, 1, 1) & 3 & 1024 & \text{time} &\text{time} & 56 & 56 & \bf{53} \\
 6 & 2 & (2, 1, 1, 1, 1, 1) & (2, 2, 1, 1, 1, 1) & 5 & 1024 & \text{time} &\text{time} & 11 & 6 & \bf{5} \\
 7 & 2 &  (1, 1, 1, 1, 1, 1, 1)  &  (2, 1, 1, 1, 1, 1, 1) & 3 & 256  &  \text{time}  & 25 &  25  & 21 & \bf{20} \\
 7 & 2 & (1, 1, 1, 1, 1, 1, 1) & (2, 2, 1, 1, 1, 1, 1) & 4 & 512 & 9 & 12 & 19 & 12 & \bf{11} \\
 7 & 2 & (1, 1, 1, 1, 1, 1, 1) & (3, 2, 1, 1, 1, 1, 1) & 4 & 1024 & \text{time} &\text{time} & 30 & 12 & \bf{11} \\
 7 & 2 & (2, 1, 1, 1, 1, 1, 1) & (2, 1, 1, 1, 1, 1, 1) & 4 & 1024 & \text{time} &\text{time} & 30 & 21 & \bf{20} \\
 8 & 2 &    (1, 1, 1, 1, 1, 1, 1, 1)   &    (2, 1, 1, 1, 1, 1, 1, 1) & 3   & 512  &  \text{time}  & 42 &  42  & 42 & \bf{41} \\
\hline
\end{array}$
}
}
}
\caption{Some sum-rank-metric graphs on $|V|\leq2000$ vertices with the exact $(d-1)$-independence number $\alpha_{d-1}$, the Lov\'asz theta-$(d-1)$ number $\vartheta_{d-1}$, Ratio-type bound $\mathrm{RT}_{d-1}$, Delsarte's LP bound $\mathrm{DLP}_{d}$, and the Schrijver-type SDP bound $\mathrm{SSDP}_{d}$, and the Schrijver-type SDP bound strictly improving the result of Delsarte's LP bound. We write `time' whenever the computation time exceeds $2$ hours on a standard laptop.}
    \label{tab:sdp}
\end{table*}

\begin{table}[]
{\scriptsize
    \centering
\makebox[\textwidth][c]{%
    $\begin{array}{|cc|cc|c|c|cc|cccc|cccc|} \hline
    t & q & {\bf n} & {\bf m} & d & |V| & \text{RT} & \text{DLP} & \text{iS} & \text{iH} & \text{iP} & \text{iE} & \text{S} & \text{SP} & \text{PSP} & \text{TD} \\ \hline
2 & 2 & (3, 2) & (3, 2) & 3 & 8192 & 131 & 109 & 512 & 910 & 0 & 2222 & 128 & 138 & 138 & 0 \\
2 & 2 & (3, 2) & (3, 2) & 4 & 8192 & 31 & 14 & 64 & 910 & 0 & 272 & 16 & 138 & 40 & 0 \\
2 & 2 & (3, 2) & (3, 3] & 4 & 32768 & 93 & 36 & 64 & 910 & 0 & 272 & 64 & 461 & 141 & 0 \\
3 & 2 & (2, 2, 1) & (2, 2, 1) & 4 & 512 & 10 & 6 & 16 & 64 & 16 & 27 & 8 & 25 & 18 & 0 \\
3 & 2 & (3, 2, 1) & (3, 2, 1) & 3 & 16384 & 273 & 219 & 4096 & 6096 & 0 & 15362 & 256 & 273 & 273 & 0 \\
3 & 2 & (3, 2, 1) & (3, 2, 1) & 4 & 16384 & 72 & 28 & 512 & 6096 & 0 & 1768 & 32 & 273 & 81 & 0 \\
3 & 2 & (3, 2, 1) & (3, 2, 2) & 4 & 32768 & 134 & 57 & 512 & 6096 & 0 & 1768 & 64 & 528 & 163 & 0 \\
3 & 2 & (3, 2, 1) & (3, 2, 2) & 5 & 32768 & 42 & 11 & 64 & 336 & 0 & 240 & 16 & 33 & 78 & 0 \\
3 & 2 & (2, 2, 2) & (3, 2, 2) & 4 & 16384 & 115 & 59 & 512 & 6096 & 0 & 1768 & 64 & 409 & 146 & 0 \\
4 & 2 & (2, 1, 1, 1) & (2, 2, 2, 1) & 4 & 512 & 11 & 6 & 16 & 64 & 16 & 27 & 8 & 30 & 32 & 12 \\
4 & 2 & (3, 1, 1, 1) & (3, 2, 2, 2) & 3 & 32768 & 526 & 493 & 4096 & 6096 & 0 & 15362 & 512 & 555 & 555 & 0 \\
4 & 2 & (3, 1, 1, 1) & (3, 2, 2, 2) & 5 & 32768 & 50 & 14 & 64 & 336 & 0 & 240 & 16 & 39 & 146 & 0 \\
4 & 2 & (2, 2, 1, 1) & (2, 2, 1, 1) & 4 & 1024 & 19 & 12 & 64 & 215 & 0 & 119 & 16 & 48 & 36 & 0 \\
4 & 2 & (2, 2, 1, 1) & (2, 2, 2, 1) & 4 & 2048 & 32 & 16 & 64 & 215 & 0 & 119 & 32 & 89 & 73 & 0 \\
4 & 2 & (2, 2, 1, 1) & (2, 2, 2, 1) & 5 & 2048 & 12 & 5 & 16 & 26 & 10 & 19 & 8 & 10 & 9 & 12 \\
4 & 2 & (2, 2, 1, 1) & (3, 2, 2, 2) & 5 & 16384 & 43 & 11 & 64 & 336 & 0 & 240 & 16 & 35 & 16 & 0 \\
4 & 2 & (3, 2, 1, 1) & (3, 2, 1, 1) & 3 & 32768 & 528 & 438 & 32768 & 32768 & 0 & 32768 & 512 & 537 & 537 & 0 \\
4 & 2 & (3, 2, 1, 1) & (3, 2, 1, 1) & 4 & 32768 & 151 & 57 & 4096 & 32768 & 0 & 11904 & 64 & 537 & 163 & 0 \\
4 & 2 & (2, 2, 2, 1) & (2, 2, 2, 1) & 6 & 8192 & 12 & 5 & 16 & 77 & 8 & 14 & 8 & 25 & 18 & 12 \\
4 & 2 & (2, 2, 2, 1) & (3, 2, 2, 1) & 4 & 32768 & 229 & 115 & 4096 & 32768 & 0 & 11904 & 128 & 799 & 292 & 0 \\
4 & 2 & (2, 2, 2, 1) & (3, 2, 2, 1) & 6 & 32768 & 29 & 6 & 64 & 1943 & 0 & 211 & 8 & 55 & 18 & 24 \\
5 & 2 & (2, 1, 1, 1, 1) & (2, 2, 2, 1, 1) & 4 & 1024 & 22 & 13 & 64 & 215 & 0 & 119 & 16 & 56 & 64 & 0 \\
5 & 2 & (2, 1, 1, 1, 1) & (2, 2, 2, 2, 1) & 4 & 2048 & 36 & 20 & 64 & 215 & 0 & 119 & 32 & 102 & 128 & 0 \\
5 & 2 & (2, 1, 1, 1, 1) & (2, 2, 2, 2, 1) & 5 & 2048 & 14 & 6 & 16 & 26 & 10 & 19 & 8 & 13 & 11 & 8 \\
5 & 2 & (2, 2, 1, 1, 1) & (2, 2, 1, 1, 1) & 4 & 2048 & 38 & 24 & 256 & 744 & 0 & 407 & 32 & 93 & 73 & 0 \\
5 & 2 & (2, 2, 1, 1, 1) & (2, 2, 1, 1, 1) & 5 & 2048 & 13 & 7 & 64 & 77 & 0 & 99 & 8 & 11 & 9 & 0 \\
5 & 2 & (2, 2, 1, 1, 1) & (2, 2, 2, 1, 1) & 4 & 4096 & 64 & 32 & 256 & 744 & 0 & 407 & 64 & 170 & 146 & 0 \\
5 & 2 & (2, 2, 1, 1, 1) & (2, 2, 2, 1, 1) & 5 & 4096 & 22 & 11 & 64 & 77 & 0 & 99 & 16 & 19 & 17 & 0 \\
5 & 2 & (2, 2, 1, 1, 1) & (2, 2, 2, 2, 1) & 6 & 8192 & 15 & 5 & 16 & 77 & 8 & 14 & 8 & 31 & 32 & 8 \\
5 & 2 & (2, 2, 1, 1, 1) & (3, 2, 2, 2, 1) & 6 & 32768 & 38 & 6 & 64 & 1943 & 0 & 211 & 8 & 65 & 32 & 10 \\
5 & 2 & (2, 2, 2, 1, 1) & (2, 2, 2, 1, 1) & 6 & 16384 & 23 & 9 & 64 & 236 & 0 & 79 & 16 & 47 & 36 & 0 \\
5 & 2 & (2, 2, 2, 1, 1) & (2, 2, 2, 2, 1) & 6 & 32768 & 36 & 12 & 64 & 236 & 0 & 79 & 32 & 81 & 73 & 0 \\
6 & 2 & (2, 1, 1, 1, 1, 1) & (2, 1, 1, 1, 1, 1) & 4 & 512 & 16 & 12 & 256 & 512 & 0 & 407 & 16 & 34 & 32 & 0 \\
6 & 2 & (2, 1, 1, 1, 1, 1) & (2, 2, 1, 1, 1, 1) & 5 & 1024 & 11 & 6 & 64 & 77 & 0 & 99 & 8 & 9 & 8 & 8 \\
6 & 2 & (2, 1, 1, 1, 1, 1) & (2, 2, 2, 1, 1, 1) & 4 & 2048 & 42 & 26 & 256 & 744 & 0 & 407 & 32 & 107 & 128 & 0 \\
6 & 2 & (2, 1, 1, 1, 1, 1) & (2, 2, 2, 2, 1, 1) & 4 & 4096 & 70 & 40 & 256 & 744 & 0 & 407 & 64 & 195 & 256 & 0 \\
6 & 2 & (2, 1, 1, 1, 1, 1) & (2, 2, 2, 2, 1, 1) & 5 & 4096 & 26 & 12 & 64 & 77 & 0 & 99 & 16 & 23 & 21 & 0 \\
6 & 2 & (2, 1, 1, 1, 1, 1) & (2, 2, 2, 2, 2, 1) & 5 & 8192 & 41 & 16 & 64 & 77 & 0 & 99 & 32 & 38 & 36 & 0 \\
6 & 2 & (2, 2, 1, 1, 1, 1) & (2, 2, 1, 1, 1, 1) & 4 & 4096 & 72 & 48 & 1024 & 2621 & 0 & 1419 & 64 & 178 & 146 & 0 \\
6 & 2 & (2, 2, 1, 1, 1, 1) & (2, 2, 1, 1, 1, 1) & 5 & 4096 & 24 & 14 & 256 & 236 & 0 & 366 & 16 & 21 & 18 & 0 \\
6 & 2 & (2, 2, 1, 1, 1, 1) & (2, 2, 1, 1, 1, 1) & 6 & 4096 & 12 & 6 & 64 & 236 & 0 & 79 & 8 & 21 & 16 & 8 \\
6 & 2 & (2, 2, 1, 1, 1, 1) & (2, 2, 2, 1, 1, 1) & 5 & 8192 & 39 & 21 & 256 & 236 & 0 & 366 & 32 & 34 & 32 & 0 \\
6 & 2 & (2, 2, 1, 1, 1, 1) & (2, 2, 2, 1, 1, 1) & 6 & 8192 & 18 & 7 & 64 & 236 & 0 & 79 & 8 & 34 & 32 & 16 \\
6 & 2 & (2, 2, 1, 1, 1, 1) & (3, 2, 2, 1, 1, 1) & 5 & 32768 & 90 & 31 & 4096 & 11740 & 0 & 14226 & 32 & 70 & 32 & 0 \\
6 & 2 & (2, 2, 1, 1, 1, 1) & (2, 2, 2, 2, 1, 1) & 6 & 16384 & 28 & 10 & 64 & 236 & 0 & 79 & 16 & 57 & 64 & 0 \\
6 & 2 & (2, 2, 1, 1, 1, 1) & (2, 2, 2, 2, 2, 1) & 6 & 32768 & 44 & 12 & 64 & 236 & 0 & 79 & 32 & 96 & 128 & 0 \\
6 & 2 & (2, 2, 2, 1, 1, 1) & (2, 2, 2, 1, 1, 1) & 6 & 32768 & 42 & 17 & 256 & 744 & 0 & 331 & 32 & 87 & 73 & 0 \\
6 & 2 & (2, 2, 2, 1, 1, 1) & (2, 2, 2, 1, 1, 1) & 7 & 32768 & 17 & 6 & 64 & 100 & 28 & 52 & 8 & 14 & 9 & 16 \\
7 & 2 & (2, 1, 1, 1, 1, 1, 1) & (2, 2, 2, 1, 1, 1, 1) & 4 & 4096 & 85 & 53 & 1024 & 2621 & 0 & 1419 & 64 & 204 & 256 & 0 \\
7 & 2 & (2, 1, 1, 1, 1, 1, 1) & (2, 2, 2, 2, 1, 1, 1) & 4 & 8192 & 141 & 80 & 1024 & 2621 & 0 & 1419 & 128 & 372 & 512 & 0 \\
7 & 2 & (2, 1, 1, 1, 1, 1, 1) & (2, 2, 2, 2, 1, 1, 1) & 5 & 8192 & 45 & 24 & 256 & 236 & 0 & 366 & 32 & 42 & 39 & 0 \\
7 & 2 & (2, 1, 1, 1, 1, 1, 1) & (2, 2, 2, 2, 2, 1, 1) & 5 & 16384 & 75 & 32 & 256 & 236 & 0 & 366 & 64 & 69 & 68 & 0 \\
7 & 2 & (2, 1, 1, 1, 1, 1, 1) & (2, 2, 2, 2, 2, 1, 1) & 6 & 16384 & 32 & 10 & 64 & 236 & 0 & 79 & 16 & 69 & 21 & 20 \\
7 & 2 & (2, 1, 1, 1, 1, 1, 1) & (2, 2, 2, 2, 2, 2, 1) & 6 & 32768 & 47 & 14 & 64 & 236 & 0 & 79 & 32 & 116 & 36 & 0 \\
\hline
    \end{array}$
}
}
    \caption{\footnotesize List of sum-rank-metric graphs on $|V|\leq50000$ vertices with $t\leq 7$ for which Delsarte's LP bound (DLP) rules out the existence of an MSRD code while the Ratio-type bound (RT) or the coding bounds of Theorems~\ref{thm:induced}, \ref{thm:non-induced-singleton-td} and \ref{thm:non-induced} do not. Here, iS, iH, iP, and iE stand for induced Singleton, Hamming, Plotkin, and Elias bounds, while S, SP, PSP, and TD stand for Singleton, Sphere-Packing, Projective Sphere-Packing, and Total Distance bounds, respectively. We write `$0$' if the bound is not applicable for the choice of parameters.}
    \label{tab:MSRDDelsarte}
\end{table}

\end{document}